\documentclass[11pt]{article}
\usepackage{CJK}
\usepackage{latexsym,bm}
\usepackage{xypic}
\usepackage{amsmath}
\usepackage{wasysym}
\usepackage{indentfirst}
\usepackage{amssymb}
\usepackage{dsfont}
\usepackage{amsthm}
\begin{document}
\newcommand{\fr}[2]{\frac{\;#1\;}{\;#2\;}}
\newtheorem{theorem}{Theorem}[section]
\newtheorem{lemma}{Lemma}[section]
\newtheorem{proposition}{Proposition}[section]
\newtheorem{corollary}{Corollary}[section]
\newtheorem{conjecture}{Conjecture}[section]
\newtheorem{remark}{Remark}[section]
\newtheorem{definition}{Definition}[section]
\newtheorem{example}{Example}[section]
\newtheorem{notation}{Notation}[section]
\numberwithin{equation}{section}
\newcommand{\Aut}{\mathrm{Aut}\,}
\newcommand{\CSupp}{\mathrm{CSupp}\,}
\newcommand{\Supp}{\mathrm{Supp}\,}
\newcommand{\rank}{\mathrm{rank}\,}
\newcommand{\col}{\mathrm{col}\,}
\newcommand{\len}{\mathrm{len}\,}
\newcommand{\leftlen}{\mathrm{leftlen}\,}
\newcommand{\rightlen}{\mathrm{rightlen}\,}
\newcommand{\length}{\mathrm{length}\,}
\newcommand{\wt}{\mathrm{wt}\,}
\newcommand{\diff}{\mathrm{diff}\,}
\newcommand{\lcm}{\mathrm{lcm}\,}
\newcommand{\dom}{\mathrm{dom}\,}
\newcommand{\SUPP}{\mathrm{SUPP}\,}
\newcommand{\supp}{\mathrm{supp}\,}
\newcommand{\End}{\mathrm{End}\,}
\newcommand{\Hom}{\mathrm{Hom}\,}
\newcommand{\ran}{\mathrm{ran}\,}
\newcommand{\Mat}{\mathrm{Mat}\,}
\newcommand{\rk}{\mathrm{rk}\,}
\newcommand{\rs}{\mathrm{rs}\,}
\newcommand{\piv}{\mathrm{piv}\,}
\newcommand{\perm}{\mathrm{perm}\,}
\newcommand{\inv}{\mathrm{inv}\,}
\newcommand{\orb}{\mathrm{orb}\,}
\newcommand{\id}{\mathrm{id}\,}
\newcommand{\soc}{\mathrm{soc}\,}

\title{Reflexivity of Partitions Induced by Weighted Poset Metric and Combinatorial Metric$^\ast$}
\author{Yang Xu$^1$ \,\,\,\,\,\, Haibin Kan$^2$\,\,\,\,\,\,Guangyue Han$^3$}

\maketitle

\renewcommand{\thefootnote}{\fnsymbol{footnote}}

\footnotetext{\hspace*{-12mm} \begin{tabular}{@{}r@{}p{13.4cm}@{}}
$^\ast$ & A preliminary version of this paper has been presented in IEEE International Symposium on Information Theory (ISIT) 2022.\\
$^1$ & Shanghai Key Laboratory of Intelligent Information Processing, School of Computer Science, Fudan University,
Shanghai 200433, China.\\
&Department of Mathematics, Faculty of Science, The University of Hong Kong, Pokfulam Road, Hong Kong, China. {E-mail:12110180008@fudan.edu.cn} \\
$^2$ & Shanghai Key Laboratory of Intelligent Information Processing, School of Computer Science, Fudan University,
Shanghai 200433, China.\\
&Shanghai Engineering Research Center of Blockchain, Shanghai 200433, China.\\
&Yiwu Research Institute of Fudan University, Yiwu City, Zhejiang 322000, China. {E-mail:hbkan@fudan.edu.cn} \\
$^3$ & Department of Mathematics, Faculty of Science, The University of Hong Kong, Pokfulam Road, Hong Kong, China. {E-mail:ghan@hku.hk} \\
\end{tabular}}

\vskip 3mm

{\hspace*{-6mm}{\bf Abstract}---Let $\mathbf{H}$ be the Cartesian product of a family of finite abelian groups. Via a polynomial approach, we give sufficient conditions for a partition of $\mathbf{H}$ induced by weighted poset metric to be reflexive, which also become necessary for some special cases. Moreover, by examining the roots of the Krawtchouk polynomials, we establish non-reflexive partitions of $\mathbf{H}$ induced by combinatorial metric. When $\mathbf{H}$ is a vector space over a finite field $\mathbb{F}$, we consider the property of admitting MacWilliams identity (PAMI) and the MacWilliams extension property (MEP) for partitions of $\mathbf{H}$. With some invariance assumptions, we show that two partitions of $\mathbf{H}$ admit MacWilliams identity if and only if they are mutually dual and reflexive, and any partition of $\mathbf{H}$ satisfying the MEP is in fact an orbit partition induced by some subgroup of $\Aut_{\mathbb{F}}(\mathbf{H})$, which is necessarily reflexive. As an application of the aforementioned results, we establish partitions of $\mathbf{H}$ induced by combinatorial metric that do not satisfy the MEP, which further enable us to provide counter-examples to a conjecture proposed by Pinheiro, Machado and Firer in \cite{39}.
}\!

\section{Introduction}
For a set $E$, a \textit{partition} of $E$ is a collection of nonempty disjoint subsets of $E$ whose union is $E$. MacWilliams identities based on partitions of finite abelian groups have been established by Zinoviev and Ericson in \cite{45}, by Gluesing-Luerssen in \cite{18}, and by Gluesing-Luerssen and Ravagnani in \cite{21} (c.f. Delsarte \cite{10} and Ravagnani \cite{40} for MacWilliams identities based on association schemes and numerical weights, respectively). The above mentioned work provides a general and unified framework for recovering known or deriving new MacWilliams identities in more explicit forms.

The notion of \textit{reflexive partition} is introduced by Gluesing-Luerssen in \cite{18}. More specifically, reflexive partitions are ones which coincide with their bi-duals, and alternatively, they can be characterized in terms of association schemes (see \cite{10,11,46}). Reflexive partitions arise naturally from various weights and metrics in coding theory such as poset metric (see \cite{5,22,35}), rank metric (see \cite{17,21}) and homogeneous weight (see \cite{19}). We refer the reader to \cite{18,19,21,40,44,45,46} for more results and examples. The MacWilliams identity based on a reflexive partition is invertible, and the inverse is essentially the MacWilliams identity based on the dual partition. So, in this sense, reflexive partitions provide a symmetric situation and form the most appealing case (see [18, Section 2]).

In this paper, we study partitions induced by weighted poset metric and combinatorial metric, and examine when such partitions are reflexive or non-reflexive. Moreover, we study the relations among reflexivity, the property of admitting MacWilliams identity (PAMI) and the MacWilliams extension property (MEP), three widely explored properties in coding theory.

The notion of \textit{weighted poset metric} has been introduced by Hyun, Kim and Park in \cite{23}, where the authors have classified all the weighted posets and directed graphs that admit the extended Hamming code $\widetilde{\mathcal{H}}_{3}$ to be a $2$-perfect code, and relevant results for more general $\widetilde{\mathcal{H}}_{k}$, $k\geqslant3$ have also been established. A weighted poset metric is determined by a poset and a weight function, both defined on the coordinate set (see \cite{23} or Section 2.2 for more details). Weighted poset metric boils down to poset metric (see \cite{5,22,35}) if the weight function is identically $1$, and to weighted Hamming metric (see \cite{2}) if the poset is an anti-chain. As has been stated in [23, Section 1], weighted poset metric can be viewed as an algebraic version of the directed graph metric introduced by Etzion, Firer and Machado in \cite{13}. More recently in \cite{29}, Machado and Firer have proposed and studied the labelled-poset-block metric, which, in our terminology, is also a weighted poset metric (see Section 2.2). Weighted poset metric can be useful to model some specific kind of channels for which the error probability depends on a codeword position, i.e., the distribution of errors is nonuniform, and can also be useful to perform bitwise or messagewise unequal error protection (see the abstract of \cite{2} and [13, Section I, Paragraph 6]).

The notion of \textit{combinatorial metric} has been introduced by Gabidulin in \cite{15,16}. A combinatorial metric is determined by a covering of the coordinate set (see Section 2.3 for more details). If the covering consists of singletons, then combinatorial metric boils down to Hamming metric. Several subclasses of combinatorial metric have been studied in the literature, such as the block metric (see \cite{14}), the $b$-burst metric (see \cite{4}) and the translational metric (see \cite{33}). In \cite{3}, Bossert and Sidorenko have derived a Singleton-type bound for combinatorial metric. In \cite{39}, Pinheiro, Machado and Firer have studied the PAMI, the group of isometries and the MEP for combinatorial metric. They have also proposed a conjecture in [39, Section 5] on the MEP, which we will disprove in this paper. Our approach towards the conjecture is based on non-reflexive partitions induced by combinatorial metric along with the relation between reflexivity and the MEP.

The PAMI has been first introduced by Kim and Oh in \cite{24}, where the authors have proven that being hierarchical is a necessary and sufficient condition for a poset to admit MacWilliams identity. The original property has since been extended and generalized to poset-block metric by Pinheiro and Firer in \cite{38}, to combinatorial metric by Pinheiro, Machado and Firer in \cite{39}, to directed graph metric by Etzion, Machado and Firer in \cite{13}, and to labeled-poset-block metric by Machado and Firer in \cite{29}. In \cite{8}, Choi, Hyun, Kim and Oh have proposed and studied MacWilliams-type equivalence relations, which, roughly speaking, are defined as equivalence relations which admit MacWilliams identities, where such relations are defined on the ideal lattice of a given poset on the coordinate set.

In 1962, MacWilliams proved in \cite{30} that a Hamming weight preserving map between two linear codes can be extended to the whole ambient space (see \cite{6} for a different proof). Such a property, henceforth referred to as the MEP, has since been extended, generalized and discussed extensively in the literature: with respect to other weights and metrics; with respect to codes over ring and module alphabet; and with respect to partitions of finite modules; see, among many others, \cite{1}, \cite{12}, \cite{13}, \cite{17}, \cite{20}, \cite{21}, \cite{28}, \cite{29}, \cite{39}, \cite{42} and \cite{43}.

The remainder of the paper is organized as follows. In Section 2, we present some definitions, notations and basic facts on partitions of finite abelian groups, weighted poset metric and combinatorial metric. In particular, the conjecture proposed in \cite{39} is stated in Section 2.3 as Conjecture 2.1. In Section 3, for a partition induced by weighted poset metric, we give sufficient conditions for two codewords to belong to the same member of its dual partition, and give a sufficient condition for its reflexivity. By relating each codeword with a polynomial, we show that such sufficient conditions are also necessary if the poset is hierarchical and the weight function is integer-valued. In Section 4, we consider partitions induced by combinatorial metric. After giving a sufficient but not necessary condition for reflexivity in Section 4.1, we turn to a particular subclass of partitions which is related to Conjecture 2.1 in Section 4.2. Adopting the polynomial approach given in Section 3, we characterize the dual partitions of such partitions in terms of the classical Krawtchouk polynomials. Then, using the properties of Krawtchouk polynomials, especially those of their roots, we give a number of sufficient conditions for such partitions to be non-reflexive. In Section 5, we consider partitions of a finite vector space over a finite field $\mathbb{F}$, and study the relations among reflexivity, the PAMI and the MEP for $\mathbb{F}$-invariant partitions. More precisely, we show that for such partitions, the MEP is stronger than reflexivity, and reflexivity is equivalent to the PAMI. Finally, combining these results with non-reflexive partitions established in Section 4.2, we give several classes of counter-examples to Conjecture 2.1.

\section{Preliminaries}
Throughout the remainder of the paper, we let $\mathbb{Z}$, $\mathbb{Z}^{+}$, $\mathbb{R}$, $\mathbb{R}^{+}$ and $\mathbb{C}$ denote the set of all the integers, positive integers, real numbers, positive real numbers and complex numbers, respectively. Furthermore, we let $\mathbb{N}=\mathbb{Z}^{+}\cup\{0\}$, $\mathbb{C}^{\ast}=\mathbb{C}-\{0\}$. For any $a,b\in\mathbb{Z}$, we use $[a,b]$ to denote the set of all the integers between $a$ and $b$, i.e., $[a,b]=\{i\in\mathbb{Z}\mid a\leqslant i\leqslant b\}$.

In addition, consider a finite set $E$. A \textit{covering} of $E$ is a collection of subsets of $E$ whose union is $E$, and hence a partition of $E$ is exactly a covering of $E$ whose members are nonempty and disjoint. Consider a partition $\Gamma$ of $E$. For any $u,v\in E$, we write $u\sim_{\Gamma}v$ if $u$ and $v$ belong to the same member of $E$. For any $D\subseteq E$, the \textit{$\Gamma$-distribution} of $D$ is defined as the sequence $(|D\cap B|\mid B\in\Gamma)$, and for any $D,L\subseteq E$, we write $D\approx_{\Gamma}L$ if $D$ and $L$ have the same $\Gamma$-distribution. For two partitions $\Gamma,\Psi$ of $E$, we say that \textit{$\Gamma$ is finer than $\Psi$}, if for any $u,v\in E$, $u\sim_{\Gamma}v$ implies that $u\sim_{\Psi}v$. One can verify that $\Gamma$ is finer than $\Psi$ if and only if any member of $\Gamma$ is contained in some member of $\Psi$.

\subsection{Partitions of finite abelian groups}
Let $G$ and $H$ be finite abelian groups, and let $f:G\times H\longrightarrow \mathbb{C}^{*}$ be a \textit{pairing}, i.e., for any $a,c\in G$ and $b,d\in H$, it holds that $f(ac,b)=f(a,b)f(c,b)$, $f(a,bd)=f(a,b)f(a,d)$ (see [34, Definition 11.7]). For any \textit{additive codes} (\textit{i.e.}, subgroups) $C\leqslant G$ and $D\leqslant H$, define the codes $C^{\ddagger}\leqslant H$ and $^{\ddagger}D\leqslant G$ as
\begin{equation}C^{\ddagger}\triangleq\{b\in H\mid \text{$f(a,b)=1$ for all $a\in C$}\},\end{equation}
\begin{equation}{^{\ddagger}D}\triangleq\{a\in G\mid \text{$f(a,b)=1$ for all $b\in D$}\}.\end{equation}
We further assume that the pairing $f$ is \textit{non-degenerate}, i.e., $G^{\ddagger}=\{1_{H}\}$, ${^{\ddagger}H}=\{1_{G}\}$ (see [34, Definition 11.7]). Note that the non-degenerate condition implies that $G\cong H$ as groups, and conversely, $G\cong H$ implies the existence of such a non-degenerate pairing (see [34, Lemma 11.8]).

For a partition $\Gamma$ of $H$, \textit{the left dual partition of $\Gamma$ with respect to $f$} is the partition $\textbf{\textit{l}}(\Gamma)$ of $G$ such that for any $a,c\in G$, $a\sim_{\textbf{\textit{l}}(\Gamma)}c$ if and only if
\begin{equation}\text{$\sum_{b\in B}f(a,b)=\sum_{b\in B}f(c,b)$ for all $B\in\Gamma$}.\end{equation}
For a partition $\Lambda$ of $G$, \textit{the right dual partition of $\Lambda$ with respect to $f$} is the partition $\textbf{\textit{r}}(\Lambda)$ of $H$ such that for any $b,d\in H$, $b\sim_{\textbf{\textit{r}}(\Lambda)}d$ if and only if
\begin{equation}\text{$\sum_{a\in A}f(a,b)=\sum_{a\in A}f(a,d)$ for all $A\in\Lambda$}.\end{equation}
Equations (2.3) and (2.4) are closely related to the notion of \textit{dual partition} and \textit{bi-dual partition} proposed in \cite{18}. Indeed, for any $a\in G$, define $\tau_{a}:H\longrightarrow\mathbb{C}^{*}$ as $\tau_{a}(b)=f(a,b)$. By [34, Lemma 11.8], we have $\tau\in\Aut(G,\Hom(H,\mathbb{C}^{*}))$. Consequently, for a partition $\Gamma$ of $H$, we have $\{\tau[A]\mid A\in\textbf{\textit{l}}(\Gamma)\}=\widehat{\Gamma}$, $\textbf{\textit{r}}(\textbf{\textit{l}}(\Gamma))=\widehat{\widehat{\Gamma}}$, which are exactly the dual partition and the bi-dual partition of $\Gamma$ proposed in [18, Definition 2.1] (c.f. [1, Proposition 4.4], [7, Section 2.1], [19, Proposition 2.4]).

The following definition follows [18, Definition 2.1] and [46, Definition 2].

\setlength{\parindent}{0em}
\begin{definition}
{Let $\Gamma$ be a partition of $H$, and let $\Lambda$ be a partition of $G$. If $\Lambda$ is finer than $\textbf{\textit{l}}(\Gamma)$, then the left generalized Krawtchouk matrix of $(\Lambda,\Gamma)$ with respect to $f$ is defined as $\rho:\Lambda\times\Gamma\longrightarrow \mathbb{C}$, where for any $(A,B)\in\Lambda\times\Gamma$, $\rho(A,B)=\sum_{b\in B}f(a,b)$ for any chosen $a\in A$. If $\Gamma$ is finer than $\textbf{\textit{r}}(\Lambda)$, then the right generalized Krawtchouk matrix of $(\Lambda,\Gamma)$ with respect to $f$ is defined as $\varepsilon:\Lambda\times\Gamma\longrightarrow \mathbb{C}$, where for any $(A,B)\in\Lambda\times\Gamma$, $\varepsilon(A,B)=\sum_{c\in A}f(c,d)$ for any chosen $d\in B$. $(\Lambda,\Gamma)$ is said to be mutually dual with respect to $f$ if both $\Lambda$ is finer than $\textbf{\textit{l}}(\Gamma)$ and $\Gamma$ is finer than $\textbf{\textit{r}}(\Lambda)$. Finally, $\Gamma$ is said to be reflexive if $\Gamma=\widehat{\widehat{\Gamma}}$.
}
\end{definition}

\setlength{\parindent}{2em}
The following lemma is a consequence of [18, Theorem 2.4].

\setlength{\parindent}{0em}
\begin{lemma}
{Let $\Gamma$ be a partition of $H$, and let $\Lambda$ be a partition of $G$. Then, we have $\{1_{G}\}\in\textbf{\textit{l}}(\Gamma)$, $|\Gamma|\leqslant|\textbf{\textit{l}}(\Gamma)|$, $\textbf{\textit{r}}(\textbf{\textit{l}}(\Gamma))$ is finer than $\Gamma$. Moreover, it holds true that $\Gamma$ is reflexive $\Longleftrightarrow \Gamma=\textbf{\textit{r}}(\textbf{\textit{l}}(\Gamma))\Longleftrightarrow  |\Gamma|=|\textbf{\textit{l}}(\Gamma)|$. In addition, the following four statements are equivalent to each other:

{\bf{(1)}}\,\,$(\Lambda,\Gamma)$ is mutually dual with respect to $f$;

{\bf{(2)}}\,\,$\Gamma$ is Fourier-reflexive and $\Lambda=\textbf{\textit{l}}(\Gamma)$;

{\bf{(3)}}\,\,$\Lambda$ is Fourier-reflexive and $\Gamma=\textbf{\textit{r}}(\Lambda)$;

{\bf{(4)}}\,\,$|\Lambda|\leqslant|\Gamma|$ and $\Lambda$ is finer than $\textbf{\textit{l}}(\Gamma)$.
}
\end{lemma}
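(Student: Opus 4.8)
The plan is to transport the entire statement into the character-theoretic framework of Gluesing-Luerssen via the isomorphism $\tau\in\Aut(G,\Hom(H,\mathbb{C}^{*}))$ recorded above, and then read everything off from [18, Theorem 2.4] (here ``reflexive'' in the sense of Definition 2.1 is synonymous with the ``Fourier-reflexive'' of the statement). Recall that under $\tau$ we have $\{\tau[A]\mid A\in\textbf{\textit{l}}(\Gamma)\}=\widehat{\Gamma}$ and $\textbf{\textit{r}}(\textbf{\textit{l}}(\Gamma))=\widehat{\widehat{\Gamma}}$; since $\tau$ is a bijection it preserves the cardinalities of partitions and the ``finer than'' relation, so the three assertions $\{1_{G}\}\in\textbf{\textit{l}}(\Gamma)$ (equivalently, the trivial character is a singleton block of $\widehat{\Gamma}$), $|\Gamma|\leqslant|\textbf{\textit{l}}(\Gamma)|$, and ``$\textbf{\textit{r}}(\textbf{\textit{l}}(\Gamma))$ is finer than $\Gamma$'' are exactly the corresponding assertions $|\Gamma|\leqslant|\widehat{\Gamma}|$ and ``$\widehat{\widehat{\Gamma}}$ is finer than $\Gamma$'' of [18, Theorem 2.4]. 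The singleton assertion can also be checked directly: if $\sum_{b\in B}f(a,b)=\sum_{b\in B}f(1_{G},b)=|B|$ for every $B\in\Gamma$, then summing over all blocks gives $\sum_{b\in H}f(a,b)=|H|\neq 0$, which by orthogonality of characters forces $\tau_{a}=\tau_{1_{G}}$, hence $a=1_{G}$. Likewise, the chain $\Gamma\text{ reflexive}\Longleftrightarrow\Gamma=\textbf{\textit{r}}(\textbf{\textit{l}}(\Gamma))\Longleftrightarrow|\Gamma|=|\textbf{\textit{l}}(\Gamma)|$ is the translation of $\Gamma=\widehat{\widehat{\Gamma}}\Longleftrightarrow|\Gamma|=|\widehat{\Gamma}|$, the first $\Longleftrightarrow$ being the definition together with $\textbf{\textit{r}}(\textbf{\textit{l}}(\Gamma))=\widehat{\widehat{\Gamma}}$.

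For the four-fold equivalence I would first isolate two elementary tools. The first is a symmetry principle: the transposed pairing $f^{T}:H\times G\longrightarrow\mathbb{C}^{*}$, $f^{T}(b,a)=f(a,b)$, is again non-degenerate, and under it the left and right dual operators interchange, $\textbf{\textit{l}}_{f^{T}}(\Lambda)=\textbf{\textit{r}}_{f}(\Lambda)$ and $\textbf{\textit{r}}_{f^{T}}(\Gamma)=\textbf{\textit{l}}_{f}(\Gamma)$; consequently every statement already proved for left duals acquires a mirror image for right duals, in particular $|\Lambda|\leqslant|\textbf{\textit{r}}(\Lambda)|$ and the characterization that $\Lambda$ is reflexive (i.e. $\Lambda=\textbf{\textit{l}}(\textbf{\textit{r}}(\Lambda))$) $\Longleftrightarrow|\Lambda|=|\textbf{\textit{r}}(\Lambda)|$. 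The second is a refinement lemma: if $\Lambda$ is finer than $\Psi$ then $|\Psi|\leqslant|\Lambda|$, with equality if and only if $\Lambda=\Psi$, since each block of $\Psi$ is a union of blocks of $\Lambda$.

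With these in hand the four statements are linked by a short cycle. For $(2)\Rightarrow(1)$: $\Lambda=\textbf{\textit{l}}(\Gamma)$ is trivially finer than $\textbf{\textit{l}}(\Gamma)$, while reflexivity of $\Gamma$ gives $\Gamma=\textbf{\textit{r}}(\textbf{\textit{l}}(\Gamma))=\textbf{\textit{r}}(\Lambda)$, so $\Gamma$ is finer than $\textbf{\textit{r}}(\Lambda)$. For $(1)\Rightarrow(4)$: mutual duality gives ``$\Lambda$ finer than $\textbf{\textit{l}}(\Gamma)$'' directly, and ``$\Gamma$ finer than $\textbf{\textit{r}}(\Lambda)$'' yields $|\Lambda|\leqslant|\textbf{\textit{r}}(\Lambda)|\leqslant|\Gamma|$ via the mirror inequality and the refinement lemma. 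For $(4)\Rightarrow(2)$: combining $|\Lambda|\leqslant|\Gamma|$, the inequality $|\Gamma|\leqslant|\textbf{\textit{l}}(\Gamma)|$, and $|\textbf{\textit{l}}(\Gamma)|\leqslant|\Lambda|$ (from $\Lambda$ finer than $\textbf{\textit{l}}(\Gamma)$) forces $|\Gamma|=|\textbf{\textit{l}}(\Gamma)|=|\Lambda|$; the first equality makes $\Gamma$ reflexive, and the second, with $\Lambda$ finer than $\textbf{\textit{l}}(\Gamma)$, forces $\Lambda=\textbf{\textit{l}}(\Gamma)$ by the refinement lemma. Finally $(1)\Leftrightarrow(3)$ is obtained by applying the already-established equivalence $(1)\Leftrightarrow(2)$ to $f^{T}$, under which the pair $(\Lambda,\Gamma)$ and the roles of $\textbf{\textit{l}},\textbf{\textit{r}}$ are swapped, turning statement $(2)$ for $f^{T}$ into statement $(3)$ for $f$. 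The only genuinely delicate point is the bookkeeping: one must keep track of which group each partition lives on and invoke the identification $\widehat{\widehat{H}}\cong H$ consistently when quoting [18, Theorem 2.4]. Once the dictionary $\textbf{\textit{l}}(\Gamma)\leftrightarrow\widehat{\Gamma}$, $\textbf{\textit{r}}(\textbf{\textit{l}}(\Gamma))=\widehat{\widehat{\Gamma}}$ and the transpose symmetry are fixed, all remaining steps reduce to cardinality counts.
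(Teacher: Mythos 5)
Your proposal is correct, and it takes the same route the paper itself does: the paper's entire ``proof'' of Lemma 2.1 is the one-line observation that it is a consequence of [18, Theorem 2.4], transported through the dictionary $\tau$, $\{\tau[A]\mid A\in\textbf{\textit{l}}(\Gamma)\}=\widehat{\Gamma}$, $\textbf{\textit{r}}(\textbf{\textit{l}}(\Gamma))=\widehat{\widehat{\Gamma}}$, which is exactly your starting point. The additional material you supply --- the direct orthogonality check for $\{1_{G}\}\in\textbf{\textit{l}}(\Gamma)$, the transpose-pairing symmetry $\textbf{\textit{l}}_{f^{T}}=\textbf{\textit{r}}_{f}$, the refinement-counting lemma, and the cycle $(2)\Rightarrow(1)\Rightarrow(4)\Rightarrow(2)$ with $(1)\Leftrightarrow(3)$ by transposition --- is sound bookkeeping that the paper leaves implicit in its citation.
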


\setlength{\parindent}{2em}
For partitions $\Gamma$ of $H$ and $\Lambda$ of $G$ such that $\Lambda$ is finer than $\textbf{\textit{l}}(\Gamma)$, let $\rho:\Lambda\times\Gamma\longrightarrow\mathbb{C}$ be the left generalized Krawtchouk matrix of $(\Lambda,\Gamma)$ with respect to $f$. It has been proven in [18, Theorem 2.7] that for an additive code $C\leqslant G$, the $\Lambda$-distribution of $C$ determines the $\Gamma$-distribution of $C^{\ddagger}$ via the following MacWilliams identity
\begin{equation}\forall~B\in\Gamma:|C||{C^{\ddagger}}\cap B|=\sum_{A\in \Lambda}|C\cap A|\cdot\rho(A,B).\end{equation}
Consequently, we have
\begin{equation}\text{$C_1\thickapprox_{\Lambda}C_2\Longrightarrow {C_1}^{\ddagger}\thickapprox_{\Gamma}{C_2}^{\ddagger}$ for any $C_1,C_2\leqslant G$}.\end{equation}

The following theorem can be viewed as a partial converse of the fact that ``$\Lambda$ is finer than $\textbf{\textit{l}}(\Gamma)$ implies Equation (2.6)''.

\setlength{\parindent}{0em}
\begin{theorem}
{Let $\textbf{S}$ be a collection of non-identity subgroups of $G$ with the same cardinality, and let $\Delta$ be a partition of $G$ such that $\{1_{G}\}\in\Delta$, and for any $A\in\Delta$ with $A\neq\{1_{G}\}$, there exists $C\in\textbf{S}$ such that $C-\{1_{G}\}\subseteq A$. Let $\Gamma$ be a partition of $H$ such that $\Delta$ is finer than $\textbf{\textit{l}}(\Gamma)$, and let $\Lambda$ be a partition of $G$ such that $\{1_{G}\}\in\Lambda$, $\Delta$ is finer than $\Lambda$. Further assume that for any $C,M\in\textbf{S}$, we have $C\approx_{\Lambda}M\Longrightarrow C^{\ddagger}\approx_{\Gamma}M^{\ddagger}$. Then, $\Lambda$ is finer than $\textbf{\textit{l}}(\Gamma)$.
}
\end{theorem}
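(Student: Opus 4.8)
The plan is to unwind the definition of $\textbf{\textit{l}}(\Gamma)$ and show directly that $a\sim_{\Lambda}c$ forces $\sum_{b\in B}f(a,b)=\sum_{b\in B}f(c,b)$ for every $B\in\Gamma$. For $a\in G$ and $B\in\Gamma$ abbreviate $\sigma(a,B)=\sum_{b\in B}f(a,b)$, so that the goal is $\sigma(a,B)=\sigma(c,B)$ for all $B\in\Gamma$. Since $\Delta$ is finer than $\textbf{\textit{l}}(\Gamma)$, the quantity $\sigma(a,B)$ depends only on the $\Delta$-class of $a$, and I may apply the MacWilliams identity (2.5) to the pair $(\Delta,\Gamma)$, whose left generalized Krawtchouk matrix $\rho$ satisfies $\rho(A,B)=\sigma(a,B)$ for any $a\in A$. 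The whole point will be to run this identity \emph{backwards}: the subgroups in $\textbf{S}$ are tailored so as to isolate a single entry $\rho(A,B)$.

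Concretely, fix a member $C\in\textbf{S}$ with $C-\{1_{G}\}\subseteq A$ for some $A\in\Delta$, $A\neq\{1_{G}\}$. Because $\{1_{G}\}\in\Delta$ and the members of $\Delta$ are disjoint, the $\Delta$-distribution of $C$ is supported on exactly the two classes $\{1_{G}\}$ and $A$, with $|C\cap\{1_{G}\}|=1$ and $|C\cap A|=|C|-1$. Feeding this into (2.5) and using $\rho(\{1_{G}\},B)=|B|$ gives, for $a\in A$,
\begin{equation*}
|C|\cdot|C^{\ddagger}\cap B|=|B|+(|C|-1)\,\sigma(a,B),
\end{equation*}
hence $\sigma(a,B)=\bigl(|C|\cdot|C^{\ddagger}\cap B|-|B|\bigr)/(|C|-1)$. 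Writing $n$ for the common cardinality of the members of $\textbf{S}$ (note $n\geqslant2$ since they are non-identity subgroups, so the denominator does not vanish), this expresses $\sigma(a,B)$ solely in terms of $|C^{\ddagger}\cap B|$, $n$ and $|B|$.

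Now take $a\sim_{\Lambda}c$; I may assume $a,c\neq1_{G}$, since $\{1_{G}\}\in\Lambda$ immediately handles the remaining case. As $\Delta$ is finer than $\Lambda$, the $\Delta$-classes $A_{a}\ni a$ and $A_{c}\ni c$ lie in one common $\Lambda$-class $L\neq\{1_{G}\}$; choose $C_{0},C_{1}\in\textbf{S}$ with $C_{0}-\{1_{G}\}\subseteq A_{a}$ and $C_{1}-\{1_{G}\}\subseteq A_{c}$. Exactly as above, each of $C_{0},C_{1}$ meets only the two $\Lambda$-classes $\{1_{G}\}$ and $L$, in $1$ and $n-1$ elements respectively, so $C_{0}$ and $C_{1}$ have identical $\Lambda$-distributions, i.e. $C_{0}\approx_{\Lambda}C_{1}$. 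The standing hypothesis then yields $C_{0}^{\ddagger}\approx_{\Gamma}C_{1}^{\ddagger}$, that is $|C_{0}^{\ddagger}\cap B|=|C_{1}^{\ddagger}\cap B|$ for all $B\in\Gamma$. Substituting into the inversion formula of the previous paragraph gives $\sigma(a,B)=\sigma(c,B)$ for every $B$, which is precisely $a\sim_{\textbf{\textit{l}}(\Gamma)}c$; hence $\Lambda$ is finer than $\textbf{\textit{l}}(\Gamma)$.

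The one genuinely substantive step is the inversion in the second paragraph: it is what lets distributional data about dual codes be converted back into the individual Krawtchouk entries $\sigma(a,B)$, and it works only because each $C\in\textbf{S}$ is, apart from $1_{G}$, confined to a single class of $\Delta$ and all such $C$ share one cardinality. Everything else is routine bookkeeping of $\Delta$- and $\Lambda$-distributions; the main point to be careful about is that the two relevant $\Delta$-classes $A_{a},A_{c}$ really do sit inside the \emph{same} $\Lambda$-class, which is exactly what $a\sim_{\Lambda}c$ together with ``$\Delta$ finer than $\Lambda$'' provides.
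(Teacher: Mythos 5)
Your proposal is correct and follows essentially the same route as the paper's proof: both exploit that each subgroup in $\textbf{S}$ meets only the classes $\{1_{G}\}$ and one other class of $\Delta$ (hence of $\Lambda$), apply the MacWilliams identity (2.5) relative to $\Delta$ to get the two-term relation $|C||C^{\ddagger}\cap B|=|B|+(|C|-1)\sum_{b\in B}f(a,b)$, and then use the hypothesis $C_{0}\approx_{\Lambda}C_{1}\Longrightarrow C_{0}^{\ddagger}\approx_{\Gamma}C_{1}^{\ddagger}$ together with the common cardinality $n\geqslant2$ to cancel and conclude $a\sim_{\textbf{\textit{l}}(\Gamma)}c$. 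Your ``inversion formula'' packaging is just a restatement of the paper's final cancellation step, so there is no substantive difference.
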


\begin{proof}
Letting $W\in\Lambda$ and $a,c\in W$, we will show that $a\sim_{\textbf{\textit{l}}(\Gamma)}c$, which immediately yields the desired result. Let $U,V\in\Delta$ such that $a\in U$, $c\in V$. Since $\Delta$ is finer than $\Lambda$, we have $U\subseteq W$, $V\subseteq W$. If $W=\{1_{G}\}$, then $a=c=1_{G}$, as desired. Therefore in the following, we assume that $W\neq\{1_{G}\}$. By $\{1_{G}\}\in\Lambda$, we have $1_{G}\not\in W$, which further implies that $U\neq\{1_{G}\}$, $V\neq\{1_{G}\}$. Hence we can choose $C,M\in\textbf{S}$ such that $C-\{1_{G}\}\subseteq U$, $M-\{1_{G}\}\subseteq V$. We note that $|C|=|M|\geqslant2$. It is straightforward to verify that $C\cap W=C-\{1_{G}\}$, $M\cap W=M-\{1_{G}\}$, $|C\cap \{1_{G}\}|=|M\cap \{1_{G}\}|=1$, and for any $A\in\Lambda$ such that $A\neq W$, $A\neq\{1_{G}\}$, it holds that $C\cap A=M\cap A=\emptyset$. Therefore we have $C\approx_{\Lambda}M$, and hence $C^{\ddagger}\approx_{\Gamma}M^{\ddagger}$. Consider an arbitrary $B\in\Gamma$. Since $\Delta$ is finer than $\textbf{\textit{l}}(\Gamma)$, $\{1_{G}\}\in\Delta$, $U\in\Delta$, $C\cap U=C-\{1_{G}\}$, $a\in U$, $V\in\Delta$, $M\cap V=M-\{1_{G}\}$, $c\in V$, applying (2.5) to $C\leqslant G$ and $M\leqslant G$, respectively, we deduce that
$$|C||{C^{\ddagger}}\cap B|=|B|+(|C|-1)\left(\sum_{b\in B}f(a,b)\right),$$
$$|M||{M^{\ddagger}}\cap B|=|B|+(|M|-1)\left(\sum_{b\in B}f(c,b)\right).$$
Since $C^{\ddagger}\approx_{\Gamma}M^{\ddagger}$, $B\in\Gamma$, we have $|{C^{\ddagger}}\cap B|=|{M^{\ddagger}}\cap B|$, which, along with $|C|=|M|\geqslant2$, immediately implies that $\sum_{b\in B}f(a,b)=\sum_{b\in B}f(c,b)$. It then follows from the arbitrariness of $B$ that $a\sim_{\textbf{\textit{l}}(\Gamma)}c$, as desired.
\end{proof}

\begin{remark}
Theorem 2.1 is largely inspired by [8, Corollary 3.2 and Theorem 3.3], and will be used in Section 5 to establish the equivalence between reflexivity and the PAMI.
\end{remark}

\subsection{Weighted poset metric}
\setlength{\parindent}{2em}
Throughout this subsection, we let $\Omega$ be a nonempty finite set, and let $\mathbf{P}=(\Omega,\preccurlyeq_{\mathbf{P}})$ be a poset. A subset $B\subseteq \Omega$ is said to be an \textit{ideal} of $\mathbf{P}$ if for any $v\in B$ and $u\in \Omega$, $u\preccurlyeq_{\mathbf{P}}v$ implies that $u\in B$. The set of all the ideals of $\mathbf{P}$ is denoted by $\mathcal{I}(\mathbf{P})$. For $B\subseteq \Omega$, we let $\max_{\mathbf{P}}(B)$ and $\min_{\mathbf{P}}(B)$ denote the set of all the maximal (\textit{resp.}, minimal) element of $B$, and let $\langle B\rangle_{\mathbf{P}}$ denote the ideal $\{u\in \Omega\mid \exists~v\in B~s.t.~u\preccurlyeq_{\mathbf{P}}v\}$. In addition, $B$ is said to be a \textit{chain} in $\mathbf{P}$ if for any $u,v\in B$, either $u\preccurlyeq_{\mathbf{P}}v$ or $v\preccurlyeq_{\mathbf{P}}u$ holds, and $B$ is said to be an \textit{anti-chain} in $\mathbf{P}$ if for any $u,v\in B$, $u\preccurlyeq_{\mathbf{P}}v$ implies that $u=v$. For any $u\in \Omega$, we let $\len_{\mathbf{P}}(u)$ denote the largest cardinality of a chain in $\mathbf{P}$ containing $u$ as its greatest element. The set of all the order automorphisms of $\mathbf{P}$ will be denoted by $\Aut(\mathbf{P})$. The \textit{dual poset} of $\mathbf{P}$ is defined as $\mathbf{\overline{P}}=(\Omega,\preccurlyeq_{\mathbf{\overline{P}}})$, where
$$\text{$u\preccurlyeq_{\mathbf{\overline{P}}} v\Longleftrightarrow v\preccurlyeq_{\mathbf{P}}u$ for all $(u,v)\in \Omega\times \Omega$}.$$

The following definition will be used frequently in our discussion.

\setlength{\parindent}{0em}
\begin{definition}
{{\bf{(1)}}\,\,$\mathbf{P}$ is said to be hierarchical if for any $u,v\in \Omega$ such that $\len_{\mathbf{P}}(u)+1\leqslant\len_{\mathbf{P}}(v)$, it holds that $u\preccurlyeq_{\mathbf{P}}v$.

{\bf{(2)}}\,\,For $\omega:\Omega\longrightarrow\mathbb{R}^{+}$, we say that $(\mathbf{P},\omega)$ satisfies the unique decomposition property (UDP) if for any $I,J\in\mathcal{I}(\mathbf{P})$ such that $\sum_{i\in I}\omega(i)=\sum_{j\in J}\omega(j)$, there exists $\lambda\in\Aut(\mathbf{P})$ such that $J=\lambda[I]$ and $\omega(i)=\omega(\lambda(i))$ for all $i\in \Omega$.
}
\end{definition}

\setlength{\parindent}{2em}
The following lemma is an immediate consequence of Definition 2.2 and the fact that $\mathcal{I}(\mathbf{\overline{P}})=\{\Omega-I\mid I\in\mathcal{I}(\mathbf{P})\}$ (see [22, Lemma 1.2]).

\setlength{\parindent}{0em}
\begin{lemma}
{Let $\omega:\Omega\longrightarrow\mathbb{R}^{+}$. Then, $(\mathbf{P},\omega)$ satisfies the UDP if and only if $(\mathbf{\overline{P}},\omega)$ satisfies the UDP.
}
\end{lemma}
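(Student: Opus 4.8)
The plan is to prove only the forward implication and then invoke symmetry, since $\mathbf{\overline{\overline{P}}}=\mathbf{P}$ makes the reverse implication a verbatim restatement of the forward one with the roles of $\mathbf{P}$ and $\mathbf{\overline{P}}$ interchanged. So I would suppose $(\mathbf{P},\omega)$ satisfies the UDP and derive the UDP for $(\mathbf{\overline{P}},\omega)$.

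First I would record the two structural facts that link $\mathbf{P}$ and $\mathbf{\overline{P}}$. One is the stated identity $\mathcal{I}(\mathbf{\overline{P}})=\{\Omega-I\mid I\in\mathcal{I}(\mathbf{P})\}$, which furnishes a complementation bijection between the two ideal lattices. The other is the equality $\Aut(\mathbf{P})=\Aut(\mathbf{\overline{P}})$: a permutation $\lambda$ of $\Omega$ preserves $\preccurlyeq_{\mathbf{P}}$ if and only if it preserves its reverse $\preccurlyeq_{\mathbf{\overline{P}}}$, which is immediate from $u\preccurlyeq_{\mathbf{\overline{P}}}v\Leftrightarrow v\preccurlyeq_{\mathbf{P}}u$. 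These are exactly what let me transport an automorphism across duality without constructing a new one.

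Next I would take $I',J'\in\mathcal{I}(\mathbf{\overline{P}})$ with $\sum_{i\in I'}\omega(i)=\sum_{j\in J'}\omega(j)$ and write $I'=\Omega-I$, $J'=\Omega-J$ with $I,J\in\mathcal{I}(\mathbf{P})$. Setting $W=\sum_{u\in\Omega}\omega(u)$, the weight of a complement equals $W$ minus the weight of the set, so the hypothesis is equivalent to $\sum_{i\in I}\omega(i)=\sum_{j\in J}\omega(j)$. Applying the UDP of $(\mathbf{P},\omega)$ gives $\lambda\in\Aut(\mathbf{P})$ with $J=\lambda[I]$ and $\omega=\omega\circ\lambda$ pointwise. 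By the second fact $\lambda\in\Aut(\mathbf{\overline{P}})$ as well; the weight condition is the identical statement for both posets; and since $\lambda$ is a bijection of $\Omega$ it commutes with complementation, so $\lambda[I']=\lambda[\Omega-I]=\Omega-\lambda[I]=\Omega-J=J'$. This is precisely the UDP for $(\mathbf{\overline{P}},\omega)$.

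I do not anticipate a genuine obstacle here; the only points requiring care are confirming that the automorphism group is \emph{literally} unchanged under dualization (so that the same $\lambda$ serves both posets and no new automorphism need be built) and that complementation intertwines the two ideal lattices bijectively while turning the constant-weight constraint into an equivalent one via the total weight $W$. Everything else is bookkeeping.
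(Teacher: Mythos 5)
Your proof is correct and matches the paper's approach: the paper dispenses with the lemma as an ``immediate consequence of Definition 2.2 and the fact that $\mathcal{I}(\mathbf{\overline{P}})=\{\Omega-I\mid I\in\mathcal{I}(\mathbf{P})\}$,'' and your write-up is precisely the careful expansion of that one-line argument (complementation of ideals, invariance of the total weight, and the observation that $\Aut(\mathbf{P})=\Aut(\mathbf{\overline{P}})$ so the same $\lambda$ works). No gaps; the symmetry reduction via $\mathbf{\overline{\overline{P}}}=\mathbf{P}$ is also sound.
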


\setlength{\parindent}{2em}
Now we let $\left(H_{i}\mid i\in \Omega\right)$ be a family of finite abelian groups, and let $\mathbf{H}\triangleq\prod_{i\in \Omega}H_{i}$. For any \textit{codeword} $\beta\in \mathbf{H}$, we let $\supp(\beta)$ denote the set
\begin{equation}\supp(\beta)=\{i\in \Omega\mid \beta_{i}\neq1_{H_{i}}\}.\end{equation}
Consider $\omega:\Omega\longrightarrow\mathbb{R}^{+}$. For any $\beta\in\mathbf{H}$, the $(\mathbf{P},\omega)$-weight of $\beta$ is defined as \begin{equation}\wt_{(\mathbf{P},\omega)}(\beta)\triangleq\sum_{i\in\langle\supp(\beta)\rangle_{\mathbf{P}}}\omega(i).\end{equation}
It has been proven in [23, Lemma I.2] that $d_{(\mathbf{P},\omega)}:\mathbf{H}\times\mathbf{H}\longrightarrow \mathbb{R}$ defined as $d_{(\mathbf{P},\omega)}(\alpha,\beta)=\wt_{(\mathbf{P},\omega)}(\alpha^{-1}\beta)$ induces a metric on $\mathbf{H}$, which is henceforth referred to as a weighted poset metric. We note that if $\mathbf{H}$ is a vector space over a finite field and the weight function is integer-valued, then the weighted poset metric coincides with the labeled-poset-block metric proposed in [29, Section III].

Finally, we introduce partitions induced by weighted poset metric.

\setlength{\parindent}{0em}
\begin{notation}
For $\omega:\Omega\longrightarrow\mathbb{R}^{+}$, we let $\mathcal{Q}(\mathbf{H},\mathbf{P},\omega)$ denote the partition of $\mathbf{H}$ such that for any $\beta,\theta\in\mathbf{H}$, $\beta\sim_{\mathcal{Q}(\mathbf{H},\mathbf{P},\omega)}\theta\Longleftrightarrow\wt_{(\mathbf{P},\omega)}(\beta)=\wt_{(\mathbf{P},\omega)}(\theta)$.
\end{notation}

\subsection{Combinatorial metric}
\setlength{\parindent}{2em}
Throughout this subsection, we let $\Omega$ be a nonempty finite set. For a covering $T$ of $\Omega$, define $\omega_{T}:2^{\Omega}\longrightarrow\mathbb{N}$ as
\begin{equation}\mbox{$\forall~A\subseteq\Omega:\omega_{T}(A)=\min\{|S|\mid S\subseteq T,A\subseteq\bigcup_{I\in S}I\}$}.\end{equation}
For any $r\in\mathbb{N}$, we let $\mathcal{P}(r,\Omega)$ denote the set of all the subsets of $\Omega$ with cardinality $r$, i.e.,
\begin{equation}\mathcal{P}(r,\Omega)=\{A\subseteq\Omega\mid |A|=r\}.\end{equation}

We collect some basic facts in the following lemma.

\setlength{\parindent}{0em}
\begin{lemma}
{{\bf{(1)}}\,\,Let $T$ be a covering of $\Omega$, and let $R$ denote the set of all the maximal elements of $(T,\subseteq)$. Then, $R$ is a covering of $\Omega$, $(R,\subseteq)$ is an anti-chain, and it holds that $\omega_{T}=\omega_{R}$.

{\bf{(2)}}\,\,Let $T$ and $R$ be coverings of $\Omega$ such that both $(T,\subseteq)$ and $(R,\subseteq)$ are anti-chains. Further assume that for any $A\subseteq\Omega$, $\omega_{T}(A)=1\Longleftrightarrow\omega_{R}(A)=1$. Then, it holds that $T=R$.

{\bf{(3)}}\,\,Let $k\in[1,|\Omega|]$. Then, $\mathcal{P}(k,\Omega)$ is a covering of $\Omega$, $(\mathcal{P}(k,\Omega),\subseteq)$ is an anti-chain, and for any $A\subseteq\Omega$, it holds that $\omega_{\mathcal{P}(k,\Omega)}(A)=\left\lceil\frac{|A|}{k}\right\rceil$.
}
\end{lemma}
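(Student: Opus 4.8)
The plan is to prove the three parts of Lemma 2.4 largely independently, since each concerns a different structural feature of coverings and their associated weight functions $\omega_T$.

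For part \textbf{(1)}, I would first observe that $R$ is a covering of $\Omega$: since $T$ is finite, every $I \in T$ is contained in some maximal element of $(T,\subseteq)$, so $\bigcup_{J \in R} J = \bigcup_{I \in T} I = \Omega$. That $(R,\subseteq)$ is an anti-chain is immediate from the definition of maximality. The key claim is $\omega_T = \omega_R$. The inequality $\omega_R(A) \geqslant \omega_T(A)$ holds because $R \subseteq T$, so any covering subfamily drawn from $R$ is also one drawn from $T$, hence the minimum over the larger family $T$ is no bigger. For the reverse, given a minimal $S \subseteq T$ with $A \subseteq \bigcup_{I \in S} I$, I would replace each $I \in S$ by a maximal element $J \in R$ containing it; this yields a subfamily of $R$ of cardinality at most $|S|$ still covering $A$, giving $\omega_R(A) \leqslant \omega_T(A)$.

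For part \textbf{(2)}, the hypothesis is that $T$ and $R$ are both anti-chains with the same ``$\omega = 1$'' level sets, and the goal is $T = R$. I would exploit the characterization that, for an anti-chain covering, $\omega_T(A) = 1$ if and only if $A \subseteq I$ for some $I \in T$. The crucial link is that the maximal such $A$ (under inclusion) are exactly the members of $T$: indeed $\omega_T(I) = 1$ for each $I \in T$, and if $A$ is maximal with $\omega_T(A)=1$ then $A \subseteq I$ for some $I$ forces $A = I$ by maximality of $A$, while $I$ itself is maximal among the weight-one sets precisely because $(T,\subseteq)$ is an anti-chain (no member is contained in another). Thus $T$ is recovered as the set of inclusion-maximal elements of $\{A \mid \omega_T(A)=1\}$, and likewise for $R$; since these level sets coincide by hypothesis, $T = R$.

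For part \textbf{(3)}, with $k \in [1,|\Omega|]$, it is clear that $\mathcal{P}(k,\Omega)$ covers $\Omega$ (as $k \geqslant 1$, every point lies in some $k$-subset) and that it is an anti-chain (distinct sets of equal cardinality are incomparable). The content is the formula $\omega_{\mathcal{P}(k,\Omega)}(A) = \lceil |A|/k \rceil$. For the upper bound, I would partition $A$ greedily into $\lceil |A|/k \rceil$ blocks each of size at most $k$, extend each block to a $k$-subset of $\Omega$ (possible since $k \leqslant |\Omega|$), and observe these cover $A$. For the matching lower bound, each member of $\mathcal{P}(k,\Omega)$ meets $A$ in at most $k$ points, so any covering subfamily of size $s$ covers at most $sk$ points of $A$, forcing $s \geqslant |A|/k$ and hence $s \geqslant \lceil |A|/k \rceil$. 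The main obstacle I anticipate is being careful in part (2) that the anti-chain assumption is genuinely used to rule out spurious members hidden beneath larger ones; the other estimates are routine counting and minimality arguments.
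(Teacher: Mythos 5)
Your proofs are correct. Note that the paper itself offers no argument to compare against: it simply remarks that parts (1) and (2) are stated in [39, Propositions 1 and 2] and omits all details as straightforward. Your arguments are exactly the routine ones intended --- saturating a minimal covering subfamily to maximal elements for (1), recovering an anti-chain covering as the inclusion-maximal sets of weight one for (2), and pairing the greedy partition-and-extend upper bound with the counting lower bound $sk \geqslant |A|$ for (3) --- so the proposal correctly fills in precisely what the paper leaves out.
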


\begin{proof}
We note that (1) and (2) have been stated in [39, Propositions 1 and 2]. Since all the proofs are straightforward, the details are omitted.
\end{proof}

\setlength{\parindent}{2em}
Now we let $\left(H_{i}\mid i\in \Omega\right)$ be a family of finite abelian groups, and let $\mathbf{H}\triangleq\prod_{i\in \Omega}H_{i}$. Consider a covering $T$ of $\Omega$. For any codeword $\beta\in\mathbf{H}$, the $T$-weight of $\beta$ is defined as \begin{equation}\mbox{$\wt_{T}(\beta)=\omega_{T}(\supp(\beta))=\min\{|S|\mid S\subseteq T,\supp(\beta)\subseteq\bigcup_{I\in S}I\}$}.\end{equation}
It has been proven in \cite{15} that $d_{T}:\mathbf{H}\times\mathbf{H}\longrightarrow \mathbb{N}$ defined as $d_{T}(\alpha,\beta)=\wt_{T}(\alpha^{-1}\beta)$ induces a metric on $\mathbf{H}$, which is henceforth referred to as the $T$-combinatorial metric. Based on (1) of Lemma 2.3, one can assume that $(T,\subseteq)$ is an anti-chain without loss of generality. In addition, let $k\in[1,|\Omega|]$. Then, by (3) of Lemma 2.3 and (2.11), we have
\begin{equation}\forall~\beta\in\mathbf{H}:\wt_{\mathcal{P}(k,\Omega)}(\beta)=\left\lceil\frac{|\supp(\beta)|}{k}\right\rceil.\end{equation}
In particular, the $\mathcal{P}(1,\Omega)$-combinatorial metric is exactly the Hamming metric.

\setlength{\parindent}{2em}
Now we introduce partitions induced by combinatorial metric.

\setlength{\parindent}{0em}
\begin{notation}
For a covering $T$ of $\Omega$, we let $\mathcal{CO}(\mathbf{H},T)$ denote the partition of $\mathbf{H}$ such that for any $\beta,\theta\in\mathbf{H}$, $\beta\sim_{\mathcal{CO}(\mathbf{H},T)}\theta\Longleftrightarrow\wt_{T}(\beta)=\wt_{T}(\theta)$.
\end{notation}

\setlength{\parindent}{2em}
At the end of this subsection, we state the following conjecture proposed by Pinheiro, Machado and Firer in \cite{39} using our notation.

\setlength{\parindent}{0em}
\begin{conjecture}
Let $\mathbb{F}_2$ be the binary field and suppose that $\mathbf{H}={\mathbb{F}_2}^{\Omega}$. Consider $k\in[1,|\Omega|]$. Then, for any additive code $C\leqslant\mathbf{H}$ and $f\in\Hom(C,\mathbf{H})$ such that $\wt_{\mathcal{P}(k,\Omega)}(\alpha)=\wt_{\mathcal{P}(k,\Omega)}(f(\alpha))$ for all $\alpha\in C$, there exists $\varphi\in\Aut(\mathbf{H})$ such that $\varphi\mid_{C}=f$ and $\wt_{\mathcal{P}(k,\Omega)}(\alpha)=\wt_{\mathcal{P}(k,\Omega)}(\varphi(\alpha))$ for all $\alpha\in \mathbf{H}$. Alternatively speaking, for any $k\in[1,|\Omega|]$, $\mathcal{CO}(\mathbf{H},\mathcal{P}(k,\Omega))$ satisfies the MEP (see Definition 5.1).
\end{conjecture}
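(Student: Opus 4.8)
The plan is to reduce the statement to the classical MacWilliams extension theorem for the Hamming metric, which is exactly the case $k=1$. By (2.13) the weight function is $\wt_{\mathcal{P}(k,\Omega)}(\beta)=\lceil|\supp(\beta)|/k\rceil$, so it records the Hamming weight $|\supp(\beta)|$ only up to the ``bucket'' $[(j-1)k+1,jk]$ in which it lies. The key observation on the global side is that every coordinate permutation $\sigma\in\mathrm{Sym}(\Omega)$ acts on $\mathbf{H}=\mathbb{F}_2^{\Omega}$ as an element of $\Aut(\mathbf{H})$ that fixes $|\supp(\cdot)|$, and therefore fixes $\wt_{\mathcal{P}(k,\Omega)}$ as well; over $\mathbb{F}_2$ (where the only nonzero scalar is $1$) these permutations are precisely the Hamming isometries of $\mathbf{H}$ inside $\Aut(\mathbf{H})$. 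Thus, if I can upgrade the hypothesis ``$f$ preserves $\wt_{\mathcal{P}(k,\Omega)}$ on $C$'' to ``$f$ preserves $|\supp(\cdot)|$ on $C$'', then the classical theorem produces $\sigma\in\mathrm{Sym}(\Omega)$ with $\sigma|_{C}=f$, and this $\sigma$ is automatically a global $\wt_{\mathcal{P}(k,\Omega)}$-isometry, so I may take $\varphi=\sigma$ and be done.

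Accordingly, the crux of the argument is the reduction lemma: \emph{if $f\in\Hom(C,\mathbf{H})$ satisfies $\lceil|\supp(\alpha)|/k\rceil=\lceil|\supp(f(\alpha))|/k\rceil$ for every $\alpha\in C$, then in fact $|\supp(\alpha)|=|\supp(f(\alpha))|$ for every $\alpha\in C$.} To attack this I would exploit the $\mathbb{F}_2$-linear structure of $C$ beyond single codewords. Since the bucketed weight is a coarsening of the Hamming weight, I would try to recover the exact Hamming weight enumerator of $C$ from the collection of bucketed weights of all codewords and of all subcodes — equivalently, of all punctured and shortened codes obtained by restricting to coordinate subsets — and to show simultaneously that $f$ matches these exact weights codeword by codeword. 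The natural machinery here is the Krawtchouk/MacWilliams inversion developed for these partitions in Section 4.2: the bucketed weight distribution is obtained from the Hamming weight distribution by summing over the blocks $[(j-1)k+1,jk]$, and one hopes that the linear relations among the weight distributions of $C$ and of its subcodes pin the Hamming data down uniquely.

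The step I expect to be the genuine obstacle is exactly this reduction lemma. The ceiling function collapses the $k$ distinct Hamming weights $(j-1)k+1,\dots,jk$ into the single value $j$, so a priori $f$ is free to move a codeword \emph{within} a bucket, sending Hamming weight $w$ to some $w'$ with $\lceil w/k\rceil=\lceil w'/k\rceil$ but $w\neq w'$. Ruling out such intra-bucket motion for an $\mathbb{F}_2$-linear map is precisely the hard point: it amounts to showing that the coarse, bucketed weight data of a linear code rigidly determine its fine Hamming weight data. I would try to force this rigidity by analysing how the Hamming weights of sums $\alpha+\beta$ interact with the bucket boundaries, and by feeding the resulting constraints through the Krawtchouk inversion of Section 4.2 to control the dual weight distribution. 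If this rigidity can be established uniformly in $k$, the extension $\varphi$ follows immediately from the $k=1$ theorem as above; establishing it for all $k$ is the heart of the matter, and I expect the combinatorics at the bucket boundaries to be the decisive difficulty.
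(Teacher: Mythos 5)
There is a fatal problem: the statement you are trying to prove is false, and the paper's own contribution is precisely to \emph{disprove} it. Theorem 5.3(1) shows, for example, that over $\mathbb{F}_2$ the partition $\mathcal{CO}({\mathbb{F}_2}^{n},\mathcal{P}(3,[1,n]))$ fails the MEP for every $n\geqslant5$, with further infinite families of counter-examples for all $k\geqslant2$ (and $(k,q)\neq(2,2)$). The paper's route is the opposite of yours: Theorem 5.2 shows that for an $\mathbb{F}$-invariant partition containing $\{0\}$, the MEP forces the partition to be an orbit partition of $\inv(\Gamma)$ and hence to be reflexive; Section 4.2 then uses the Krawtchouk polynomials --- in particular the location of the smallest roots of $\mathbf{KU}_{(n-1,s)}$ and of its derivative (Lemma 4.1(4),(5), Propositions 4.1 and 4.3, Theorem 4.2) --- to show that $\textbf{\textit{l}}(\mathcal{CO}(\mathbf{H},\mathcal{P}(k,\Omega)))$ has strictly more classes than $\lceil|\Omega|/k\rceil+1$, so these partitions are non-reflexive and the MEP cannot hold. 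It is a telling irony that the machinery you hoped would deliver your rigidity statement (the Krawtchouk inversion of Section 4.2) is exactly the machinery the paper uses to refute the conjecture.

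Independently of this, the step you yourself flagged as the crux --- the reduction lemma asserting that preservation of $\lceil|\supp(\cdot)|/k\rceil$ on a linear code forces preservation of $|\supp(\cdot)|$ --- is false in the strongest possible way: it already fails for one-dimensional codes, where linearity gives no leverage at all. Over $\mathbb{F}_2$ with $k=3$, take $C=\{0,\alpha\}$ with $|\supp(\alpha)|=2$ and define $f(\alpha)=\beta$ with $|\supp(\beta)|=3$; then $\lceil 2/3\rceil=\lceil 3/3\rceil=1$, so $f$ preserves $\wt_{\mathcal{P}(3,\Omega)}$ on $C$ while changing the Hamming weight. So there is no hope of ``upgrading'' the hypothesis to the Hamming hypothesis and invoking MacWilliams' classical extension theorem: the bucketed weight genuinely permits intra-bucket motion, and any correct analysis must work with (or refute) extensions inside the larger group $\inv(\mathcal{CO}(\mathbf{H},\mathcal{P}(k,\Omega)))$, which strictly contains the coordinate permutations. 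Your conditional reasoning (reduction lemma $\Rightarrow$ conjecture) is sound as logic, but both the lemma and the target statement fail.
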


\setlength{\parindent}{2em}
We will prove in Section 5 that Conjecture 2.1 does not hold in general, as detailed in Theorem 5.3 and Remark 5.2. Our approach is based on the non-reflexivity of $\mathcal{CO}(\mathbf{H},\mathcal{P}(k,\Omega))$ and the relation between reflexivity and the MEP, as detailed in Section 4.2 and Section 5, respectively.

\section{Partitions induced by weighted poset metric}

\setlength{\parindent}{2em}
Throughout this and the next section, we let $\Omega$ be a nonempty finite set, and let $(G_{i}\mid i\in \Omega)$ and $(H_{i}\mid i\in \Omega)$ be two families of finite abelian groups such that $G_{i}\cong H_{i}$ and $|H_{i}|\triangleq h_{i}$ for all $i\in\Omega$. Write $$\mathbf{G}\triangleq\prod_{i\in \Omega}G_{i},~\mathbf{H}\triangleq\prod_{i\in \Omega}H_{i}.$$
For any $i\in\Omega$, let $\pi_{i}:G_{i}\times H_{i} \longrightarrow \mathbb{C}^{\ast}$ be a non-degenerate pairing. We define the non-degenerate pairing $f:\mathbf{G}\times\mathbf{H}\longrightarrow \mathbb{C}^{\ast}$ as
\begin{equation}\forall~(\alpha,\beta)\in\mathbf{G}\times\mathbf{H}: f(\alpha,\beta)=\prod_{i\in \Omega}\pi_{i}(\alpha_{i},\beta_{i}).\end{equation}
For any partition $\Gamma$ of $\mathbf{H}$, we let $\textbf{\textit{l}}(\Gamma)$ denote the left dual partition of $\Gamma$ with respect to $f$, as defined in (2.3). For any partition $\Delta$ of $\mathbf{G}$, we let $\textbf{\textit{r}}(\Delta)$ denote the right dual partition of $\Delta$ with respect to $f$, as defined in (2.4).

Throughout this section, we fix a poset $\mathbf{P}=(\Omega,\preccurlyeq_{\mathbf{P}})$. For any $D,I\subseteq\Omega$, define $\varphi(D,I)\in \mathds{Z}$ and $\psi(D,I)\in \mathds{Z}$ as follows:
\begin{eqnarray*}\hspace*{-12mm}\varphi(D,I)=\begin{cases}
(-1)^{|I\cap D|}\left(\prod_{i\in I-\max_{\mathbf{P}}(I)}h_{i}\right)\left(\prod_{i\in \max_{\mathbf{P}}(I)-D}(h_{i}-1)\right),&I\cap D\subseteq \max_{\mathbf{P}}(I);\\
0,&I\cap D\not\subseteq \max_{\mathbf{P}}(I),
\end{cases}
\end{eqnarray*}
\begin{eqnarray*}\hspace*{-12mm}\psi(D,I)=\begin{cases}
(-1)^{|I\cap D|}\left(\prod_{i\in D-\min_{\mathbf{P}}(D)}h_{i}\right)\left(\prod_{i\in \min_{\mathbf{P}}(D)-I}(h_{i}-1)\right),&I\cap D\subseteq \min_{\mathbf{P}}(D);\\
0,&I\cap D\not\subseteq \min_{\mathbf{P}}(D).
\end{cases}
\end{eqnarray*}
We also fix $\omega:\Omega\longrightarrow\mathbb{R}^{+}$, and define $\varpi:2^{\Omega}\longrightarrow\mathbb{R}$ as $\varpi(I)=\sum_{i\in I}\omega(i)$. Moreover, we write $\Lambda=\textbf{\textit{l}}(\mathcal{Q}(\mathbf{H},\mathbf{P},\omega))$, $\Theta=\textbf{\textit{r}}(\mathcal{Q}(\mathbf{G},\mathbf{\overline{P}},\omega))$.

\subsection{A sufficient condition for $\mathcal{Q}(\mathbf{H},\mathbf{P},\omega)$ to be reflexive}

\setlength{\parindent}{2em}
We begin by computing the left generalized Krawtchouk matrix of $(\Lambda,\mathcal{Q}(\mathbf{H},\mathbf{P},\omega))$ with respect to $f$. By [44, Proposition II.1], for any $\alpha\in \mathbf{G}$ and $I\in\mathcal{I}(\mathbf{P})$, we have
\begin{equation}\sum_{(\beta\in\mathbf{H},\langle \supp(\beta)\rangle_{\mathbf{P}}=I)}f(\alpha,\beta)=\varphi(\langle \supp(\alpha)\rangle_{\mathbf{\overline{P}}},I).\end{equation}
Hence by (2.8), for any $\alpha\in \mathbf{G}$ and $b\in\mathbb{R}$, it holds that
\begin{equation}\sum_{(\beta\in\mathbf{H},\wt_{(\mathbf{P},\omega)}(\beta)=b)}f(\alpha,\beta)=\sum_{(I\in\mathcal{I}(\mathbf{P}),\varpi(I)=b)}\varphi(\langle \supp(\alpha)\rangle_{\mathbf{\overline{P}}},I).\end{equation}
The right generalized Krawtchouk matrix of $(\mathcal{Q}(\mathbf{G},\mathbf{\overline{P}},\omega),\Theta)$ with respect to $f$ can be computed in a parallel fashion. More precisely, for any $\theta\in \mathbf{H}$ and $D\in\mathcal{I}(\mathbf{\overline{P}})$, we have
\begin{equation}\sum_{(\gamma\in\mathbf{G},\langle \supp(\gamma)\rangle_{\mathbf{\overline{P}}}=D)}f(\gamma,\theta)=\psi(D,\langle \supp(\theta)\rangle_{\mathbf{P}}).\end{equation}
Hence for any $\theta\in\mathbf{H}$ and $b\in\mathbb{R}$, it holds that
\begin{equation}\sum_{(\gamma\in\mathbf{G},\wt_{(\mathbf{\overline{P}},\omega)}(\gamma)=b)}f(\gamma,\theta)=\sum_{(D\in\mathcal{I}(\mathbf{\overline{P}}),\varpi(D)=b)}\psi(D,\langle \supp(\theta)\rangle_{\mathbf{P}}).\end{equation}
Using (3.3) and (3.5), we give the following sufficient conditions for two codewords to belong to the same member of $\Lambda$ or $\Theta$.

\setlength{\parindent}{0em}
\begin{proposition}
{Let $\lambda\in\Aut(\mathbf{P})$ such that $h_{i}=h_{\lambda(i)}$, $\omega(i)=\omega(\lambda(i))$ for all $i\in\Omega$. Then, we have:

{\bf{(1)}}\,\,For $\alpha,\gamma\in \mathbf{G}$ with $\langle \supp(\gamma)\rangle_{\mathbf{\overline{P}}}=\lambda[\langle \supp(\alpha)\rangle_{\mathbf{\overline{P}}}]$, it holds that $\alpha\sim_{\Lambda}\gamma$;

{\bf{(2)}}\,\,For $\beta,\theta\in \mathbf{H}$ with $\langle \supp(\theta)\rangle_{\mathbf{P}}=\lambda[\langle \supp(\beta)\rangle_{\mathbf{P}}]$, it holds that $\beta\sim_{\Theta}\theta$.
}
\end{proposition}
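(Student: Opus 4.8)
The plan is to reduce each membership claim to an identity between weighted sums of the integers $\varphi$ (resp.\ $\psi$) over ideals of a fixed $\varpi$-value, and then to exploit $\lambda$ as an order automorphism to match these sums term by term. For \textbf{(1)}, I would first unwind the definition of $\Lambda=\textbf{\textit{l}}(\mathcal{Q}(\mathbf{H},\mathbf{P},\omega))$ together with Notation 2.1: since the members of $\mathcal{Q}(\mathbf{H},\mathbf{P},\omega)$ are precisely the level sets $\{\beta\in\mathbf{H}:\wt_{(\mathbf{P},\omega)}(\beta)=b\}$, the relation $\alpha\sim_{\Lambda}\gamma$ holds if and only if $\sum_{(\beta\in\mathbf{H},\,\wt_{(\mathbf{P},\omega)}(\beta)=b)}f(\alpha,\beta)=\sum_{(\beta\in\mathbf{H},\,\wt_{(\mathbf{P},\omega)}(\beta)=b)}f(\gamma,\beta)$ for every $b\in\mathbb{R}$. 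Applying (3.3) to both sides and writing $D=\langle\supp(\alpha)\rangle_{\mathbf{\overline{P}}}$ (so that $\langle\supp(\gamma)\rangle_{\mathbf{\overline{P}}}=\lambda[D]$ by hypothesis), this reduces to showing
$$\sum_{(I\in\mathcal{I}(\mathbf{P}),\,\varpi(I)=b)}\varphi(D,I)=\sum_{(I\in\mathcal{I}(\mathbf{P}),\,\varpi(I)=b)}\varphi(\lambda[D],I)\qquad\text{for all }b.$$

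The crux is the pointwise invariance $\varphi(D,I)=\varphi(\lambda[D],\lambda[I])$ for every $I\in\mathcal{I}(\mathbf{P})$. I would establish this directly from the definition of $\varphi$ using three facts about $\lambda$: since $\lambda$ is a bijection, $|\lambda[I]\cap\lambda[D]|=|I\cap D|$, so the sign $(-1)^{|I\cap D|}$ is preserved; since $\lambda\in\Aut(\mathbf{P})$ it carries maximal elements to maximal elements, i.e.\ $\max_{\mathbf{P}}(\lambda[I])=\lambda[\max_{\mathbf{P}}(I)]$, whence the case distinction $I\cap D\subseteq\max_{\mathbf{P}}(I)$ is preserved and the index sets $\lambda[I]-\max_{\mathbf{P}}(\lambda[I])$ and $\max_{\mathbf{P}}(\lambda[I])-\lambda[D]$ are exactly the $\lambda$-images of $I-\max_{\mathbf{P}}(I)$ and $\max_{\mathbf{P}}(I)-D$; and since $h_{i}=h_{\lambda(i)}$, the two products of $h_{i}$ and $h_{i}-1$ are unchanged after reindexing by $\lambda$. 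Granting this, I note that $\varpi(\lambda[I])=\varpi(I)$ because $\omega(i)=\omega(\lambda(i))$, so $I\mapsto\lambda[I]$ is a bijection of $\{I\in\mathcal{I}(\mathbf{P}):\varpi(I)=b\}$ onto itself (it maps ideals to ideals as $\lambda\in\Aut(\mathbf{P})$). Reindexing the right-hand sum by $I=\lambda[J]$ and applying the invariance turns it into $\sum_{J}\varphi(\lambda[D],\lambda[J])=\sum_{J}\varphi(D,J)$, which is the left-hand sum, proving \textbf{(1)}.

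For \textbf{(2)} I would run the entirely parallel argument with $\Theta=\textbf{\textit{r}}(\mathcal{Q}(\mathbf{G},\mathbf{\overline{P}},\omega))$, using the definition of the right dual partition, equation (3.5) in place of (3.3), and $\psi$ in place of $\varphi$. Setting $E=\langle\supp(\beta)\rangle_{\mathbf{P}}$ (so $\langle\supp(\theta)\rangle_{\mathbf{P}}=\lambda[E]$), the claim reduces to $\sum_{D}\psi(D,E)=\sum_{D}\psi(D,\lambda[E])$ over $D\in\mathcal{I}(\mathbf{\overline{P}})$ with $\varpi(D)=b$. Here I would use that $\lambda\in\Aut(\mathbf{P})$ implies $\lambda\in\Aut(\mathbf{\overline{P}})$, so $D\mapsto\lambda[D]$ is a $\varpi$-preserving bijection of $\mathcal{I}(\mathbf{\overline{P}})$, together with the analogous invariance $\psi(D,E)=\psi(\lambda[D],\lambda[E])$, whose verification is identical to that for $\varphi$ except that $\max_{\mathbf{P}}$ is replaced by $\min_{\mathbf{P}}$ (both preserved by $\lambda$).

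The main obstacle is the verification of the pointwise invariance of $\varphi$ (and $\psi$): the defining formula is piecewise, and one must check that \emph{every} ingredient — the membership condition $I\cap D\subseteq\max_{\mathbf{P}}(I)$, the cardinality $|I\cap D|$ in the sign, and the two products over $I-\max_{\mathbf{P}}(I)$ and $\max_{\mathbf{P}}(I)-D$ — transforms compatibly under $\lambda$. Everything else is bookkeeping: the reduction to sums over ideals is immediate from (3.3), (3.5) and the definitions, and the reindexing is routine once the $\varpi$-invariance of $I\mapsto\lambda[I]$ is noted.
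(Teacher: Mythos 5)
Your proposal is correct and follows essentially the same route as the paper's own proof: reduce via (3.3) (resp.\ (3.5)) to sums of $\varphi$ (resp.\ $\psi$) over ideals of fixed $\varpi$-value, establish the pointwise invariance $\varphi(D,I)=\varphi(\lambda[D],\lambda[I])$, and reindex using that $I\mapsto\lambda[I]$ is a $\varpi$-preserving bijection of $\mathcal{I}(\mathbf{P})$. The only difference is that you spell out the verification of the pointwise invariance (sign, case distinction, and products), which the paper compresses into ``one can check.''
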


\begin{proof}
We only prove (1), and the proof of (2) is similar. Let $D=\langle \supp(\alpha)\rangle_{\mathbf{\overline{P}}}$. Then, we have $\langle \supp(\gamma)\rangle_{\mathbf{\overline{P}}}=\lambda[D]$. From $\lambda\in\Aut(\mathbf{P})$ and $h_{i}=h_{\lambda(i)}$ for all $i\in\Omega$, one can check that for any $I\subseteq\Omega$, it holds that $\varphi(D,I)=\varphi(\lambda[D],\lambda[I])$. Consider an arbitrary $b\in\mathbb{R}$. From $\lambda\in\Aut(\mathbf{P})$ and $\omega(i)=\omega(\lambda(i))$ for all $i\in\Omega$, one can check that for any $I\in\mathcal{I}(\mathbf{P})$ with $\varpi(I)=b$, it holds that $\lambda[I]\in\mathcal{I}(\mathbf{P})$, $\varpi(\lambda[I])=b$. Now by (3.3), we have
\begin{eqnarray*}
\begin{split}
\sum_{(\beta\in\mathbf{H},\wt_{(\mathbf{P},\omega)}(\beta)=b)}f(\gamma,\beta)&=\sum_{(I\in\mathcal{I}(\mathbf{P}),\varpi(I)=b)}\varphi(\lambda[D],I)\\
&=\sum_{(I\in\mathcal{I}(\mathbf{P}),\varpi(I)=b)}\varphi(\lambda[D],\lambda[I])\\
&=\sum_{(I\in\mathcal{I}(\mathbf{P}),\varpi(I)=b)}\varphi(D,I)\\
&=\sum_{(\beta\in\mathbf{H},\wt_{(\mathbf{P},\omega)}(\beta)=b)}f(\alpha,\beta),
\end{split}
\end{eqnarray*}
which immediately implies the desired result.
\end{proof}

\setlength{\parindent}{2em}
Now we prove the main result of this subsection.

\setlength{\parindent}{0em}
\begin{theorem}
{Assume that $(\mathbf{P},\omega)$ satisfies the UDP, and for any $u,v\in\Omega$ such that $\len_{\mathbf{P}}(u)=\len_{\mathbf{P}}(v)$ and $\omega(u)=\omega(v)$, it holds that $h_{u}=h_{v}$. Then, we have $\Lambda=\mathcal{Q}(\mathbf{G},\mathbf{\overline{P}},\omega)$, $\Theta=\mathcal{Q}(\mathbf{H},\mathbf{P},\omega)$, and both $\mathcal{Q}(\mathbf{G},\mathbf{\overline{P}},\omega)$ and $\mathcal{Q}(\mathbf{H},\mathbf{P},\omega)$ are reflexive.
}
\end{theorem}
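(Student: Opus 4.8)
The plan is to deduce all four assertions simultaneously from two \emph{finer-than} relations together with a cyclic chain of cardinality inequalities. Concretely, I will show that $\mathcal{Q}(\mathbf{G},\mathbf{\overline{P}},\omega)$ is finer than $\Lambda$ and that $\mathcal{Q}(\mathbf{H},\mathbf{P},\omega)$ is finer than $\Theta$, each by a single application of Proposition 3.1; then, invoking Lemma 2.1 (in its $\textbf{\textit{l}}$-orientation and, by the symmetry $\mathbf{G}\leftrightarrow\mathbf{H}$, $\textbf{\textit{l}}\leftrightarrow\textbf{\textit{r}}$ obtained from the transpose pairing, also its $\textbf{\textit{r}}$-orientation), I will sandwich the four relevant cardinalities and force them all to coincide. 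The one genuinely non-formal ingredient is turning the automorphism delivered by the UDP---which a priori only respects $\omega$---into one that also respects the orders $h_{i}$, and this is exactly where the hypothesis on $\len_{\mathbf{P}}$ enters.

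For the first relation, suppose $\alpha,\gamma\in\mathbf{G}$ satisfy $\wt_{(\mathbf{\overline{P}},\omega)}(\alpha)=\wt_{(\mathbf{\overline{P}},\omega)}(\gamma)$, and set $D_{1}=\langle\supp(\alpha)\rangle_{\mathbf{\overline{P}}}$, $D_{2}=\langle\supp(\gamma)\rangle_{\mathbf{\overline{P}}}$, which are ideals of $\mathbf{\overline{P}}$ with $\varpi(D_{1})=\varpi(D_{2})$. Since $(\mathbf{P},\omega)$ satisfies the UDP, Lemma 2.2 shows $(\mathbf{\overline{P}},\omega)$ does too, producing $\lambda\in\Aut(\mathbf{\overline{P}})=\Aut(\mathbf{P})$ with $D_{2}=\lambda[D_{1}]$ and $\omega(i)=\omega(\lambda(i))$ for every $i\in\Omega$. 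Because $\lambda$ is an order automorphism of $\mathbf{P}$ it carries chains to chains preserving their greatest elements, whence $\len_{\mathbf{P}}(\lambda(i))=\len_{\mathbf{P}}(i)$; combined with $\omega(\lambda(i))=\omega(i)$, the standing hypothesis yields $h_{\lambda(i)}=h_{i}$ for all $i$. Thus $\lambda$ meets every requirement of Proposition 3.1, and since $\langle\supp(\gamma)\rangle_{\mathbf{\overline{P}}}=\lambda[\langle\supp(\alpha)\rangle_{\mathbf{\overline{P}}}]$, part (1) of that proposition gives $\alpha\sim_{\Lambda}\gamma$, proving $\mathcal{Q}(\mathbf{G},\mathbf{\overline{P}},\omega)$ finer than $\Lambda$. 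The second relation is entirely parallel but uses the UDP of $(\mathbf{P},\omega)$ directly: given $\beta,\theta\in\mathbf{H}$ with $\wt_{(\mathbf{P},\omega)}(\beta)=\wt_{(\mathbf{P},\omega)}(\theta)$, the ideals $\langle\supp(\beta)\rangle_{\mathbf{P}}$ and $\langle\supp(\theta)\rangle_{\mathbf{P}}$ have equal $\varpi$-value, the UDP supplies $\lambda\in\Aut(\mathbf{P})$ with $\omega\circ\lambda=\omega$ (and again $h_{\lambda(i)}=h_{i}$ by the same length argument), and Proposition 3.1(2) yields $\beta\sim_{\Theta}\theta$, so $\mathcal{Q}(\mathbf{H},\mathbf{P},\omega)$ is finer than $\Theta$.

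Finally I would assemble the cardinality bounds. The two finer-than relations give $|\Lambda|\leqslant|\mathcal{Q}(\mathbf{G},\mathbf{\overline{P}},\omega)|$ and $|\Theta|\leqslant|\mathcal{Q}(\mathbf{H},\mathbf{P},\omega)|$, while Lemma 2.1 gives $|\mathcal{Q}(\mathbf{H},\mathbf{P},\omega)|\leqslant|\textbf{\textit{l}}(\mathcal{Q}(\mathbf{H},\mathbf{P},\omega))|=|\Lambda|$ and, in its symmetric form applied to the partition $\mathcal{Q}(\mathbf{G},\mathbf{\overline{P}},\omega)$ of $\mathbf{G}$, $|\mathcal{Q}(\mathbf{G},\mathbf{\overline{P}},\omega)|\leqslant|\textbf{\textit{r}}(\mathcal{Q}(\mathbf{G},\mathbf{\overline{P}},\omega))|=|\Theta|$. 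Chaining these produces the cycle
\[
|\mathcal{Q}(\mathbf{H},\mathbf{P},\omega)|\leqslant|\Lambda|\leqslant|\mathcal{Q}(\mathbf{G},\mathbf{\overline{P}},\omega)|\leqslant|\Theta|\leqslant|\mathcal{Q}(\mathbf{H},\mathbf{P},\omega)|,
\]
forcing equality throughout. Then $|\Lambda|=|\mathcal{Q}(\mathbf{G},\mathbf{\overline{P}},\omega)|$ with the first finer-than relation gives $\Lambda=\mathcal{Q}(\mathbf{G},\mathbf{\overline{P}},\omega)$; $|\Theta|=|\mathcal{Q}(\mathbf{H},\mathbf{P},\omega)|$ with the second gives $\Theta=\mathcal{Q}(\mathbf{H},\mathbf{P},\omega)$; and the equalities $|\mathcal{Q}(\mathbf{H},\mathbf{P},\omega)|=|\textbf{\textit{l}}(\mathcal{Q}(\mathbf{H},\mathbf{P},\omega))|$ and $|\mathcal{Q}(\mathbf{G},\mathbf{\overline{P}},\omega)|=|\textbf{\textit{r}}(\mathcal{Q}(\mathbf{G},\mathbf{\overline{P}},\omega))|$ give reflexivity of both partitions via the cardinality criterion in Lemma 2.1. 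I expect the main obstacle to be precisely the promotion of the UDP-automorphism to one preserving $h$ (the content of the length hypothesis); once that is secured, the remainder is bookkeeping with Lemma 2.1 and the $\mathbf{G}\leftrightarrow\mathbf{H}$ symmetry, and in particular no appeal to the hierarchical or integer-valued conditions of Section 3.2 is needed.
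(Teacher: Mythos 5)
Your proposal is correct and follows essentially the same route as the paper: both establish that $\mathcal{Q}(\mathbf{G},\mathbf{\overline{P}},\omega)$ is finer than $\Lambda$ and $\mathcal{Q}(\mathbf{H},\mathbf{P},\omega)$ is finer than $\Theta$ via Lemma 2.2, the length-preservation argument promoting the UDP automorphism to one preserving the $h_{i}$, and Proposition 3.1. The only difference is cosmetic: where the paper concludes by noting the pair is mutually dual and citing the equivalences in Lemma 2.1, you unpack the same conclusion as an explicit cardinality sandwich, which is just a hands-on rederivation of that part of the lemma.
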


\begin{proof}
First, consider $\alpha,\gamma\in\mathbf{G}$ with $\wt_{(\mathbf{\overline{P}},\omega)}(\alpha)=\wt_{(\mathbf{\overline{P}},\omega)}(\gamma)$. By (2.8), we have $\varpi(\langle\supp(\alpha)\rangle_{\mathbf{\overline{P}}})=\varpi(\langle\supp(\gamma)\rangle_{\mathbf{\overline{P}}})$. Since $(\mathbf{P},\omega)$ satisfies the UDP, by Lemma 2.2, we can choose $\lambda\in\Aut(\mathbf{P})$ such that $\langle\supp(\gamma)\rangle_{\mathbf{\overline{P}}}=\lambda[\langle\supp(\alpha)\rangle_{\mathbf{\overline{P}}}]$ and $\omega(i)=\omega(\lambda(i))$ for all $i\in\Omega$. For any $i\in\Omega$, it follows from the fact $\lambda\in\Aut(\mathbf{P})$ that $\len_{\mathbf{P}}(i)=\len_{\mathbf{P}}(\lambda(i))$, which, along with $\omega(i)=\omega(\lambda(i))$, implies that $h_{i}=h_{\lambda(i)}$. By Proposition 3.1, we have $\alpha\sim_{\Lambda}\gamma$. It follows that $\mathcal{Q}(\mathbf{G},\mathbf{\overline{P}},\omega)$ is finer than $\Lambda$. A similarly discussion leads to the fact that $\mathcal{Q}(\mathbf{H},\mathbf{P},\omega)$ is finer than $\Theta$. Therefore $(\mathcal{Q}(\mathbf{G},\mathbf{\overline{P}},\omega),\mathcal{Q}(\mathbf{H},\mathbf{P},\omega))$ is mutually dual with respect to $f$, which, along with Lemma 2.1, immediately implies the desired result.
\end{proof}

\subsection{The case that $\mathbf{P}$ is hierarchical}
\setlength{\parindent}{2em}
Throughout this subsection, we assume that $\omega$ is integer-valued, i.e.,
\begin{equation}\mbox{$\omega(i)\in\mathbb{Z}^{+}$ for all $i\in\Omega$}.\end{equation}
We also let $m$ be the largest cardinality of a chain in $\mathbf{P}$, and for any $j\in[1,m]$, let $W_{j}\triangleq\{u\in\Omega\mid\len_{\mathbf{P}}(u)=j\}$. Moreover, for any $D\subseteq\Omega$, we let $\sigma(D)$ denote the largest integer $r\in[1,m]$ such that $D\subseteq \bigcup_{j=r}^{m}W_{j}$.

As a generalization of [44, Notation II.1], we can relate each $\alpha\in\mathbf{G}$ with a polynomial $F(\omega,\alpha)$ defined as
\begin{equation}F(\omega,\alpha)\triangleq\sum_{l=0}^{\varpi(\Omega)}\sum_{(\beta\in \mathbf{H},\wt_{(\mathbf{P},\omega)}(\beta)=l)}f(\alpha,\beta)x^{l}.\end{equation}
By the definition of $\Lambda$, we infer that
\begin{equation}\forall~\alpha,\gamma\in\mathbf{G}:\alpha\sim_{\Lambda}\gamma\Longleftrightarrow F(\omega,\alpha)=F(\omega,\gamma).\end{equation}
In addition, we can derive a more explicit form of $F(\omega,\alpha)$, as detailed in the following proposition.

\setlength{\parindent}{0em}
\begin{proposition}
{\bf{(1)}}\,\,Let $\alpha\in \mathbf{G}$, and write $D=\langle \supp(\alpha)\rangle_{\mathbf{\overline{P}}}$, $X=(\Omega-D)\cup\min_{\mathbf{P}}(D)$. Then, we have
$$F(\omega,\alpha)=\sum_{(I\in\mathcal{I}(\mathbf{P}),I\subseteq X)}(-1)^{|I\cap D|}\left(\prod_{i\in I-\max_{\mathbf{P}}(I)}h_{i}\right)\left(\prod_{i\in \max_{\mathbf{P}}(I)-D}(h_{i}-1)\right)x^{\varpi(I)}.$$
In addition, if $h_{i}\geqslant2$ for all $i\in\Omega$, then $\deg(F(\omega,\alpha))=\varpi(X)$.

{\bf{(2)}}\,\,Suppose that $\mathbf{P}$ is hierarchical. Let $\alpha\in \mathbf{G}$, and write $D=\langle \supp(\alpha)\rangle_{\mathbf{\overline{P}}}$, $r=\sigma(D)$. Then, we have
\begin{eqnarray*}
\begin{split}
\hspace*{-8mm}F(\omega,\alpha)=&\left(\prod_{i\in \left(\bigcup_{j=1}^{r-1}W_{j}\right)}h_{i}\cdot x^{\omega(i)}\right)\left(\prod_{i\in W_{r}\cap D}\left(1-x^{\omega(i)}\right)\right)\left(\prod_{i\in W_{r}-D}\left((h_{i}-1)x^{\omega(i)}+1\right)\right)\\
&+\sum_{t=1}^{r-1}\left(\prod_{i\in \left(\bigcup_{j=1}^{t-1}W_{j}\right)}h_{i}\cdot x^{\omega(i)}\right)\left(\prod_{i\in W_{t}}\left((h_{i}-1)x^{\omega(i)}+1\right)\right)\\
&-\sum_{t=2}^{r}\left(\prod_{i\in \left(\bigcup_{j=1}^{t-1}W_{j}\right)}h_{i}\cdot x^{\omega(i)}\right).
\end{split}
\end{eqnarray*}
In addition, if $h_{i}\geqslant2$ for all $i\in\Omega$, then $\deg(F(\omega,\alpha))=\varpi(\bigcup_{j=1}^{r}W_{j})$.
\end{proposition}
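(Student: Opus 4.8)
The plan is to derive both formulas from a single generating-function identity and then argue combinatorially on the ideal lattice of $\mathbf{P}$. Writing $D=\langle\supp(\alpha)\rangle_{\mathbf{\overline{P}}}$, the common first step is to substitute (3.3) into the definition (3.7) and collect terms according to the value $\varpi(I)$, which gives
\[
F(\omega,\alpha)=\sum_{I\in\mathcal{I}(\mathbf{P})}\varphi(D,I)\,x^{\varpi(I)}\in\mathbb{Z}[x],
\]
the membership in $\mathbb{Z}[x]$ being guaranteed by (3.6). Everything then comes down to deciding which ideals contribute a nonzero $\varphi(D,I)$ and regrouping the surviving terms.

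For Part (1) the heart of the matter is the equivalence, for any $I\in\mathcal{I}(\mathbf{P})$ and the up-set $D\in\mathcal{I}(\mathbf{\overline{P}})$,
\[
I\cap D\subseteq\max\nolimits_{\mathbf{P}}(I)\iff I\cap D\subseteq\min\nolimits_{\mathbf{P}}(D)\iff I\subseteq X,
\]
where $X=(\Omega-D)\cup\min_{\mathbf{P}}(D)$. I would prove the two nontrivial implications directly: if some $i\in I\cap D$ were not minimal in $D$, a witness $j\prec_{\mathbf{P}}i$ would lie in $I$ (down-set) and in $D$, contradicting $i\in\max_{\mathbf{P}}(I)$; symmetrically, if some $i\in I\cap D$ were not maximal in $I$, a witness $i\prec_{\mathbf{P}}j$ would lie in $D$ (up-set) and in $I$, contradicting $i\in\min_{\mathbf{P}}(D)$. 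Since $\varphi(D,I)=0$ precisely when $I\cap D\not\subseteq\max_{\mathbf{P}}(I)$, the surviving terms are exactly those with $I\subseteq X$, and for such $I$ the condition $I\cap D\subseteq\max_{\mathbf{P}}(I)$ holds, so the explicit product displayed in the statement equals $\varphi(D,I)$; this is the claimed identity. For the degree, I would verify that $X$ is itself an ideal, so that it occurs in the sum, and that $X$ is the unique ideal $I\subseteq X$ of weight $\varpi(X)$ because $\omega>0$ forbids a proper subideal of equal weight; hence no cancellation is possible at the top degree, and when $h_i\geq2$ the coefficient $\varphi(D,X)$ is a product of nonzero factors, giving $\deg F(\omega,\alpha)=\varpi(X)$.

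For Part (2) I would first use the hierarchical hypothesis to pin down $D$ level by level. Since distinct elements of one level are incomparable while any lower-level element lies below every higher-level element, the up-set $D$ satisfies $D\cap W_j=\emptyset$ for $j<r$ and $D\cap W_j=W_j$ for $j>r$; moreover, for $D\neq\emptyset$ one has $D\cap W_r\neq\emptyset$ and $\min_{\mathbf{P}}(D)=D\cap W_r$, whence $X=\bigcup_{j=1}^{r}W_j$. Each ideal $I\subseteq X$ is then described by its top level $t$ and the set $S=I\cap W_t\subseteq W_t$, namely $I=(\bigcup_{j<t}W_j)\cup S$ with $\max_{\mathbf{P}}(I)=S$ and $I-\max_{\mathbf{P}}(I)=\bigcup_{j<t}W_j$. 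Feeding these into $\varphi(D,I)\,x^{\varpi(I)}$ and summing over all nonempty $S\subseteq W_t$ collapses each level into a product: for $t<r$ one obtains the factor $\prod_{i\in W_t}(1+(h_i-1)x^{\omega(i)})-1$, and for $t=r$, splitting $S$ across $W_r\cap D$ and $W_r-D$ and using the sign $(-1)^{|S\cap D|}$, one obtains $\prod_{i\in W_r\cap D}(1-x^{\omega(i)})\prod_{i\in W_r-D}(1+(h_i-1)x^{\omega(i)})-1$. The leading products, each prefixed by $\prod_{i\in\bigcup_{j<t}W_j}h_ix^{\omega(i)}$, reproduce Term A (from $t=r$) and Term B (from $1\leq t\leq r-1$), while the $-1$ corrections together with the empty-ideal term $x^{0}=1$ assemble into the subtracted Term C. The degree assertion is then immediate from Part (1), since here $X=\bigcup_{j=1}^{r}W_j$.

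The combinatorial equivalence behind Part (1) is short, so I expect the main obstacle to be the final regrouping in Part (2). One must carry the prefix $\prod_{i\in\bigcup_{j<t}W_j}h_ix^{\omega(i)}$ through every level, align the two index ranges ($t=1,\dots,r-1$ for Term B against $t=2,\dots,r$ for Term C), and notice that the empty-ideal constant exactly cancels the $t=1$ summand of $\sum_{t=1}^{r}\prod_{i\in\bigcup_{j<t}W_j}h_ix^{\omega(i)}$, so that the $-1$ corrections collapse to Term C with no off-by-one error. I would also treat the degenerate case $D=\emptyset$ (where $r=m$ and $X=\Omega$) separately, since the argument that $\min_{\mathbf{P}}(D)=D\cap W_r$ relies on $D\cap W_r\neq\emptyset$.
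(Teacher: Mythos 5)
Your proposal is correct and takes essentially the same route as the paper's proof: Part (1) rests on the same equivalence $I\cap D\subseteq\max_{\mathbf{P}}(I)\Longleftrightarrow I\subseteq X$ (which the paper leaves as a one-line check and you prove in detail), and Part (2) uses the identical parametrization of nonempty ideals by their top level $t$ and top set $S\subseteq W_{t}$, the same level-by-level factorization into $\prod_{i\in W_{t}\cap D}(1-x^{\omega(i)})\prod_{i\in W_{t}-D}((h_{i}-1)x^{\omega(i)}+1)-1$, and the same bookkeeping that cancels the empty-ideal constant against the $t=1$ correction term. The only cosmetic fix needed is in your two contradiction arguments for the equivalence: the comparability $j\prec_{\mathbf{P}}i$ (resp.\ $i\prec_{\mathbf{P}}j$) contradicts the extremality of the witness $j$, which also lies in $I\cap D$ and hence is forced to be extremal by the hypothesis, rather than the extremality of $i$ itself.
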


\begin{proof}
{\bf{(1)}}\,\,For any $I\in\mathcal{I}(\mathbf{P})$, one can check that $I\cap D\subseteq \max_{\mathbf{P}}(I)\Longleftrightarrow I\subseteq X$. With such an observation, the first part is a direct consequence of (3.3) and (3.7). In addition, the second part follows from the first part and the fact that $X\in\mathcal{I}(\mathbf{P})$, as desired.

{\bf{(2)}}\,\,Define $g:\{(t,V)\mid t\in[1,m],V\subseteq W_{t},V\neq\emptyset\}\longrightarrow 2^{\Omega}$ as $g(t,V)=(\bigcup_{j=1}^{t-1}W_{j})\cup V$. From $\mathbf{P}$ is hierarchical, we infer that $g$ is injective and the range of $g$ is equal to $\mathcal{I}(\mathbf{P})-\{\emptyset\}$. Moreover, for any $t\in[1,m]$, $V\subseteq W_{t}$, $V\neq\emptyset$, we have $\max_{\mathbf{P}}(g(t,V))=V$. Again by $\mathbf{P}$ is hierarchical, together with $\sigma(D)=r$, we have $\min_{\mathbf{P}}(D)=W_{r}\cap D$, $(\Omega-D)\cup\min_{\mathbf{P}}(D)=\bigcup_{j=1}^{r}W_{j}$. By (1), we have
\begin{eqnarray*}
\begin{split}
&F(\omega,\alpha)-1\\
&=\sum_{t=1}^{r}\sum_{(V\subseteq W_{t},V\neq\emptyset)}(-1)^{|V\cap D|}\left(\prod_{i\in \left(\bigcup_{j=1}^{t-1}W_{j}\right)}h_{i}\right)\left(\prod_{i\in V-D}(h_{i}-1)\right)x^{\varpi\left(\left(\bigcup_{j=1}^{t-1}W_{j}\right)\cup V\right)}\\
&=\sum_{t=1}^{r}\left(\prod_{i\in \left(\bigcup_{j=1}^{t-1}W_{j}\right)}h_{i}\cdot x^{\omega(i)}\right)\left(\sum_{(V\subseteq W_{t},V\neq\emptyset)}(-1)^{|V\cap D|}\left(\prod_{i\in V-D}(h_{i}-1)\right)x^{\varpi(V)}\right)\\
&=\sum_{t=1}^{r}\left(\prod_{i\in \left(\bigcup_{j=1}^{t-1}W_{j}\right)}h_{i}\cdot x^{\omega(i)}\right)\left(\left(\prod_{i\in W_{t}\cap D}\left(1-x^{\omega(i)}\right)\right)\left(\prod_{i\in W_{t}-D}\left((h_{i}-1)x^{\omega(i)}+1\right)\right)-1\right).
\end{split}
\end{eqnarray*}
Since for any $t\in[1,r-1]$, it holds true that $W_{t}\cap D=\emptyset$, $W_{t}-D=W_{t}$, the first part immediately follows from the above computation. In addition, the second part immediately follows from (1), as desired.
\end{proof}

\setlength{\parindent}{2em}
We are in a position to derive a necessary and sufficient condition for two codewords of $\mathbf{G}$ to belong to the same member of $\Lambda$ when $\mathbf{P}$ is hierarchical.

\setlength{\parindent}{0em}
\begin{proposition}
{Assume that $h_{i}\geqslant2$ for all $i\in\Omega$ and $\mathbf{P}$ is hierarchical. Let $\alpha,\gamma\in \mathbf{G}$, and write $D=\langle \supp(\alpha)\rangle_{\mathbf{\overline{P}}}$, $B=\langle \supp(\gamma)\rangle_{\mathbf{\overline{P}}}$. Then, $\alpha\sim_{\Lambda}\gamma$ if and only if there exists $\lambda\in\Aut(\mathbf{P})$ such that $D=\lambda[B]$ and $h_{i}=h_{\lambda(i)}$, $\omega(i)=\omega(\lambda(i))$ for all $i\in \Omega$.
}
\end{proposition}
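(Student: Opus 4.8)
The plan is to translate the relation $\alpha\sim_{\Lambda}\gamma$ into an identity of one-variable polynomials via (3.8), and then to read off the combinatorial data of $D$ and $B$ from that identity. The backward implication is immediate from Proposition 3.1: if $\lambda\in\Aut(\mathbf{P})$ preserves $h$ and $\omega$ and satisfies $D=\lambda[B]$, then $\mu\triangleq\lambda^{-1}$ also lies in $\Aut(\mathbf{P})$, preserves $h$ and $\omega$, and satisfies $\langle\supp(\gamma)\rangle_{\mathbf{\overline{P}}}=\mu[\langle\supp(\alpha)\rangle_{\mathbf{\overline{P}}}]$, so part (1) of Proposition 3.1 yields $\alpha\sim_{\Lambda}\gamma$. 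The work is therefore in the forward implication.

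Assume $\alpha\sim_{\Lambda}\gamma$, i.e.\ $F(\omega,\alpha)=F(\omega,\gamma)$ by (3.8). First I would compare degrees: by the second part of Proposition 3.2(2) (using $h_i\geq2$) these degrees are $\varpi(\bigcup_{j=1}^{\sigma(D)}W_j)$ and $\varpi(\bigcup_{j=1}^{\sigma(B)}W_j)$. Since each $W_j$ with $1\leqslant j\leqslant m$ is nonempty (a maximal chain meets every level) and $\omega$ is positive, $r\mapsto\varpi(\bigcup_{j=1}^{r}W_j)$ is strictly increasing, so equality of degrees forces $\sigma(D)=\sigma(B)=:r$. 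In the explicit expression of Proposition 3.2(2), the summands indexed by $t\leqslant r-1$ and the two correction sums depend only on $r$ and on the ambient data $(W_j,h_i,\omega(i))$, not on $D$ or $B$; subtracting these common terms and cancelling the common nonzero monomial prefactor $\prod_{i\in\bigcup_{j<r}W_j}h_i\,x^{\omega(i)}$, the identity $F(\omega,\alpha)=F(\omega,\gamma)$ collapses to
\[
\prod_{i\in W_r\cap D}\!\bigl(1-x^{\omega(i)}\bigr)\prod_{i\in W_r-D}\!\bigl((h_i-1)x^{\omega(i)}+1\bigr)=\prod_{i\in W_r\cap B}\!\bigl(1-x^{\omega(i)}\bigr)\prod_{i\in W_r-B}\!\bigl((h_i-1)x^{\omega(i)}+1\bigr).
\]
Because $\mathbf{P}$ is hierarchical, distinct levels are incomparable, $\Aut(\mathbf{P})$ is exactly the group of permutations fixing each $W_j$ setwise, and every ideal of $\mathbf{\overline{P}}$ is determined by its trace on its top level, so $D=(W_r\cap D)\cup\bigcup_{j>r}W_j$ and likewise for $B$. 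Hence, taking $\lambda$ to be the identity off $W_r$ (legitimate since $D$ and $B$ agree there), it suffices to produce a bijection of $W_r$ carrying $W_r\cap B$ onto $W_r\cap D$ and preserving each pair $(h_i,\omega(i))$. Such a bijection exists precisely when, for every value $(h,w)$, the number of $i\in W_r\cap D$ with $(h_i,\omega(i))=(h,w)$ equals the corresponding count for $B$; since $W_r$ is fixed, by complementation this ``type'' match for $W_r\cap D$ is equivalent to the type match for $W_r-D$.

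The main obstacle is thus to extract this type match from the displayed identity, i.e.\ to prove that its two sides have the same multiset of factors. I would argue by unique factorization in $\mathbb{C}[x]$. The factor $1-x^{\omega(i)}$ has a simple zero at $x=1$, while $(h_i-1)x^{\omega(i)}+1$ takes the nonzero value $h_i$ there, so the order of vanishing at $x=1$ gives $|W_r\cap D|=|W_r\cap B|$. Next, since $h_i\geqslant2$, every root of $(h_i-1)x^{\omega(i)}+1$ has modulus $(h_i-1)^{-1/\omega(i)}\leqslant 1$, with equality iff $h_i=2$, whereas every root of $1-x^{\omega(i)}$ has modulus $1$; grouping roots by modulus splits each side canonically into a ``modulus $<1$'' part (the factors with $h_i\geqslant3$) and a ``modulus $1$'' part, and these must match separately. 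From the modulus $<1$ part I would recover $\{(h_i,\omega(i)):i\in W_r-D,\,h_i\geqslant3\}$ from the radii and the regular-polygon (roots-of-$(-1)$) pattern of the roots. The delicate sub-step is the modulus $1$ part, where all factors are products of cyclotomic polynomials: here I would use the cyclotomic factorization, the fact that $\Phi_1=x-1$ divides $1-x^{w}$ but not $1+x^{w}$, and the negative- versus positive-self-reciprocity of $1-x^{w}$ and $1+x^{w}$, to peel off and reconstruct the multisets $\{\omega(i):i\in W_r\cap D\}$ and $\{\omega(i):i\in W_r-D,\,h_i=2\}$.

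Combining the two parts yields the full type match for $W_r-D$, hence for $W_r\cap D$, and the bijection described above then produces the required $\lambda$. I expect the cyclotomic uniqueness in the modulus $1$ part to be the crux of the whole argument, precisely because the case $h_i=2$ keeps those factors on the unit circle and so blurs the clean separation available for $h_i\geqslant3$; the integrality of $\omega$ (ensuring integer exponents and the degree monotonicity) and the hypothesis $h_i\geqslant2$ are used essentially throughout.
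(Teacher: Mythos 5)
Your backward implication, the reduction of $\alpha\sim_{\Lambda}\gamma$ to a polynomial identity via (3.8) and Proposition 3.2, the degree comparison forcing $\sigma(D)=\sigma(B)$, and the complementation remark (type-matching on $W_r\cap D$ versus $W_r\cap B$ is equivalent to type-matching on the complements) all agree with the paper's proof, which at this point invokes its Appendix Proposition A.1(2) to produce the type-preserving bijection. Your modulus-$<1$ analysis is also sound in substance: it is exactly Proposition A.1(1), provable by induction on the largest exponent. The genuine gap is at the step you yourself call the crux, the modulus-one part. There you propose to ``prove that its two sides have the same multiset of factors'' and to ``reconstruct'' the multisets $\{\omega(i):i\in W_r\cap D\}$ and $\{\omega(i):i\in W_r-D,\,h_i=2\}$ from cyclotomic data of the product. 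This is impossible: since $x^{2w}-1=(x^{w}-1)(x^{w}+1)$, the map from such multiset pairs to polynomials is not injective. Concretely, the data $(\{2\},\emptyset)$ and the data $(\{1\},\{1\})$ produce the \emph{identical} polynomial $x^{2}-1$; they have the same order of vanishing at $x=1$ (so your constraint $|W_r\cap D|=|W_r\cap B|$ does not exclude the collision), the same degree, and trivially the same value of every invariant --- cyclotomic multiplicities, self-reciprocity signs, and all --- because they are the same polynomial. Hence no argument that sees only the modulus-one polynomial can recover the two multisets, and the peeling procedure you sketch cannot exist.

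What rescues the proposition is information your decomposition throws away before reaching the crux: the two sides are not arbitrary products of factors $x^{w}\pm1$, but are indexed by subsets of the \emph{same} set $W_r$ carrying the \emph{same} data $(\omega(i),h_i)$, with only the ``flipped'' subsets $W_r\cap D$ and $W_r\cap B$ differing; under this rigidity an element with $\omega=2$ cannot masquerade as two elements with $\omega=1$, since the same index must contribute a factor with its own exponent to both sides. The paper's Proposition A.1(2) is engineered exactly for this: a gcd argument first splits off the $h_i\geqslant3$ factors (the counterpart of your modulus-$<1$ step, handled by part (1) applied to equation (A.1)), and the residual $h_i=2$ factors are then rearranged into the cross-multiplied identity (A.2), in which each index contributes $x^{n_i}-1$ to one side and $x^{n_i}+1$ to the other; an induction on the largest exponent, evaluating at $a_\gamma^{1/n_\gamma}$ times a primitive $2n_\gamma$-th root of unity, then yields the bijection. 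To complete your proof you would have to re-import this rigidity into the modulus-one step, i.e.\ essentially reprove Proposition A.1(2); as written, that step fails.
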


\begin{proof}
Since the ``if'' part follows from Proposition 3.1, it remains to establish the ``only if'' part. Suppose that $\alpha\sim_{\Lambda}\gamma$, and write $r=\sigma(D)$, $s=\sigma(B)$. By (3.8), we have $F(\omega,\alpha)=F(\omega,\gamma)$. From Proposition 3.2, we deduce that $\varpi(\bigcup_{j=1}^{r}W_{j})=\deg(F(\omega,\alpha))=\deg(F(\omega,\gamma))=\varpi(\bigcup_{j=1}^{s}W_{j})$, which implies that $r=s$. Now Proposition 3.2 further implies that
\begin{eqnarray*}
\begin{split}
&\left(\prod_{i\in W_{r}\cap D}(x^{\omega(i)}-1)\right)\left(\prod_{i\in W_{r}-D}(x^{\omega(i)}+(h_{i}-1)^{-1})\right)\\
&=\left(\prod_{i\in W_{r}\cap B}(x^{\omega(i)}-1)\right)\left(\prod_{i\in W_{r}-B}(x^{\omega(i)}+(h_{i}-1)^{-1})\right).
\end{split}
\end{eqnarray*}
By Proposition A.1, which we state and prove in the appendix, we can choose a bijection $\varepsilon:W_{r}\cap B\longrightarrow W_{r}\cap D$ such that $h_{i}=h_{\varepsilon(i)}$, $\omega(i)=\omega(\varepsilon(i))$ for all $i\in W_{r}\cap B$. Now we can further choose a permutation $\varepsilon_1$ of $W_{r}$ such that $\varepsilon_1\mid_{W_{r}\cap B}=\varepsilon$ and $h_{i}=h_{\varepsilon_1(i)}$, $\omega(i)=\omega(\varepsilon_1(i))$ for all $i\in W_{r}$. Define $\lambda:\Omega\longrightarrow\Omega$ as $\lambda\mid_{W_{r}}=\varepsilon_1$ and $\lambda\mid_{\Omega-W_{r}}=\id_{\Omega-W_{r}}$. Since $\mathbf{P}$ is hierarchical, it is straightforward to verify that $\lambda\in\Aut(\mathbf{P})$, $\lambda[B]=D$, and $h_{i}=h_{\lambda(i)}$, $\omega(i)=\omega(\lambda(i))$ for all $i\in \Omega$, as desired.
\end{proof}

\setlength{\parindent}{2em}
Now we give necessary and sufficient conditions for $\mathcal{Q}(\mathbf{H},\mathbf{P},\omega)$ to be reflexive when $\mathbf{P}$ is hierarchical. The following is the main result of this subsection.

\setlength{\parindent}{0em}
\begin{theorem}
{Assume that $h_{i}\geqslant2$ for all $i\in\Omega$ and $\mathbf{P}$ is hierarchical. Then, $\Lambda$ is finer than $\mathcal{Q}(\mathbf{G},\mathbf{\overline{P}},\omega)$. Moreover, the following four statements are equivalent to each other:

{\bf{(1)}}\,\,$(\mathbf{P},\omega)$ satisfies the UDP, and for any $u,v\in\Omega$ such that $\len_{\mathbf{P}}(u)=\len_{\mathbf{P}}(v)$ and $\omega(u)=\omega(v)$, it holds that $h_{u}=h_{v}$;

{\bf{(2)}}\,\,$(\mathcal{Q}(\mathbf{G},\mathbf{\overline{P}},\omega),\mathcal{Q}(\mathbf{H},\mathbf{P},\omega)$ is mutually dual with respect to $f$;

{\bf{(3)}}\,\,$\mathcal{Q}(\mathbf{H},\mathbf{P},\omega)$ is reflexive;

{\bf{(4)}}\,\,$\Lambda=\mathcal{Q}(\mathbf{G},\mathbf{\overline{P}},\omega)$.
}
\end{theorem}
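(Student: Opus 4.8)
The plan is to first dispose of the auxiliary ``finer than'' assertion using Proposition 3.3, and then to prove the fourfold equivalence by establishing the cycle $(1)\Rightarrow(2)\Rightarrow(3)\Rightarrow(4)\Rightarrow(1)$. For the auxiliary claim, note that the standing hypotheses $h_{i}\geqslant2$ and $\mathbf{P}$ hierarchical are exactly those of Proposition 3.3, so $\alpha\sim_{\Lambda}\gamma$ holds iff there is $\lambda\in\Aut(\mathbf{P})$ with $\langle\supp(\alpha)\rangle_{\mathbf{\overline{P}}}=\lambda[\langle\supp(\gamma)\rangle_{\mathbf{\overline{P}}}]$ and $\omega(i)=\omega(\lambda(i))$ for all $i$. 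Since such a $\lambda$ is a bijection preserving $\omega$, applying $\varpi$ gives $\varpi(\langle\supp(\alpha)\rangle_{\mathbf{\overline{P}}})=\varpi(\langle\supp(\gamma)\rangle_{\mathbf{\overline{P}}})$, i.e. $\wt_{(\mathbf{\overline{P}},\omega)}(\alpha)=\wt_{(\mathbf{\overline{P}},\omega)}(\gamma)$ by (2.8). Hence $\alpha\sim_{\mathcal{Q}(\mathbf{G},\mathbf{\overline{P}},\omega)}\gamma$, so $\Lambda$ is finer than $\mathcal{Q}(\mathbf{G},\mathbf{\overline{P}},\omega)$.

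The implication $(1)\Rightarrow(2)$ is precisely the content of Theorem 3.2, whose proof already establishes the mutual duality of $(\mathcal{Q}(\mathbf{G},\mathbf{\overline{P}},\omega),\mathcal{Q}(\mathbf{H},\mathbf{P},\omega))$. The implication $(2)\Rightarrow(3)$ is immediate from Lemma 2.1: for a mutually dual pair the $\mathbf{H}$-partition is (Fourier-)reflexive, here $\mathcal{Q}(\mathbf{H},\mathbf{P},\omega)$.

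For $(3)\Rightarrow(4)$, reflexivity of $\mathcal{Q}(\mathbf{H},\mathbf{P},\omega)$ yields $|\mathcal{Q}(\mathbf{H},\mathbf{P},\omega)|=|\Lambda|$ by Lemma 2.1. Combined with the auxiliary claim (that $\Lambda$ is finer than $\mathcal{Q}(\mathbf{G},\mathbf{\overline{P}},\omega)$), it suffices to show the two $\mathcal{Q}$-partitions have equally many parts, for then ``finer plus equal cardinality'' forces $\Lambda=\mathcal{Q}(\mathbf{G},\mathbf{\overline{P}},\omega)$. The number of parts of each $\mathcal{Q}$ equals the number of distinct realizable weights; because $h_{i}\geqslant2$, every ideal is a support closure, so these weight sets are $\{\varpi(I)\mid I\in\mathcal{I}(\mathbf{P})\}$ and $\{\varpi(D)\mid D\in\mathcal{I}(\mathbf{\overline{P}})\}$. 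Using $\mathcal{I}(\mathbf{\overline{P}})=\{\Omega-I\mid I\in\mathcal{I}(\mathbf{P})\}$ and $\varpi(\Omega-I)=\varpi(\Omega)-\varpi(I)$, the map $t\mapsto\varpi(\Omega)-t$ is a bijection between the two weight sets, giving $|\mathcal{Q}(\mathbf{H},\mathbf{P},\omega)|=|\mathcal{Q}(\mathbf{G},\mathbf{\overline{P}},\omega)|$, as needed.

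The crux is $(4)\Rightarrow(1)$. Given (4), for any $D,B\in\mathcal{I}(\mathbf{\overline{P}})$ with $\varpi(D)=\varpi(B)$ I choose $\alpha,\gamma\in\mathbf{G}$ realizing these support closures (possible since $h_{i}\geqslant2$); then $\alpha\sim_{\mathcal{Q}(\mathbf{G},\mathbf{\overline{P}},\omega)}\gamma$, hence $\alpha\sim_{\Lambda}\gamma$ by (4), and Proposition 3.3 supplies $\lambda\in\Aut(\mathbf{P})=\Aut(\mathbf{\overline{P}})$ with $D=\lambda[B]$, $\omega(i)=\omega(\lambda(i))$ and $h_{i}=h_{\lambda(i)}$ for all $i$. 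The $\omega$-preserving clause is exactly the UDP for $(\mathbf{\overline{P}},\omega)$, hence for $(\mathbf{P},\omega)$ by Lemma 2.2. For the weight condition, take $u,v$ with $\len_{\mathbf{P}}(u)=\len_{\mathbf{P}}(v)=j$ and $\omega(u)=\omega(v)$, and run the argument with the particular ideals $D=(\bigcup_{l>j}W_{l})\cup\{u\}$ and $B=(\bigcup_{l>j}W_{l})\cup\{v\}$, which are ideals of $\mathbf{\overline{P}}$ precisely because $\mathbf{P}$ is hierarchical and which satisfy $\varpi(D)=\varpi(B)$. Since $\lambda$ preserves $\len_{\mathbf{P}}$, it stabilizes each level $W_{j}$, so intersecting $D=\lambda[B]$ with $W_{j}$ gives $\{u\}=\lambda[B\cap W_{j}]=\{\lambda(v)\}$; thus $\lambda(v)=u$ and $h_{v}=h_{\lambda(v)}=h_{u}$. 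I expect this single-level extraction to be the delicate point, as it is where the layered structure of a hierarchical poset and the relation $h_{i}=h_{\lambda(i)}$ must be combined to pin down $\lambda(v)=u$; the remaining steps are bookkeeping built on Propositions 3.2--3.3, Theorem 3.2 and Lemma 2.1.
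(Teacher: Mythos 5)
Your proof is correct and follows essentially the same route as the paper's: the auxiliary claim and $(4)\Rightarrow(1)$ via Proposition 3.3 together with Lemma 2.2, $(1)\Rightarrow(2)$ via Theorem 3.1, $(2)\Rightarrow(3)$ via Lemma 2.1, and $(3)\Rightarrow(4)$ by combining reflexivity with the equality $|\mathcal{Q}(\mathbf{G},\mathbf{\overline{P}},\omega)|=|\mathcal{Q}(\mathbf{H},\mathbf{P},\omega)|$. Your explicit ideals $(\bigcup_{l>j}W_{l})\cup\{u\}$ and $(\bigcup_{l>j}W_{l})\cup\{v\}$ are exactly the paper's $\langle\{u\}\rangle_{\mathbf{\overline{P}}}$ and $\langle\{v\}\rangle_{\mathbf{\overline{P}}}$, and your level-stabilization argument for $\lambda(v)=u$ just makes explicit what the paper leaves terse.
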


\begin{proof}
First of all, it follows from Proposition 3.3 that $\Lambda$ is finer than $\mathcal{Q}(\mathbf{G},\mathbf{\overline{P}},\omega)$. Since $\mathcal{I}(\mathbf{\overline{P}})=\{\Omega-I\mid I\in\mathcal{I}(\mathbf{P})\}$ and $h_{i}\geqslant2$ for all $i\in\Omega$, we have $|\mathcal{Q}(\mathbf{G},\mathbf{\overline{P}},\omega)|=|\mathcal{Q}(\mathbf{H},\mathbf{P},\omega)|$. Now $(1)\Longrightarrow(2)$ follows from Theorem 3.1, and $(2)\Longrightarrow(3)$ follows from Lemma 2.1. Suppose that $\mathcal{Q}(\mathbf{H},\mathbf{P},\omega)$ is reflexive. Then, by Lemma 2.1, we have $|\Lambda|=|\mathcal{Q}(\mathbf{H},\mathbf{P},\omega)|=|\mathcal{Q}(\mathbf{G},\mathbf{\overline{P}},\omega)|$, which, along with the fact that $\Lambda$ is finer than $\mathcal{Q}(\mathbf{G},\mathbf{\overline{P}},\omega)$, implies that $\Lambda=\mathcal{Q}(\mathbf{G},\mathbf{\overline{P}},\omega)$, which further establishes $(3)\Longrightarrow(4)$. Therefore it remains to prove $(4)\Longrightarrow(1)$.

$(4)\Longrightarrow(1)$\,\,First, we let $D,B\in\mathcal{I}(\mathbf{\overline{P}})$ with $\varpi(D)=\varpi(B)$. Since $h_{i}\geqslant2$ for all $i\in\Omega$, we can choose $\alpha,\gamma\in\mathbf{G}$ such that $\langle \supp(\alpha)\rangle_{\mathbf{\overline{P}}}=D$, $\langle \supp(\gamma)\rangle_{\mathbf{\overline{P}}}=B$. From $\varpi(D)=\varpi(B)$, we infer that $\wt_{(\mathbf{\overline{P}},\omega)}(\alpha)=\wt_{(\mathbf{\overline{P}},\omega)}(\gamma)$, which further implies that $\alpha\sim_{\Lambda}\gamma$. By Proposition 3.3, we can choose $\lambda\in\Aut(\mathbf{P})$ such that $D=\lambda[B]$ and $h_{i}=h_{\lambda(i)}$, $\omega(i)=\omega(\lambda(i))$ for all $i\in \Omega$. It follows from Lemma 2.2 that $(\mathbf{P},\omega)$ satisfies the UDP. Next, we let $u,v\in\Omega$ such that $\len_{\mathbf{P}}(u)=\len_{\mathbf{P}}(v)$ and $\omega(u)=\omega(v)$. Consider $B_1=\langle\{u\}\rangle_{\mathbf{\overline{P}}}$, $D_1=\langle\{v\}\rangle_{\mathbf{\overline{P}}}$. Since $\mathbf{P}$ is hierarchical, it is straightforward to verify that $B_1,D_1\in\mathcal{I}(\mathbf{\overline{P}})$, $\varpi(B_1)=\varpi(D_1)$. Hence we can choose $\mu\in\Aut(\mathbf{P})$ such that $D_1=\mu[B_1]$ and $h_{i}=h_{\mu(i)}$ for all $i\in \Omega$. Since $\mu\in\Aut(\mathbf{P})$, we have $v=\mu(u)$, which further implies that $h_{u}=h_{v}$, as desired.
\end{proof}

\begin{remark}
If $\omega$ is the constant $1$ map, then Theorem 3.2 recovers [18, Theorem 5.5] and part of [18, Theorem 5.4].
\end{remark}

\section{Partitions induced by combinatorial metric}

\setlength{\parindent}{2em}
Throughout this section, for a covering $T$ of $\Omega$, we define $\omega_{T}:2^{\Omega}\longrightarrow\mathbb{N}$ as in (2.9), and for any $\alpha\in\mathbf{G}$, $\beta\in\mathbf{H}$, we let $\wt_{T}(\alpha)\triangleq\omega_{T}(\supp(\alpha))$, $\wt_{T}(\beta)\triangleq\omega_{T}(\supp(\beta))$, as in (2.11). For any polynomial $g\in\mathbb{C}[x]$, we let $g_{[i]}$ denote the coefficient of $x^{i}$ in $g$.

\subsection{Sufficient but not necessary conditions for reflexivity}

\setlength{\parindent}{2em}
In this subsection, we prove the following theorem.

\setlength{\parindent}{0em}
\begin{theorem}
{Assume that $h_{i}\geqslant2$ for all $i\in\Omega$. Let $T$ be a covering of $\Omega$ such that $(T,\subseteq)$ is an anti-chain. Then, the following three statements are equivalent to each other:

{\bf{(1)}}\,\,$\mathcal{CO}(\mathbf{G},T)$ is finer than $\textbf{\textit{l}}(\mathcal{CO}(\mathbf{H},T))$;

{\bf{(2)}}\,\,$\mathcal{CO}(\mathbf{G},T)=\textbf{\textit{l}}(\mathcal{CO}(\mathbf{H},T))$;

{\bf{(3)}}\,\,$T$ is a partition of $\Omega$, and $\prod_{i\in U}h_{i}=\prod_{j\in V}h_{j}$ for all $U,V\in T$.
}
\end{theorem}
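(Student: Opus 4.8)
The plan is to attach to each $\alpha\in\mathbf{G}$ a single polynomial and to read off both partitions from it. Using the componentwise orthogonality relation $\sum_{\beta_{i}\in H_{i}}\pi_{i}(\alpha_{i},\beta_{i})=h_{i}$ when $\alpha_{i}=1_{G_{i}}$ and $=0$ otherwise, one computes $\sum_{\supp(\beta)=A}f(\alpha,\beta)=(-1)^{|A\cap\supp(\alpha)|}\prod_{i\in A\setminus\supp(\alpha)}(h_{i}-1)$, so the left dual sums depend on $\alpha$ only through $S:=\supp(\alpha)$:
\[
F_{T}(\alpha):=\sum_{l\geqslant0}\Big(\sum_{\wt_{T}(\beta)=l}f(\alpha,\beta)\Big)x^{l}=\sum_{A\subseteq\Omega}(-1)^{|A\cap S|}\Big(\prod_{i\in A\setminus S}(h_{i}-1)\Big)x^{\omega_{T}(A)}=:\Phi(S).
\]
Since $h_{i}\geqslant2$, every subset of $\Omega$ occurs as a support, so $\alpha\sim_{\textbf{\textit{l}}(\mathcal{CO}(\mathbf{H},T))}\gamma$ iff $\Phi(\supp(\alpha))=\Phi(\supp(\gamma))$, while $\alpha\sim_{\mathcal{CO}(\mathbf{G},T)}\gamma$ iff $\omega_{T}(\supp(\alpha))=\omega_{T}(\supp(\gamma))$. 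Thus statement (1) is precisely the assertion that $\Phi(S)$ depends only on $\omega_{T}(S)$.

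The implication $(2)\Rightarrow(1)$ is immediate. For $(1)\Rightarrow(2)$ I would argue by cardinalities: adding one coordinate changes $\omega_{T}$ by at most $1$, so $\omega_{T}$ attains every value in $[0,\omega_{T}(\Omega)]$, whence both $\mathcal{CO}(\mathbf{G},T)$ and $\mathcal{CO}(\mathbf{H},T)$ have exactly $\omega_{T}(\Omega)+1$ members. By Lemma 2.1, $|\mathcal{CO}(\mathbf{H},T)|\leqslant|\textbf{\textit{l}}(\mathcal{CO}(\mathbf{H},T))|$, while (1) forces $|\mathcal{CO}(\mathbf{G},T)|\geqslant|\textbf{\textit{l}}(\mathcal{CO}(\mathbf{H},T))|$; these squeeze all three cardinalities together, and a finer partition with the same number of members as a coarser one must coincide with it, giving (2). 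For $(3)\Rightarrow(1)$, when $T=\{U_{1},\dots,U_{N}\}$ is a partition one has $\omega_{T}(A)=\#\{t:A\cap U_{t}\neq\emptyset\}$, so the sum defining $\Phi(S)$ factors over blocks; writing $H=\prod_{i\in U_{t}}h_{i}$, this yields
\[
\Phi(S)=(1-x)^{\,\omega_{T}(S)}\big(1+(H-1)x\big)^{\,N-\omega_{T}(S)},
\]
which manifestly depends only on $\omega_{T}(S)$.

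It remains to prove $(1)\Rightarrow(3)$, which is where the real work lies. Granting for the moment that $T$ is a partition, the equal-product condition is easy: every full block $U_{s}$ has $\omega_{T}(U_{s})=1$, so by (1) all $\Phi(U_{s})$ coincide; the (unconstrained) block factorization gives $\Phi(U_{s})=(1-x)\prod_{t\neq s}(1+(H_{t}-1)x)$ with $H_{t}=\prod_{i\in U_{t}}h_{i}\geqslant2$, and equating these in the unique factorization domain $\mathbb{C}[x]$ forces $H_{s}=H_{s'}$ for all $s,s'$. The genuine obstacle is to show that (1) forces $T$ to be a partition, i.e. that its blocks are pairwise disjoint. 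Here the anti-chain hypothesis is essential: a block $U$ is the unique member of $T$ containing it, so $U$ together with any outside coordinate already has $T$-weight $2$.

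To rule out overlaps I would compare the polynomials $\Phi(\{i\})$ over all singletons (all of $T$-weight $1$, hence all equal under (1)) through their top coefficients. Writing $N=\omega_{T}(\Omega)$, a sign-cancellation argument (pair each maximum-weight $A\not\ni i$ with $A\cup\{i\}$) shows that the coefficient of $x^{N}$ in $\Phi(\{i\})$ equals $-\sum_{A}\prod_{l\in A\setminus\{i\}}(h_{l}-1)$, summed over those $A\ni i$ with $\omega_{T}(A)=N$ and $\omega_{T}(A\setminus\{i\})=N-1$; as all $h_{l}\geqslant2$ these terms are strictly negative and cannot cancel, so this coefficient is nonzero exactly when $i$ is \emph{critical} for some maximum-weight set. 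When $T$ is a partition every coordinate is critical (delete a transversal through $i$), whereas an overlap typically produces coordinates that are critical for no maximum-weight set, so their top coefficients vanish and $\Phi(\{i\})$ has strictly smaller degree than that of a coordinate lying in a single block, contradicting (1). Making this dichotomy airtight for an arbitrary overlapping anti-chain covering (where the maximal weight may be carried simultaneously by several configurations, so that all the top coefficients could vanish) is the main difficulty: one must track the criticality structure by an inclusion–exclusion over the blocks containing $i$, descending to lower-degree coefficients of $\Phi(\{i\})$ whenever the top ones happen to agree. Once disjointness is secured, the equal-product step above together with $(3)\Rightarrow(1)\Rightarrow(2)$ closes the equivalence.
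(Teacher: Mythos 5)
Your reductions for (1)$\Leftrightarrow$(2) (the cardinality squeeze via Lemma 2.1), for (3)$\Rightarrow$(1) (the block factorization $\Phi(S)=(1-x)^{\omega_T(S)}(1+(H-1)x)^{N-\omega_T(S)}$), and for the equal-product condition once disjointness is granted (comparing $\Phi(U_s)$ across blocks and using unique factorization in $\mathbb{C}[x]$) are all correct; the first coincides with the paper's own argument, and the latter two are sound substitutes for the paper's reduction to Theorem 3.2. However, the step you yourself flag as open --- that (1) forces $T$ to be a partition --- is a genuine gap, and it is precisely the crux of the theorem. Your plan compares only singletons $\Phi(\{i\})$ through their top coefficients, and, as you concede, nothing prevents all of those top coefficients from vanishing simultaneously (no coordinate being critical for any maximum-weight set), in which case your dichotomy collapses and the proposed ``descent to lower-degree coefficients'' is not an argument but a restatement of the problem.

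The paper closes this step with a different and complete comparison. If $T$ is not a partition, pick overlapping blocks $A,B\in T$ with $B\not\subseteq A$, then $u\in B-A$ and $v\in A\cap B$, and compare codewords $\alpha,\gamma$ with $\supp(\alpha)=\{u\}$ and $\supp(\gamma)=\{u,v\}$; both have $T$-weight $1$, so under (1) they are $\textbf{\textit{l}}(\mathcal{CO}(\mathbf{H},T))$-equivalent, and in particular the truncated sums $a=\sum_{\wt_T(\beta)\leqslant1}f(\alpha,\beta)$ and $b=\sum_{\wt_T(\beta)\leqslant1}f(\gamma,\beta)$ must coincide. A direct cancellation (all terms with $v\notin I$ drop out of the difference) gives
$$a-b=h_v\sum_{\left(J\subseteq\Omega-\{u,v\},\ \omega_T(J\cup\{v\})\leqslant1,\ \omega_T(J\cup\{u,v\})\geqslant2\right)}\ \prod_{i\in J}(h_i-1),$$
a sum of nonnegative terms, and the term $J=A-\{v\}$ is present and strictly positive: indeed $\omega_T(A)=1$, while $\omega_T(A\cup\{u\})\geqslant2$, because $A\cup\{u\}\subseteq C$ for some $C\in T$ would force $A\subsetneqq C$, contradicting the anti-chain hypothesis. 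Hence $a-b\geqslant h_v>0$, a contradiction. The idea you are missing is thus twofold: compare a singleton against a two-element set containing an overlap point (same $T$-weight, different support sizes), rather than singletons against singletons; and test a partial sum of coefficients of the weight enumerator rather than a leading coefficient --- positivity then comes for free, with no cancellation analysis needed.
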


\begin{proof}
First of all, since $h_{i}\geqslant2$ for all $i\in\Omega$, we have $|\mathcal{CO}(\mathbf{G},T)|=|\mathcal{CO}(\mathbf{H},T)|=|\{\omega_{T}(A)\mid A\subseteq \Omega\}|$, which, along with Lemma 2.1, implies that $(1)\Longleftrightarrow(2)$. Next, suppose that (2) holds true, and we will show that $T$ is a partition of $\Omega$. By way of contradiction, we assume that $T$ is not a partition of $\Omega$. Since $(T,\subseteq)$ is an anti-chain, we have $\emptyset\not\in T$. Hence we can choose $A,B\in T$ such that $A\cap B\neq\emptyset$, $B\nsubseteq A$. Therefore we can further choose $u\in B-A$, $v\in A\cap B$. Apparently, we have $\omega_{T}(\{u\})=\omega_{T}(\{u,v\})=1$. Since $h_{i}\geqslant2$ for all $i\in\Omega$, we can choose $\alpha,\gamma\in\mathbf{G}$ such that $\supp(\alpha)=\{u\}$, $\supp(\gamma)=\{u,v\}$. Applying (3.2) to the anti-chain $(\Omega,=)$, we have
$$a\triangleq\sum_{(\beta\in\mathbf{H},\wt_{T}(\beta)\leqslant1)}f(\alpha,\beta)=\sum_{(I\subseteq\Omega,\omega_{T}(I)\leqslant1)}(-1)^{|I\cap \{u\}|}\left(\prod_{i\in I-\{u\}}(h_{i}-1)\right),$$
$$b\triangleq\sum_{(\beta\in\mathbf{H},\wt_{T}(\beta)\leqslant1)}f(\gamma,\beta)=\sum_{(I\subseteq\Omega,\omega_{T}(I)\leqslant1)}(-1)^{|I\cap \{u,v\}|}\left(\prod_{i\in I-\{u,v\}}(h_{i}-1)\right).$$
By $\wt_{T}(\alpha)=\wt_{T}(\gamma)=1$ and (2), we have $\alpha\sim_{\textbf{\textit{l}}(\mathcal{CO}(\mathbf{H},T))}\gamma$, which further implies that $a=b$. On the other hand, some straightforward computation yields that
\begin{equation}a-b=\left(\sum_{(J\subseteq\Omega-\{u,v\},\omega_{T}(J\cup\{v\})\leqslant1,\omega_{T}(J\cup\{u,v\})\geqslant2)}~\prod_{i\in J}(h_{i}-1)\right)h_{v}.\end{equation}
Noticing that $u\not\in A$, we have $A-\{v\}\subseteq\Omega-\{u,v\}$. Since $v\in A$, $A\in T$, we have $(A-\{v\})\cup\{v\}=A$, $\omega_{T}(A)=1$. Again by $v\in A$, we have $(A-\{v\})\cup\{u,v\}=A\cup\{u\}$. If $\omega_{T}(A\cup\{u\})\leqslant1$, then we can choose $C\in T$ such that $A\cup\{u\}\subseteq C$, which, along with $u\not\in A$, further implies that $A\subsetneqq C$, which is impossible since $A,C\in T$, $(T,\subseteq)$ is an anti-chain. It then follows that $\omega_{T}(A\cup\{u\})\geqslant2$. By the above discussion and $h_{i}\geqslant2$ for all $i\in\Omega$, (4.1) immediately implies that $a-b\geqslant1$, a contradiction to $a=b$, as desired. Therefore we have shown that $T$ is a partition of $\Omega$.

\hspace*{4mm}\,\,Now we prove $(2)\Longleftrightarrow(3)$. By the discussion in the previous paragraph, we assume that $T$ is a partition of $\Omega$. Let $\mathbf{G}_1=\prod_{A\in T}(\prod_{i\in A}G_{i})$, $\mathbf{H}_1=\prod_{A\in T}(\prod_{i\in A}H_{i})$. For any $A\in T$, define the non-degenerate pairing $\varsigma_{A}:(\prod_{i\in A}G_{i})\times(\prod_{i\in A}H_{i})\longrightarrow\mathbb{C}^{*}$ as $\varsigma_{A}(\gamma,\theta)=\prod_{i\in A}\pi_{i}(\gamma_{i},\theta_{i})$. Moreover, define the non-degenerate pairing $f_1:\mathbf{G}_1\times\mathbf{H}_1\longrightarrow\mathbb{C}^{*}$ as $f_1(\lambda,\mu)=\prod_{A\in T}\varsigma_{A}(\lambda_{A},\mu_{A})$, and let $\widetilde{\omega}$ denote the constant $1$ map defined on $T$. Since $T$ is a partition of $\Omega$, some straightforward computation implies that $\mathcal{CO}(\mathbf{G},T)=\textbf{\textit{l}}(\mathcal{CO}(\mathbf{H},T))$ holds true if and only if $\mathcal{Q}(\mathbf{G}_1,(T,=),\widetilde{\omega})$ is the left dual partition of $\mathcal{Q}(\mathbf{H}_1,(T,=),\widetilde{\omega})$ with respect to $f_1$. For any $A\in T$, by $A\neq\emptyset$ and $h_{i}\geqslant2$ for all $i\in\Omega$, we have $|\prod_{i\in A}H_{i}|=\prod_{i\in A}h_{i}\geqslant2$. It follows from applying Theorem 3.2 to $\mathbf{G}_1$, $\mathbf{H}_1$, $f_1$ and $((T,=),\widetilde{\omega})$ that $\mathcal{Q}(\mathbf{G}_1,(T,=),\widetilde{\omega})$ is the left dual partition of $\mathcal{Q}(\mathbf{H}_1,(T,=),\widetilde{\omega})$ with respect to $f_1$ if and only if $\prod_{i\in U}h_{i}=\prod_{j\in V}h_{j}$ for all $U,V\in T$, which further concludes the proof of $(2)\Longleftrightarrow(3)$.
\end{proof}

\setlength{\parindent}{2em}
We remark that by Lemma 2.1, each of (1)--(3) of Theorem 4.1 is a sufficient condition for $\mathcal{CO}(\mathbf{H},T)$ to be reflexive. We will show in Section 4.2 that such sufficient conditions are not necessary.

\subsection{Non-reflexive partitions of the form $\mathcal{CO}(\mathbf{H},\mathcal{P}(k,\Omega))$}

\setlength{\parindent}{2em}
Throughout this subsection, we fix $q\in\mathbb{Z}^{+}$ such that $q\geqslant2$.

From now on, we will focus on the $\mathcal{P}(k,\Omega)$-combinatorial metric, where $k\in[1,|\Omega|]$. In addition, we will always assume that all the $H_{i}'s$ have order $q$. Such an additional assumption will enable us to relate the partitions with the well known Krawtchouk polynomials, which we first recall in the following definition (see, e.g., \cite{9,25,32}).

\setlength{\parindent}{0em}
\begin{definition}
{For any $(n,k)\in\mathbb{N}\times\mathbb{N}$, define the Krawtchouk polynomial $\mathbf{KU}_{(n,k)}$ as
$$\mathbf{KU}_{(n,k)}=\frac{(-1)^{k}}{k!}\sum_{t=0}^{k}\binom{k}{t}(q-1)^{k-t}\left(\prod_{i=0}^{t-1}(x-i)\right)\left(\prod_{i=0}^{k-t-1}(x-n+i)\right).$$
}
\end{definition}

\setlength{\parindent}{2em}
We collect all the properties of the Krawtchouk polynomials that we need in the following lemma.

\setlength{\parindent}{0em}
\begin{lemma}
{{\bf{(1)}}\,\,Let $(n,k)\in\mathbb{N}\times\mathbb{N}$. Then, we have $\deg(\mathbf{KU}_{(n,k)})=k$. Moreover, for any $s\in[0,n]$, it holds that
$$\mathbf{KU}_{(n,k)}(s)=\sum_{t=0}^{k}(-1)^{t}(q-1)^{k-t}\binom{s}{t}\binom{n-s}{k-t}=((1-x)^{s}(1+(q-1)x)^{n-s})_{[k]}.$$
{\bf{(2)}}\,\,Let $n\in\mathbb{Z}^{+}$, $k\in\mathbb{N}$. Then, for any $s\in[1,n]$, we have $\sum_{l=0}^{k}\mathbf{KU}_{(n,l)}(s)=\mathbf{KU}_{(n-1,k)}(s-1)$.

{\bf{(3)}}\,\,Suppose that $q=2$. Let $(n,k)\in\mathbb{N}\times\mathbb{N}$. Then, for any $s\in[0,n]$, it holds that $\mathbf{KU}_{(n,k)}(n-s)=(-1)^{k}\mathbf{KU}_{(n,k)}(s)$.

{\bf{(4)}}\,\,Let $n\in\mathbb{Z}^{+}$, $k\in[1,n]$. Then, $\mathbf{KU}_{(n,k)}$ has $k$ distinct roots in $\mathbb{R}$, all of which lie between $0$ and $n$. Moreover, ${\mathbf{KU}_{(n,k)}}^{'}$ has $k-1$ distinct roots in $\mathbb{R}$, all of which lie between the smallest root and the largest root of $\mathbf{KU}_{(n,k)}$.

{\bf{(5)}}\,\,Fix $k\in\mathbb{Z}^{+}$. For any $n\in\mathbb{N}$ such that $n\geqslant k$, let $u_{(n)}$ denote the smallest root of $\mathbf{KU}_{(n,k)}$. Then, the sequence $(\frac{u_{(n)}}{n}\mid n\in\mathbb{N},n\geqslant k)$ converges to $\frac{q-1}{q}$.
}
\end{lemma}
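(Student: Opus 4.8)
The plan is to rescale the variable by a factor of $n$ so that the smallest root $u_{(n)}$ is turned into a root in the fixed interval $(0,1)$, and then to show that the rescaled polynomials converge, uniformly on $[0,1]$, to a polynomial whose only root is $\frac{q-1}{q}$. Concretely, for $n\geqslant k$ I would set
\[ g_{n}(y)\triangleq\frac{k!}{(-1)^{k}n^{k}}\,\mathbf{KU}_{(n,k)}(ny). \]
Since $g_{n}$ is a nonzero scalar multiple of $\mathbf{KU}_{(n,k)}(ny)$, its roots are exactly $\{r/n\mid r\text{ a root of }\mathbf{KU}_{(n,k)}\}$; in particular $u_{(n)}/n$ is the smallest root of $g_{n}$, and by part (4) of Lemma 4.1 it lies in the open interval $(0,1)$.

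Next I would compute the limit of $g_{n}$. Substituting $x=ny$ into the formula of Definition 4.1 and factoring out $n^{t}$ and $n^{k-t}$ from the two products gives
\[ g_{n}(y)=\sum_{t=0}^{k}\binom{k}{t}(q-1)^{k-t}\left(\prod_{i=0}^{t-1}\Bigl(y-\tfrac{i}{n}\Bigr)\right)\left(\prod_{i=0}^{k-t-1}\Bigl((y-1)+\tfrac{i}{n}\Bigr)\right). \]
As $n\to\infty$ each factor $y-\frac{i}{n}$ tends to $y$ and each factor $(y-1)+\frac{i}{n}$ tends to $y-1$, so coefficientwise
\[ g_{n}\longrightarrow G(y)\triangleq\sum_{t=0}^{k}\binom{k}{t}(q-1)^{k-t}y^{t}(y-1)^{k-t}=\bigl(y+(q-1)(y-1)\bigr)^{k}=\bigl(qy-(q-1)\bigr)^{k}. \]
Because $g_{n}$ and $G$ all have fixed degree $k$, this coefficientwise convergence upgrades to uniform convergence on the compact set $[0,1]$.

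Finally I would extract the limit of $u_{(n)}/n$ by a compactness argument. The polynomial $G$ has the single real root $y_{0}=\frac{q-1}{q}$, which lies in $(0,1)$ since $q\geqslant2$. The sequence $(u_{(n)}/n\mid n\geqslant k)$ is contained in the compact set $[0,1]$, so it suffices to show that $\frac{q-1}{q}$ is its only subsequential limit. If $u_{(n)}/n\to c$ along some subsequence, then, combining the uniform convergence $g_{n}\to G$ on $[0,1]$ with $u_{(n)}/n\to c$, we obtain $0=g_{n}(u_{(n)}/n)\to G(c)$, whence $G(c)=0$ and therefore $c=\frac{q-1}{q}$. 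Since every subsequential limit of this bounded sequence equals $\frac{q-1}{q}$, the whole sequence converges to $\frac{q-1}{q}$, as desired.

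I expect the only delicate point to be the passage $g_{n}(u_{(n)}/n)\to G(c)$: this genuinely requires the uniform (not merely pointwise) convergence of $g_{n}$ to $G$ on $[0,1]$, which is why I emphasize that the $g_{n}$ all share degree $k$ so that coefficientwise convergence is enough. Everything else—the factorization of $g_{n}$, the binomial identity collapsing the sum to $(qy-(q-1))^{k}$, and the uniqueness of the root $\frac{q-1}{q}$—is routine.
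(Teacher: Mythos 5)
Your proof of part (5) is correct, and it takes a genuinely different route from the paper's. The paper invokes Levenshtein's variational formula [25, Theorem 6.1], which expresses the smallest root exactly as $u_{(n)}=\frac{q-1}{q}n-\frac{c_{(n)}}{q}$, where $c_{(n)}$ is the maximum of an explicit quadratic form over the unit sphere in $\mathbb{R}^{k}$; it then bounds $c_{(n)}$ elementarily by $O(\sqrt{n})$, so that $c_{(n)}/(qn)\to 0$. You instead rescale $x=ny$, observe that the normalized polynomials $g_{n}$ converge coefficientwise---hence uniformly on $[0,1]$, since the degree is fixed at $k$---to $(qy-(q-1))^{k}$, and extract the limit of the root ratios by a compactness/subsequence argument; I checked the rescaling, the collapse of the sum via the binomial theorem, and the uniform-convergence step, and all are sound. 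What the paper's route buys is quantitative information: its identity together with the bound on $c_{(n)}$ gives $u_{(n)}/n=\frac{q-1}{q}+O(n^{-1/2})$, a rate of convergence your soft argument cannot see. What your route buys is self-containedness (nothing beyond part (4) and the binomial theorem is needed, whereas the paper leans on a nontrivial cited theorem) and generality: since every subsequential limit of any sequence of normalized roots must be a root of $(qy-(q-1))^{k}$, your argument shows that \emph{all} roots of $\mathbf{KU}_{(n,k)}$, divided by $n$, converge to $\frac{q-1}{q}$, not just the smallest. Finally, note that, like the paper, you only prove part (5); this matches the paper's actual proof burden, since it cites parts (1)--(3) and the first half of (4) as well known, and your appeal to (4) to place $u_{(n)}/n$ in $[0,1]$ is consistent with that.
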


\begin{proof}
We note that (1)--(3) and the first part of (4) are well known and can be found in \cite{9,25}, and the second part of (4) follows from the first part of (4) along with the fact that $\deg(\mathbf{KU}_{(n,k)})=k$. Hence it remains to establish (5). Fix $k\in\mathbb{Z}^{+}$, and let $T=\{\lambda=(\lambda_0,\dots,\lambda_{k-1})\in\mathbb{R}^{k}\mid\sum_{i=0}^{k-1}{\lambda_{i}}^{2}=1\}$. For any $n\in\mathbb{N}$ such that $n\geqslant k$, let $c_{(n)}$ denote the following real number
$$\max\left\{(q-2)\left(\sum_{i=0}^{k-1}i{\lambda_{i}}^{2}\right)+2\sqrt{q-1}\left(\sum_{i=0}^{k-2}\lambda_{i}\lambda_{i+1}\sqrt{(i+1)(n-i)}\right)\mid\lambda\in T\right\}.$$
By [25, Theorem 6.1], for any $n\in\mathbb{N}$ such that $n\geqslant k$, we have $\frac{u_{(n)}}{n}=\frac{q-1}{q}-\frac{c_{(n)}}{qn}$. For an arbitrary $n\in\mathbb{N}$ such that $n\geqslant k$, some straightforward computation yields that
$$-2(k-1)\sqrt{(q-1)(k-1)n}\leqslant c_{(n)}\leqslant(q-2)(k-1)+2(k-1)\sqrt{(q-1)(k-1)n}.$$
Consequently, the sequence $(\frac{c_{(n)}}{qn}\mid n\in\mathbb{N},n\geqslant k)$ converges to $0$, which immediately implies the desired result.
\end{proof}

\setlength{\parindent}{2em}
Next, we characterize $\textbf{\textit{l}}(\mathcal{CO}(\mathbf{H},\mathcal{P}(k,\Omega)))$ in terms of the Krawtchouk polynomials, and give some sufficient conditions for $\mathcal{CO}(\mathbf{H},\mathcal{P}(k,\Omega))$ to be non-reflexive in terms of the Krawtchouk polynomials, especially in terms of their roots.

\setlength{\parindent}{0em}
\begin{proposition}
{Suppose that $h_{i}=q$ for all $i\in\Omega$. Fix $k\in[1,|\Omega|]$, and let $\Lambda=\textbf{\textit{l}}(\mathcal{CO}(\mathbf{H},\mathcal{P}(k,\Omega)))$. Then, the following five statements hold true:

{\bf{(1)}}\,\,$|\mathcal{CO}(\mathbf{H},\mathcal{P}(k,\Omega))|=\lceil\frac{|\Omega|}{k}\rceil+1$, $|\Lambda|\geqslant\lceil\frac{|\Omega|}{k}\rceil+1$, $\{1_{\mathbf{G}}\}\in\Lambda$. Moreover, $\mathcal{CO}(\mathbf{H},\mathcal{P}(k,\Omega))$ is non-reflexive if and only if $|\Lambda|\geqslant\frac{|\Omega|}{k}+2$;

{\bf{(2)}}\,\,Let $\alpha,\gamma\in\mathbf{G}$, $\alpha\neq1_{\mathbf{G}}$, $\gamma\neq1_{\mathbf{G}}$, and write $|\supp(\alpha)|=t$, $|\supp(\gamma)|=r$. Then, $\alpha\sim_{\Lambda}\gamma$ if and only if for any $s\in[1,|\Omega|-1]$ such that $k\mid s$, it holds that $\mathbf{KU}_{(|\Omega|-1,s)}(t-1)=\mathbf{KU}_{(|\Omega|-1,s)}(r-1)$;

{\bf{(3)}}\,\,Let $s\in[1,|\Omega|-1]$ such that $k\mid s$. Then, it holds that
$$|\Lambda|\geqslant|\{\mathbf{KU}_{(|\Omega|-1,s)}(j)\mid j\in[0,|\Omega|-1]\}|+1.$$
Further assume that $|\{\mathbf{KU}_{(|\Omega|-1,s)}(j)\mid j\in[0,|\Omega|-1]\}|-1\geqslant\frac{|\Omega|}{k}$. Then, $\mathcal{CO}(\mathbf{H},\mathcal{P}(k,\Omega))$ is non-reflexive;

{\bf{(4)}}\,\,Let $s\in[1,|\Omega|-1]$ such that $k\mid s$, and let $u$ denote the smallest root of $\mathbf{KU}_{(|\Omega|-1,s)}$. Assume that $\lfloor u\rfloor\geqslant\frac{|\Omega|}{k}$. Then, $\mathcal{CO}(\mathbf{H},\mathcal{P}(k,\Omega))$ is non-reflexive;

{\bf{(5)}}\,\,Suppose that $|\Omega|\geqslant3$. Let $s\in[2,|\Omega|-1]$ such that $k\mid s$, and let $w$ denote the smallest root of ${\mathbf{KU}_{(|\Omega|-1,s)}}^{'}$. Further assume that $\lfloor w\rfloor\geqslant\frac{|\Omega|}{k}$. Then, $\mathcal{CO}(\mathbf{H},\mathcal{P}(k,\Omega))$ is non-reflexive.
}
\end{proposition}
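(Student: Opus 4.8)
The plan is to deduce (5) from the non-reflexivity criterion already recorded in (3), by producing enough distinct values of the polynomial $P\triangleq\mathbf{KU}_{(|\Omega|-1,s)}$ on the integer grid $\{0,1,\dots,|\Omega|-1\}$. Write $n=|\Omega|-1$, so that $P$ has degree $s$ by Lemma 4.1(1). Since $s\in[2,n]$, the second part of Lemma 4.1(4) guarantees that $P'$ has $s-1\geqslant1$ distinct real roots, so the smallest root $w$ of $P'$ is well defined; moreover every root of $P'$ lies strictly between the smallest and the largest root of $P$, all of which lie between $0$ and $n$. In particular $w>0$ and $w$ is strictly below the largest root of $P$, whence $w<n$. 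Consequently the $\lfloor w\rfloor+1$ integers $0,1,\dots,\lfloor w\rfloor$ all lie in $[0,w]$ and hence in $[0,n]=[0,|\Omega|-1]$.

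First I would show that $P$ is strictly monotonic on the closed interval $[0,w]$. Because $w$ is the smallest root of $P'$, the continuous function $P'$ has no zero on $[0,w)$, so by the intermediate value theorem it keeps a fixed nonzero sign there. Then for any $0\leqslant a<b\leqslant w$ one has $P(b)-P(a)=\int_{a}^{b}P'(t)\,dt$, and this integral is nonzero with that same fixed sign — even when $b=w$, since the integrand keeps its sign on $[a,w)$ and vanishes only at the single endpoint. Hence $P$ is either strictly increasing or strictly decreasing throughout $[0,w]$.

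It follows at once that the values $P(0),P(1),\dots,P(\lfloor w\rfloor)$ are pairwise distinct, and as noted their arguments all belong to $[0,|\Omega|-1]$. Therefore
$$|\{\mathbf{KU}_{(|\Omega|-1,s)}(j)\mid j\in[0,|\Omega|-1]\}|-1\geqslant\lfloor w\rfloor\geqslant\frac{|\Omega|}{k},$$
where the last inequality is the hypothesis $\lfloor w\rfloor\geqslant|\Omega|/k$. This is precisely the hypothesis of the non-reflexivity criterion in (3), so invoking (3) yields that $\mathcal{CO}(\mathbf{H},\mathcal{P}(k,\Omega))$ is non-reflexive, as desired.

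I do not expect a serious obstacle: the statement is a sharpening of (4), in which the smallest root $u$ of $P$ is replaced by the smallest root $w$ of $P'$, and the only genuine point is that the interval of strict monotonicity of $P$ extends all the way from $[0,u]$ to $[0,w]$ — it is the zeros of $P'$, not those of $P$, that can destroy monotonicity, and by Lemma 4.1(4) the first such zero occurs at $w>u$. The single delicate spot is the right endpoint $w$, where $P'(w)=0$: I must argue strict monotonicity on the closed interval $[0,w]$ rather than merely on $[0,w)$, which the integral argument above settles cleanly.
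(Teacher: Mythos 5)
Your argument establishes only part (5) of the proposition, and that is the genuine gap: the statement you were asked to prove has five parts, and your proposal takes (3) --- the non-reflexivity criterion --- as ``already recorded,'' while also leaving (1), (2) and (4) untouched. In the paper these are not free: part (2) is the technical core of the whole proposition, obtained by applying the polynomial machinery of Section~3 (Proposition~3.2 specialized to the anti-chain $(\Omega,=)$) together with parts (1) and (2) of Lemma~4.1, so as to identify the Fourier sums $\sum_{\wt_{\mathcal{P}(k,\Omega)}(\beta)\leqslant b} f(\alpha,\beta)$ with the values $\mathbf{KU}_{(|\Omega|-1,bk)}(t-1)$; part (1) requires computing $|\mathcal{CO}(\mathbf{H},\mathcal{P}(k,\Omega))|=\lceil|\Omega|/k\rceil+1$ and invoking Lemma~2.1; part (3) is then deduced from (1) and (2); and part (4) needs the first half of Lemma~4.1(4). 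None of this appears in your proposal, so as a proof of the proposition it is incomplete --- in particular the criterion (3) that your entire argument leans on is exactly one of the claims that had to be established.

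Within its limited scope, your proof of (5) is correct and is the same route as the paper's: the paper's own proof of (5) is the one-line observation that Lemma~4.1(4) yields $|\{\mathbf{KU}_{(|\Omega|-1,s)}(j)\mid j\in[0,|\Omega|-1]\}|\geqslant\lfloor w\rfloor+1$, after which (3) applies. What you add is the justification the paper leaves implicit: since $w$ is the smallest root of ${\mathbf{KU}_{(|\Omega|-1,s)}}^{'}$, the polynomial is strictly monotonic on the closed interval $[0,w]$ (your integral argument handling the endpoint where the derivative vanishes is a clean way to do this), hence takes $\lfloor w\rfloor+1$ pairwise distinct values at the integers $0,1,\dots,\lfloor w\rfloor$, all of which lie in $[0,|\Omega|-1]$ because $w$ is strictly below the largest root of $\mathbf{KU}_{(|\Omega|-1,s)}$. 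That detail is worth having, but it does not compensate for the four missing parts.
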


\begin{proof}
{\bf{(1)}}\,\,First, by (2.12) and the fact that $h_{i}\geqslant2$ for all $i\in\Omega$, we have
\begin{eqnarray*}
\begin{split}
|\mathcal{CO}(\mathbf{H},\mathcal{P}(k,\Omega))|&=\left|\left\{\left\lceil\frac{|\supp(\beta)|}{k}\right\rceil\mid \beta\in\mathbf{H}\right\}\right|=\left|\left\{\left\lceil\frac{s}{k}\right\rceil\mid s\in[0,|\Omega|]\right\}\right|\\
&=\left|[0,\left\lceil\frac{|\Omega|}{k}\right\rceil]\right|=\left\lceil\frac{|\Omega|}{k}\right\rceil+1.
\end{split}
\end{eqnarray*}
Now the rest immediately follows from Lemma 2.1.

{\bf{(2)}}\,\,First, we consider $\alpha\in\mathbf{G}-\{1_{\mathbf{G}}\}$ with $t=|\supp(\alpha)|$. Applying Proposition 3.2 to the anti-chain $(\Omega,=)$ and the constant $1$ map, we have
$$\sum_{l=0}^{|\Omega|}\sum_{(\beta\in \mathbf{H},|\supp(\beta)|=l)}f(\alpha,\beta)x^{l}=(1-x)^{t}(1+(q-1)x)^{|\Omega|-t},$$
which, along with (1) of Lemma 4.1, implies that for any $l\in\mathbb{N}$,
\begin{equation}\sum_{(\beta\in \mathbf{H},|\supp(\beta)|=l)}f(\alpha,\beta)=((1-x)^{t}(1+(q-1)x)^{|\Omega|-t})_{[l]}=\mathbf{KU}_{(|\Omega|,l)}(t).\end{equation}
For an arbitrary $b\in\mathbb{N}$, (2.12) implies that $(\forall~\beta\in\mathbf{H}:\wt_{\mathcal{P}(k,\Omega)}(\beta)\leqslant b\Longleftrightarrow|\supp(\beta)|\leqslant bk)$, which, in combination with (4.2) and (2) of Lemma 4.1, further implies that
\begin{eqnarray*}
\begin{split}
\sum_{(\beta\in \mathbf{H},\wt_{\mathcal{P}(k,\Omega)}(\beta)\leqslant b)}f(\alpha,\beta)&=\sum_{(\beta\in \mathbf{H},|\supp(\beta)|\leqslant bk)}f(\alpha,\beta)\\
&=\sum_{l=0}^{bk}\sum_{(\beta\in \mathbf{H},|\supp(\beta)|=l)}f(\alpha,\beta)\\
&=\sum_{l=0}^{bk}\mathbf{KU}_{(|\Omega|,l)}(t)=\mathbf{KU}_{(|\Omega|-1,bk)}(t-1).
\end{split}
\end{eqnarray*}
Now for $\alpha,\gamma\in\mathbf{G}-\{1_{\mathbf{G}}\}$ with $t=|\supp(\alpha)|$, $r=|\supp(\gamma)|$, from the above discussion and the definition of $\Lambda$, we deduce that
\begin{eqnarray*}
\begin{split}
\alpha\sim_{\Lambda}\gamma&\Longleftrightarrow\left(\forall~b\in\mathbb{N}:\sum_{(\beta\in\mathbf{H},\wt_{\mathcal{P}(k,\Omega)}(\beta)\leqslant b)}f(\alpha,\beta)=\sum_{(\beta\in\mathbf{H},\wt_{\mathcal{P}(k,\Omega)}(\beta)\leqslant b)}f(\gamma,\beta)\right)\\
&\Longleftrightarrow(\forall~b\in\mathbb{N}:\mathbf{KU}_{(|\Omega|-1,bk)}(t-1)=\mathbf{KU}_{(|\Omega|-1,bk)}(r-1)).
\end{split}
\end{eqnarray*}
By (1) of Lemma 4.1, we have $\mathbf{KU}_{(|\Omega|-1,0)}(t-1)=\mathbf{KU}_{(|\Omega|-1,0)}(r-1)=1$ and $\mathbf{KU}_{(|\Omega|-1,s)}(t-1)=\mathbf{KU}_{(|\Omega|-1,s)}(r-1)=0$ for all $s\in\mathbb{N}$ with $s\geqslant|\Omega|$, which immediately implies the desired result.

{\bf{(3)}}\,\,The first part follows from (2) and the fact that $\{1_{\mathbf{G}}\}\in\Lambda$, and the second part follows from (1) and the first part, as desired.

{\bf{(4)}}\,\,By (4) of Lemma 4.1, we have $|\{\mathbf{KU}_{(|\Omega|-1,s)}(j)\mid j\in[0,|\Omega|-1]\}|\geqslant\lfloor u\rfloor+1$, which, together with (3), immediately implies the desired result.

{\bf{(5)}}\,\,By (4) of Lemma 4.1, we have $|\{\mathbf{KU}_{(|\Omega|-1,s)}(j)\mid j\in[0,|\Omega|-1]\}|\geqslant\lfloor w\rfloor+1$, and hence the desired result again follows from (3).
\end{proof}

\setlength{\parindent}{2em}
As a first application of Proposition 4.1, we show that the sufficient conditions for reflexivity given in Theorem 4.1 are not necessary.

\setlength{\parindent}{0em}
\begin{proposition}
Suppose that $h_{i}=q=2$ for all $i\in\Omega$.

{\bf{(1)}}\,\,Assume that $|\Omega|\geqslant2$, and let $\Lambda=\textbf{\textit{l}}(\mathcal{CO}(\mathbf{H},\mathcal{P}(|\Omega|-1,\Omega)))$. Then, for any $\alpha,\gamma\in\mathbf{G}-\{1_{\mathbf{G}}\}$, $\alpha\sim_{\Lambda}\gamma$ if and only if $|\supp(\alpha)|\equiv|\supp(\gamma)|~(\bmod~2)$. Moreover, $\mathcal{CO}(\mathbf{H},\mathcal{P}(|\Omega|-1,\Omega))$ is reflexive;

{\bf{(2)}}\,\,Assume that $|\Omega|\geqslant2$, and let $\Lambda=\textbf{\textit{l}}(\mathcal{CO}(\mathbf{H},\mathcal{P}(2,\Omega)))$. Then, for any $\alpha,\gamma\in\mathbf{G}$, $\alpha\sim_{\Lambda}\gamma$ if and only if either $|\supp(\alpha)|=|\supp(\gamma)|$ or $|\supp(\alpha)|+|\supp(\gamma)|=|\Omega|+1$ holds true. Moreover, $\mathcal{CO}(\mathbf{H},\mathcal{P}(2,\Omega))$ is reflexive.
\end{proposition}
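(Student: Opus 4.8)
The plan is to derive both characterizations from Proposition 4.1, and then obtain reflexivity by a counting argument via Lemma 2.1 and part (1) of Proposition 4.1. Throughout I write $n=|\Omega|$ and $N=n-1$. By Proposition 4.1(2), for nonzero $\alpha,\gamma$ with $|\supp(\alpha)|=t$ and $|\supp(\gamma)|=r$, the relation $\alpha\sim_{\Lambda}\gamma$ is equivalent to $\mathbf{KU}_{(N,s)}(t-1)=\mathbf{KU}_{(N,s)}(r-1)$ for every $s\in[1,N]$ with $k\mid s$. So everything reduces to understanding which $j\in[0,N]$ are separated by the values $\mathbf{KU}_{(N,s)}(j)$ over the admissible indices $s$. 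I will use the coefficient description $\mathbf{KU}_{(N,s)}(j)=((1-x)^{j}(1+x)^{N-j})_{[s]}$ from Lemma 4.1(1), valid since $q=2$.

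For part (1), $k=N$, so the only admissible index is $s=N$, and $\mathbf{KU}_{(N,N)}(j)$ is simply the leading coefficient of $(1-x)^{j}(1+x)^{N-j}$, namely $(-1)^{j}$. Hence the condition collapses to $(-1)^{t-1}=(-1)^{r-1}$, i.e. $t\equiv r\pmod 2$, which is the asserted criterion. For reflexivity I count the blocks of $\Lambda$: the singleton $\{1_{\mathbf{G}}\}$ together with the two parity classes among nonzero codewords (both nonempty because $n\geqslant2$ provides supports of sizes $1$ and $2$) give $|\Lambda|=3=\lceil n/(n-1)\rceil+1=|\mathcal{CO}(\mathbf{H},\mathcal{P}(N,\Omega))|$, so reflexivity follows from Lemma 2.1.

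For part (2), $k=2$, so the admissible indices are the even $s\in[2,N]$; since the constant terms always agree, the condition says that $P_{j_1}$ and $P_{j_2}$ have the same even part, where $P_j=(1-x)^{j}(1+x)^{N-j}$, $j_1=t-1$ and $j_2=r-1$. Using $P_j(-x)=P_{N-j}(x)$, the even part of $P_j$ equals $\tfrac{1}{2}(P_j+P_{N-j})$, so the condition reads $P_{j_1}+P_{N-j_1}=P_{j_2}+P_{N-j_2}$. Dividing by $(1+x)^{N}$ and setting $y=(1-x)/(1+x)$ turns this into $y^{j_1}+y^{N-j_1}=y^{j_2}+y^{N-j_2}$; since $x\mapsto y$ is a M\"obius change of variable, $y$ is transcendental and comparing the (at most two) exponents on each side forces $\{j_1,N-j_1\}=\{j_2,N-j_2\}$. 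Back in terms of $t,r$ this is exactly $t=r$ or $t+r=n+1$. The remaining cases, where one of $\alpha,\gamma$ equals $1_{\mathbf{G}}$, are handled directly using $\{1_{\mathbf{G}}\}\in\Lambda$ from Proposition 4.1(1): a nonzero codeword can satisfy neither $t=0$ nor $t+r=n+1$ since $r\leqslant n$, consistent with $1_{\mathbf{G}}$ being isolated, so the stated criterion remains correct for all $\alpha,\gamma$. Finally, the blocks of $\Lambda$ are $\{1_{\mathbf{G}}\}$ together with the classes of $[1,n]$ under the pairing $t\sim n+1-t$; these number $\lceil n/2\rceil$, whence $|\Lambda|=\lceil n/2\rceil+1=|\mathcal{CO}(\mathbf{H},\mathcal{P}(2,\Omega))|$ and reflexivity again follows from Lemma 2.1.

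The main obstacle is the converse direction in part (2): that the even-indexed Krawtchouk values genuinely separate support sizes up to the reflection $j\mapsto N-j$. The symmetry in Lemma 4.1(3) only yields the easy implication (equal values whenever $j_2\in\{j_1,N-j_1\}$); the injectivity up to this reflection is the substantive point, and the generating-function identity together with the substitution $y=(1-x)/(1+x)$ is the device that makes it transparent, reducing the matter to the trivial matching of exponents of a two-term polynomial in $y$.
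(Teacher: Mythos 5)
Your proof is correct. Part (1) is essentially identical to the paper's argument (leading coefficient $(-1)^{j}$ of $(1-x)^{j}(1+x)^{N-j}$, then counting $|\Lambda|=3$ and invoking Lemma 2.1). For part (2), however, you take a genuinely different route. The paper never proves the ``only if'' direction of the characterization directly: it defines the candidate partition $\Delta$ (equal support sizes, or sizes summing to $|\Omega|+1$), verifies via Lemma 4.1(3) and Proposition 4.1(2) only that $\Delta$ is \emph{finer} than $\Lambda$, and then lets the cardinality part of Lemma 2.1 do the rest --- since $|\Delta|=|\mathcal{CO}(\mathbf{H},\mathcal{P}(2,\Omega))|$, finer-ness forces $\Delta=\Lambda$ and reflexivity in one stroke, so the converse inclusion comes for free from the duality inequality $|\Gamma|\leqslant|\textbf{\textit{l}}(\Gamma)|$. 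You instead prove the separation statement head-on: you rewrite the condition of Proposition 4.1(2) as equality of the even parts of $P_{j}=(1-x)^{j}(1+x)^{N-j}$, use $P_{j}(-x)=P_{N-j}(x)$, divide by $(1+x)^{N}$, and substitute $y=(1-x)/(1+x)$ to reduce everything to matching exponents in $y^{j_1}+y^{N-j_1}=y^{j_2}+y^{N-j_2}$, which forces $\{j_1,N-j_1\}=\{j_2,N-j_2\}$; your transcendence justification for this last step is sound, since $y$ is a nonconstant rational function and hence transcendental over $\mathbb{C}$. This costs more work than the paper's counting shortcut, but it is self-contained and makes explicit \emph{why} the even-indexed Krawtchouk values separate support sizes up to the reflection $j\mapsto N-j$, a fact the paper only obtains a posteriori; your concluding count $|\Lambda|=\lceil|\Omega|/2\rceil+1$ then delivers reflexivity exactly as in the paper.
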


\begin{proof}
{\bf{(1)}}\,\,Let $\alpha,\gamma\in\mathbf{G}-\{1_{\mathbf{G}}\}$, and let $t=|\supp(\alpha)|$, $r=|\supp(\gamma)|$. By (2) of Proposition 4.1 and (1) of Lemma 4.1, we have $\alpha\sim_{\Lambda}\gamma\Longleftrightarrow\mathbf{KU}_{(|\Omega|-1,|\Omega|-1)}(t-1)=\mathbf{KU}_{(|\Omega|-1,|\Omega|-1)}(r-1)\Longleftrightarrow(-1)^{t-1}=(-1)^{r-1}\Longleftrightarrow t\equiv r~(\bmod~2)$, as desired. It then follows from (1) of Proposition 4.1 that $|\mathcal{CO}(\mathbf{H},\mathcal{P}(|\Omega|-1,\Omega))|=|\Lambda|=3$, which implies that $\mathcal{CO}(\mathbf{H},\mathcal{P}(|\Omega|-1,\Omega))$ is reflexive, as desired.

\hspace*{2mm}\,\,{\bf{(2)}}\,\,Let $\Delta$ denote the partition of $\mathbf{G}$ such that for any $\alpha,\gamma\in\mathbf{G}$, $\alpha\sim_{\Delta}\gamma$ if and only if either $|\supp(\alpha)|=|\supp(\gamma)|$ or $|\supp(\alpha)|+|\supp(\gamma)|=|\Omega|+1$ holds true. Let $\alpha,\gamma\in\mathbf{G}$ such that $\alpha\sim_{\Delta}\gamma$, and let $t=|\supp(\alpha)|$, $r=|\supp(\gamma)|$. We will show that $\alpha\sim_{\Lambda}\gamma$. If $t=r$, then $\alpha\sim_{\Lambda}\gamma$ follows from (2) of Proposition 4.1. Hence in the following, we assume that $t+r=|\Omega|+1$. By (3) of Lemma 4.1, for any $s\in[1,|\Omega|-1]$ such that $2\mid s$, we have $\mathbf{KU}_{(|\Omega|-1,s)}(t-1)=\mathbf{KU}_{(|\Omega|-1,s)}(r-1)$, which, along with (2) of Proposition 4.1, implies that $\alpha\sim_{\Lambda}\gamma$, as desired. It then follows that $\Delta$ is finer than $\Lambda$. Also noticing that $|\Delta|=|\mathcal{CO}(\mathbf{H},\mathcal{P}(2,\Omega))|=\lceil\frac{|\Omega|}{2}\rceil+1$, from Lemma 2.1, we infer that $\Lambda=\Delta$ and  $\mathcal{CO}(\mathbf{H},\mathcal{P}(2,\Omega))$ is reflexive, as desired.
\end{proof}

\setlength{\parindent}{0em}
\begin{remark}
{If $|\Omega|\geqslant3$, then neither $\mathcal{P}(|\Omega|-1,\Omega)$ nor $\mathcal{P}(2,\Omega)$ is a partition of $\Omega$. Hence Proposition 4.2 gives sufficient conditions for reflexivity which are not covered by those presented in Theorem 4.1.
}
\end{remark}

\setlength{\parindent}{2em}
Now we give some criterions for non-reflexivity.

\setlength{\parindent}{0em}
\begin{proposition}
{Suppose that $h_{i}=q$ for all $i\in\Omega$.

{\bf{(1)}}\,\,Assume that $q\geqslant3$, $|\Omega|\geqslant3$. Fix $k\in[2,|\Omega|-1]$ such that $|\Omega|\equiv1~(\bmod~k)$, and let $\Lambda=\textbf{\textit{l}}(\mathcal{CO}(\mathbf{H},\mathcal{P}(k,\Omega)))$. Then, for any $\alpha,\gamma\in\mathbf{G}$, $\alpha\sim_{\Lambda}\gamma$ if and only if $|\supp(\alpha)|=|\supp(\gamma)|$. Consequently, $\mathcal{CO}(\mathbf{H},\mathcal{P}(k,\Omega))$ is non-reflexive;

{\bf{(2)}}\,\,If $q\geqslant3$, $|\Omega|\geqslant4$, then $\mathcal{CO}(\mathbf{H},\mathcal{P}(|\Omega|-2,\Omega))$ is non-reflexive;

{\bf{(3)}}\,\,Assume that $q=2$, $|\Omega|\geqslant5$. Then, for any $k\in[\lceil\frac{|\Omega|}{2}\rceil,|\Omega|-2]$, $\mathcal{CO}(\mathbf{H},\mathcal{P}(k,\Omega))$ is non-reflexive;

{\bf{(4)}}\,\,Assume that $q=2$, $|\Omega|\geqslant7$. Then, for any $k\in[\lceil\frac{|\Omega|}{5}\rceil,|\Omega|-2]$ such that $2\nmid k$, $\mathcal{CO}(\mathbf{H},\mathcal{P}(k,\Omega))$ is non-reflexive.
}
\end{proposition}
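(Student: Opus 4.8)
The plan is to run all four parts through the single reduction supplied by Proposition 4.1: writing $\Lambda=\textbf{\textit{l}}(\mathcal{CO}(\mathbf{H},\mathcal{P}(k,\Omega)))$, each assertion is exactly the statement that $|\Lambda|\geq |\Omega|/k+2$, i.e. that $|\Lambda|$ strictly exceeds $|\mathcal{CO}(\mathbf{H},\mathcal{P}(k,\Omega))|=\lceil|\Omega|/k\rceil+1$. Since by Proposition 4.1(2) the relation $\alpha\sim_{\Lambda}\gamma$ for nonzero codewords is governed by the coincidences $\mathbf{KU}_{(|\Omega|-1,s)}(|\supp\alpha|-1)=\mathbf{KU}_{(|\Omega|-1,s)}(|\supp\gamma|-1)$ over the multiples $s$ of $k$ in $[1,|\Omega|-1]$, everything reduces to lower-bounding, for one well-chosen multiple $s$, the number of distinct values of a single Krawtchouk polynomial on the integer grid $[0,|\Omega|-1]$ — either directly (Proposition 4.1(3)) or through the location of its smallest root or critical point (Proposition 4.1(4),(5)).

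For part (1), $k\mid(|\Omega|-1)$ lets me take $s=|\Omega|-1$; by Lemma 4.1(1) the top polynomial is $\mathbf{KU}_{(|\Omega|-1,|\Omega|-1)}(j)=(-1)^{j}(q-1)^{|\Omega|-1-j}$, and $q\geq 3$ makes the magnitudes $(q-1)^{|\Omega|-1-j}$ pairwise distinct, so this map is injective on $[0,|\Omega|-1]$. Proposition 4.1(2) then forces $\alpha\sim_{\Lambda}\gamma\Leftrightarrow|\supp\alpha|=|\supp\gamma|$ (the identity being its own class by Lemma 2.1), so $|\Lambda|=|\Omega|+1>\lceil|\Omega|/k\rceil+1$. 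For part (2), $k=|\Omega|-2$ leaves only $s=|\Omega|-2$, and I would compute $\mathbf{KU}_{(|\Omega|-1,|\Omega|-2)}(j)$ in closed form by extracting the subleading coefficient of $(1-x)^{j}(1+(q-1)x)^{|\Omega|-1-j}$ (a Vieta computation on the roots $1$ and $-1/(q-1)$), obtaining $(-1)^{j+1}(q-1)^{|\Omega|-2-j}(jq-(|\Omega|-1))$; since $q\geq 3$ the magnitudes spread enough to exhibit three distinct values, e.g. among $j=0,1,|\Omega|-1$, whence $|\Lambda|\geq 4>3=\lceil|\Omega|/k\rceil+1$ by Proposition 4.1(3).

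Parts (3) and (4) are the binary case $q=2$, where I would exploit the symmetry $\mathbf{KU}_{(n,s)}(n-j)=(-1)^{s}\mathbf{KU}_{(n,s)}(j)$ of Lemma 4.1(3) together with the root-counting criteria. When $k\geq\lceil|\Omega|/2\rceil$ (all of part (3) and the upper part of part (4)), $k$ is the only multiple in range and $\lceil|\Omega|/k\rceil=2$, so it suffices to produce three distinct values of $\mathbf{KU}_{(|\Omega|-1,k)}$ on $[0,|\Omega|-1]$: using $\mathbf{KU}(0)=\binom{|\Omega|-1}{k}$, the symmetry value $\mathbf{KU}(|\Omega|-1)=(-1)^{k}\binom{|\Omega|-1}{k}$, and Pascal-rule expressions for $\mathbf{KU}(1)$ (and $\mathbf{KU}(2)$ when $k$ is even) together with $k>(|\Omega|-1)/2$, one gets three distinct values after disposing of the single degenerate equality $2k=|\Omega|$. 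For the genuinely new range of part (4), $\lceil|\Omega|/5\rceil\leq k<|\Omega|/2$, where $|\Omega|/k$ can be as large as $5$ and many values are needed, I would instead invoke Proposition 4.1(4) with $s=k$ and bound from below the smallest root $u$ of $\mathbf{KU}_{(|\Omega|-1,k)}$; via the identity $u=\tfrac{n}{2}-\tfrac{c_{(n)}}{2}$ with $n=|\Omega|-1$ used in the proof of Lemma 4.1(5), where $c_{(n)}$ is the largest eigenvalue of the Jacobi matrix with off-diagonal entries $\sqrt{(i+1)(n-i)}$, one gets $u\geq\tfrac{n}{2}-\sqrt{(k-1)(n-k+2)}$, and the constant $5$ is calibrated precisely so that $\lfloor u\rfloor\geq|\Omega|/k$.

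The hardest step will be this last root estimate in part (4): the Jacobi/asymptotic bound behind Lemma 4.1(5) is sharp only for large $|\Omega|$, and near the lower end $k\approx|\Omega|/5$ it degrades, so I expect to need either a sharpened lower bound on the extremal binary Krawtchouk root or a finite direct check for the smallest admissible $|\Omega|$ (down to $7$), spliced carefully onto the odd-$s$ symmetry count so as to cover the transition where $|\Omega|/k$ crosses $2$. Handling these boundary and parity edge cases uniformly, rather than the individual coefficient computations, is where the real work lies.
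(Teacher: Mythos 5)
Your part (1) is exactly the paper's argument, and your strategy for part (2), part (3) and the range $k\geqslant\lceil|\Omega|/2\rceil$ of part (4) — count distinct values of one Krawtchouk polynomial and feed the count into Proposition 4.1(3) — is also the paper's. But two of your concrete value counts are flawed. In part (2), your closed form $\mathbf{KU}_{(|\Omega|-1,|\Omega|-2)}(j)=(-1)^{j+1}(q-1)^{|\Omega|-2-j}\bigl(jq-(|\Omega|-1)\bigr)$ is correct, yet your chosen test points $j=0,1,|\Omega|-1$ do not always give three values: for $q=4$, $|\Omega|=4$ they give $27,3,3$, and for $q=3$, $|\Omega|=5$ one gets $\mathbf{KU}_{(4,3)}(1)=\mathbf{KU}_{(4,3)}(4)=-4$. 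This is repairable — the paper evaluates at $j=0,|\Omega|-2,|\Omega|-1$, which does work — but your step as written fails. In part (3), the degenerate case $2k=|\Omega|$ with $k$ even, which you only promise to ``dispose of,'' is precisely where a fourth evaluation point is unavoidable: the paper uses $d=\mathbf{KU}_{(|\Omega|-1,k)}(3)$ together with the equivalences $b=c\Leftrightarrow|\Omega|=2k$ and $b=d\Leftrightarrow|\Omega|=2k+1$.

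The genuine gap is your plan for part (4) on the range $\lceil|\Omega|/5\rceil\leqslant k<|\Omega|/2$, where you abandon value counting for the root criterion of Proposition 4.1(4). That criterion is not merely hard to certify there; it is \emph{false} for infinitely many $(|\Omega|,k)$ in the range, so no sharpened root bound and no finite check can rescue the plan. Concretely, take $|\Omega|=2k+1$ with $k$ odd (e.g.\ $|\Omega|=11$, $k=5$, which satisfies $\lceil|\Omega|/5\rceil\leqslant k<|\Omega|/2$): with $q=2$ one has $\mathbf{KU}_{(2k,k)}(1)=\binom{2k-1}{k}-\binom{2k-1}{k-1}=0$, so the smallest root satisfies $u\leqslant1$ and $\lfloor u\rfloor\leqslant1<|\Omega|/k$; switching to the other admissible multiple $s=2k=|\Omega|-1$ is no better, since $\mathbf{KU}_{(|\Omega|-1,|\Omega|-1)}(j)=(-1)^{j}$ when $q=2$. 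The same failure occurs for $|\Omega|=2k+2,2k+3,\dots$ (the smallest root of $\mathbf{KU}_{(|\Omega|-1,k)}$ stays below $2$ while the criterion demands $\lfloor u\rfloor\geqslant3$), and these bad pairs occur for arbitrarily large $|\Omega|$, so the difficulty is not concentrated near $|\Omega|=7$ as you suggest. Moreover, even where the criterion does hold (e.g.\ $|\Omega|=25$, $k=5$, where the true smallest root lies in $(5,6)$), your Jacobi-matrix bound $u\geqslant\frac{n}{2}-\sqrt{(k-1)(n-k+2)}$ gives only $u\geqslant2.83$ and cannot certify it. What actually closes this range — and is what the paper does — is to stay with value counting: for odd $k$ the antisymmetry $\mathbf{KU}_{(n-1,k)}(n-1-j)=-\mathbf{KU}_{(n-1,k)}(j)$ doubles the four values $a,b,c,d$ at $j=0,1,2,3$ to $\pm a,\pm b,\pm c,\pm d$, and elementary binomial inequalities ($a>|b|,|c|,|d|$; $b=c\Leftrightarrow n=2k$; $b=d\Leftrightarrow n=2k+1$; $a>b>\pm c>-b>-a$ when $k\leqslant\frac{n-3}{2}$; $c\neq0$ when $k\leqslant\frac{n-1}{4}$) produce at least $4$, $5$ or $6$ distinct values according as $k\geqslant n/3$, $k\geqslant n/4$ or $k\geqslant n/5$, which is exactly what Proposition 4.1(3) requires.
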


\begin{proof}
{\bf{(1)}}\,\,Let $\alpha,\gamma\in\mathbf{G}$, and let $t=|\supp(\alpha)|$, $r=|\supp(\gamma)|$. By (2) of Proposition 4.1, we only need to prove the ``only if'' part. Suppose that $\alpha\sim_{\Lambda}\gamma$. If $1_{\mathbf{G}}\in\{\alpha,\gamma\}$, then by $\{1_{\mathbf{G}}\}\in\Lambda$, we have $\alpha=\gamma=1_{\mathbf{G}}$, and hence $t=r=0$, as desired. Therefore in the following, we assume that $\alpha\neq1_{\mathbf{G}}$, $\gamma\neq1_{\mathbf{G}}$. By $|\Omega|\equiv1~(\bmod~k)$ and (2) of Proposition 4.1, we have $\mathbf{KU}_{(|\Omega|-1,|\Omega|-1)}(t-1)=\mathbf{KU}_{(|\Omega|-1,|\Omega|-1)}(r-1)$, which, along with (1) of Lemma 4.1, implies that $(-1)^{t-1}(q-1)^{|\Omega|-t}=(-1)^{r-1}(q-1)^{|\Omega|-r}$. Since $q\geqslant3$, we have $|\Omega|-t=|\Omega|-r$, and hence $t=r$, as desired. It follows that $\Lambda$ is the partition induced by Hamming weight, which, together with $|\Omega|\geqslant3$, $k\geqslant2$, implies that $|\Lambda|=|\Omega|+1>\frac{|\Omega|}{k}+2$. Now the non-reflexivity of $\mathcal{CO}(\mathbf{H},\mathcal{P}(k,\Omega))$ immediately follows from (1) of Proposition 4.1, as desired.

\hspace*{2mm}\,\,{\bf{(2)}}\,\,Write $n=|\Omega|$. Since $q\geqslant3$, $n\geqslant4$, one can check that $\mathbf{KU}_{(n-1,n-2)}$ takes different values on $0$, $n-2$, $n-1$, respectively, which further implies that $|\{\mathbf{KU}_{(n-1,n-2)}(j)\mid j\in[0,n-1]\}|-1\geqslant2\geqslant\frac{n}{n-2}$. It then follows from (3) of Proposition 4.1 that $\mathcal{CO}(\mathbf{H},\mathcal{P}(n-2,\Omega))$ is non-reflexive, as desired.

\hspace*{2mm}\,\,{\bf{(3)}} and {\bf{(4)}}\,\,Write $n=|\Omega|$. Suppose that $n\geqslant5$, and fix $k\in[3,n-2]$. It follows from (1) of Lemma 4.1 and some straightforward computation that
$$\mathbf{KU}_{(n-1,k)}(0)=\binom{n-1}{k}\triangleq a,~\mathbf{KU}_{(n-1,k)}(1)=\binom{n-2}{k}-\binom{n-2}{k-1}\triangleq b,$$
$$\mathbf{KU}_{(n-1,k)}(2)=\binom{n-3}{k}-2\binom{n-3}{k-1}+\binom{n-3}{k-2}\triangleq c,$$
$$\mathbf{KU}_{(n-1,k)}(3)=\binom{n-4}{k}-3\binom{n-4}{k-1}+3\binom{n-4}{k-2}-\binom{n-4}{k-3}\triangleq d.$$
It is straightforward to verify the following facts:
\begin{equation}\hspace*{-2mm}a>|b|,~a>|c|,~a>|d|,~(b=c\Longleftrightarrow n=2k),~(b=d\Longleftrightarrow n=2k+1).\end{equation}
From (4.3), we infer that $|\{a,b,c,d\}|\geqslant3$. If $k\in[\lceil\frac{n}{2}\rceil,n-2]$, then we have $|\{\mathbf{KU}_{(n-1,k)}(j)\mid j\in[0,n-1]\}|-1\geqslant2\geqslant\frac{n}{k}$, which, along with (3) of Proposition 4.1, implies that $\mathcal{CO}(\mathbf{H},\mathcal{P}(k,\Omega))$ is non-reflexive, which further establishes (3). Hence it remains to prove (4). From now on, we assume that $n\geqslant7$, $k\in[\lceil\frac{n}{5}\rceil,n-2]$, $2\nmid k$. It follows from (3) of Lemma 4.1 that $\mathbf{KU}_{(n-1,k)}(n-1)=-a$, $\mathbf{KU}_{(n-1,k)}(n-2)=-b$, $\mathbf{KU}_{(n-1,k)}(n-3)=-c$, $\mathbf{KU}_{(n-1,k)}(n-4)=-d$. From (4.3), we infer that $|\{\pm a,b,c,d\}|\geqslant4$. Hence if $k\geqslant\frac{n}{3}$, then we have $|\{\mathbf{KU}_{(n-1,k)}(j)\mid j\in[0,n-1]\}|-1\geqslant3\geqslant\frac{n}{k}$, and the desired result follows from (3) of Proposition 4.1. Therefore in the following, we assume that $k\leqslant\frac{n-1}{3}$. By straightforward computation, we have $b+c=2\binom{n-3}{k}-2\binom{n-3}{k-1}$, $b-c=2\binom{n-3}{k-1}-2\binom{n-3}{k-2}$. Since $n\geqslant7$, we have $\frac{n-1}{3}\leqslant\frac{n-3}{2}$, and hence $k\leqslant\frac{n-3}{2}$. It immediately follows that $b+c>0$, $b-c>0$, and hence $b>\pm c$. The above discussion yields that
\begin{equation}a>b>\pm c>-b>-a.\end{equation}
From (4.4), we infer that $|\{\pm a,\pm b,\pm c\}|\geqslant5$. Hence if $k\geqslant\frac{n}{4}$, then we have $|\{\mathbf{KU}_{(n-1,k)}(j)\mid j\in[0,n-1]\}|-1\geqslant4\geqslant\frac{n}{k}$, and the desired result follows from (3) of Proposition 4.1. Therefore in the following, we further assume that $k\leqslant\frac{n-1}{4}$. Then, some straightforward computation yields that $c\neq0$, which, along with (4.4), implies that $|\{\pm a,\pm b,\pm c\}|=6$. It then follows from $k\geqslant\frac{n}{5}$ that $|\{\mathbf{KU}_{(n-1,k)}(j)\mid j\in[0,n-1]\}|-1\geqslant5\geqslant\frac{n}{k}$, and hence (3) of Proposition 4.1 concludes the proof.
\end{proof}

\setlength{\parindent}{2em}
The following theorem is the main result of this subsection. It includes the largest amount of non-reflexive partitions obtained in this section.

\setlength{\parindent}{0em}
\begin{theorem}
{Let $X$ be a finite abelian group with $|X|=q$. Then, the following four statements hold true:

{\bf{(1)}}\,\,Fix $k\in\mathbb{Z}^{+}$ such that $k\geqslant2$, $(k,q)\neq(2,2)$. Then, there exists $m\in\mathbb{Z}^{+}$ such that for any $n\in\mathbb{Z}^{+}$ with $n\geqslant m$, $n\geqslant k+1$, the partition $\mathcal{CO}(X^{n},\mathcal{P}(k,[1,n]))$ is non-reflexive;

{\bf{(2)}}\,\,If $q\geqslant3$, then for any $n\in\mathbb{Z}^{+}$ such that $n\geqslant3$, $\mathcal{CO}(X^{n},\mathcal{P}(2,[1,n]))$ is non-reflexive;

{\bf{(3)}}\,\,Let $n\in\mathbb{Z}^{+}$ such that one of the following three conditions holds:

\hspace*{6mm}\,\,{\bf{3.1)}}\,\,$3\mid n$ and
$$n\geqslant\frac{9(q-1)+\sqrt{48q^{4}-144q^{3}+189q^{2}-162q+81}}{2(2q-3)^{2}}+3;$$
\hspace*{6mm}\,\,{\bf{3.2)}}\,\,$n\equiv1~(\bmod~3)$, $n\geqslant4$, $q\geqslant3$;

\hspace*{6mm}\,\,{\bf{3.3)}}\,\,$n\equiv2~(\bmod~3)$ and
$$n\geqslant\frac{4q^{2}+3q-9+\sqrt{48q^{4}-72q^{3}+9q^{2}-54q+81}}{2(2q-3)^{2}}+3.$$
Then, it holds that $n\geqslant4$ and $\mathcal{CO}(X^{n},\mathcal{P}(3,[1,n]))$ is non-reflexive;

{\bf{(4)}}\,\,If $q=2$, then for any $n\in\mathbb{Z}^{+}$ such that $n\geqslant5$, $\mathcal{CO}(X^{n},\mathcal{P}(3,[1,n]))$ is non-reflexive.
}
\end{theorem}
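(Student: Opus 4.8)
All four parts follow one template: realise non-reflexivity of $\mathcal{CO}(X^{n},\mathcal{P}(k,[1,n]))$ by applying one of the criteria in Proposition 4.1 to a single Krawtchouk polynomial $\mathbf{KU}_{(n-1,s)}$, where $s$ is a chosen multiple of $k$ in $[1,n-1]$, supplementing with Lemma 4.1 and, for the small values of $n$, with the ready-made conclusions of Proposition 4.3. Here I take $\mathbf{H}=X^{n}$, $\Omega=[1,n]$, $h_{i}=q$, so that Proposition 4.1 applies with $|\Omega|=n$.

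Part (1) is the asymptotic one. I would take $s=k$, legitimate once $n\geqslant k+1$. The single arithmetic input is $\frac{q-1}{q}>\frac1k$, equivalently $q>\frac{k}{k-1}$; one checks this holds for every $k\geqslant2$, $q\geqslant2$ except precisely $(k,q)=(2,2)$, which is the excluded case. By Lemma 4.1(5) the smallest root $u$ of $\mathbf{KU}_{(n-1,k)}$ satisfies $\frac{u}{n-1}\to\frac{q-1}{q}>\frac1k$, so $u\geqslant\frac nk+1$ and hence $\lfloor u\rfloor\geqslant\frac nk$ for all $n$ beyond some $m$; Proposition 4.1(4) concludes. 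This is the one place where a non-explicit $m$ is acceptable.

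Part (2) ($k=2$, $q\geqslant3$, all $n\geqslant3$) I would split by the parity of $n$. For odd $n$ we have $n\equiv1\pmod2$, so Proposition 4.3(1) applies verbatim. For even $n\geqslant4$ I would count values via Proposition 4.1(3) applied to the genuine quadratic $\mathbf{KU}_{(n-1,2)}$ (leading coefficient $\tfrac12(q-2)^2>0$). A quadratic identifies two integer arguments only across its axis of symmetry $x^{\ast}$, so the sole way to have as few as $n/2$ distinct values on $\{0,\dots,n-1\}$ is $x^{\ast}=\frac{n-1}{2}$; the key computation is $2x^{\ast}-(n-1)=\frac{(n-2)q}{q-2}\neq0$ for $q\geqslant3$, $n\geqslant4$, which forces at least $\frac n2+1$ distinct values and hence non-reflexivity.

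Parts (3) and (4) carry the real work, and the main obstacle is producing explicit, gap-free thresholds rather than merely ``$n$ large''. For (3) ($k=3$), case 3.2 is again Proposition 4.3(1) (now $n\equiv1\pmod3$, $q\geqslant3$), explaining its clean bound $n\geqslant4$. For 3.1 and 3.3 I would apply Proposition 4.1(5) with $s=3$: the derivative of the cubic $\mathbf{KU}_{(n-1,3)}$ is a quadratic whose smallest root $w$ is in closed form, and the requirement $\lfloor w\rfloor\geqslant\frac n3$ becomes $w\geqslant\lceil\frac n3\rceil$, where $\lceil\frac n3\rceil$ equals $\frac n3$ or $\frac{n+1}3$ according as $3\mid n$ or $n\equiv2\pmod3$ --- this is exactly the split into 3.1 and 3.3. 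Isolating the square root in $w\geqslant\lceil\frac n3\rceil$ and squaring reduces to a quadratic inequality in $n$ whose discriminant is the quartic in $q$ appearing in the stated bounds. For (4) ($q=2$, $k=3$), the ranges $n\in\{5,6\}$ and $n\in[7,15]$ are covered verbatim by Proposition 4.3(3) and Proposition 4.3(4) (note $k=3$ is odd); for the remaining $n\geqslant16$ I would again use Proposition 4.1(3) with $s=3$, now exploiting the $q=2$ anti-symmetry $\mathbf{KU}_{(n-1,3)}(n-1-j)=-\mathbf{KU}_{(n-1,3)}(j)$ of Lemma 4.1(3): since the cubic is strictly monotone on $[0,w]$ it is injective on the integers there, and the anti-symmetry supplies the negatives of those values, roughly doubling the count to well above $\frac n3+1$. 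The delicate points --- and the crux of the whole theorem --- are the $n\bmod 3$ bookkeeping and the sign/monotonicity estimates needed to make these counts sharp and to leave no uncovered $n$; these are computational rather than conceptual.
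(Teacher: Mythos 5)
Your parts (1)--(3) are correct and follow the paper's own route: part (1) is verbatim the paper's argument (Lemma 4.1(5) plus Proposition 4.1(4), with $\tfrac{q-1}{q}>\tfrac1k$ failing only at $(k,q)=(2,2)$); part (3) is exactly the paper's split (Proposition 4.3(1) for case 3.2, Proposition 4.1(5) with the closed form of $w$ for cases 3.1 and 3.3, the stated thresholds coming from solving $w\geqslant\lceil n/3\rceil$). In part (2) you deviate slightly, and validly: for even $n$ you count values of the quadratic via its axis of symmetry and invoke Proposition 4.1(3), whereas the paper uses Proposition 4.1(5) for even $n\geqslant6$ and falls back on Proposition 4.3(2) for $n=4$; your version handles all even $n\geqslant4$ uniformly, which is a small gain. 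Two computational slips there, neither fatal: the leading coefficient of $\mathbf{KU}_{(n-1,2)}$ is $\tfrac{q^2}{2}$, not $\tfrac12(q-2)^2$, and the displacement of the axis is $2x^{\ast}-(n-1)=\tfrac{(q-2)(n-2)}{q}$, not $\tfrac{(n-2)q}{q-2}$; both corrected quantities are still nonzero for $q\geqslant3$, $n\geqslant4$, so your conclusion stands.

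Part (4) has a genuine gap, and it sits exactly where the paper is forced to compute. For $n\geqslant16$ you rely on ``anti-symmetry roughly doubles the count.'' Quantify what you actually have: strict monotonicity of $f=\mathbf{KU}_{(n-1,3)}$ on $[0,w]$ gives $\lfloor w\rfloor+1$ distinct values, and with $q=2$ one has $w=\tfrac{n-1}{2}-\tfrac12\sqrt{n-\tfrac53}$. Proposition 4.1(3) needs at least $\tfrac n3+1$ values, i.e.\ at least $7,7,7,8$ values for $n=16,17,18,19$, while $\lfloor w\rfloor+1$ equals $6,7,7,7$ respectively (and suffices for all $n\geqslant20$). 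So the monotone count alone closes $n=17,18$ and $n\geqslant20$, but for $n=16$ and $n=19$ your argument stands or falls with the doubling claim --- and that claim is unproven. The anti-symmetry gives $-f(j)=f(n-1-j)$, but nothing prevents $-f(j)$ from coinciding with some $f(i)$, $i\in[0,\lfloor w\rfloor]$: $f$ takes both signs on $[0,\lfloor w\rfloor]$ (its smallest root lies below $w$), and the value ranges of the initial decreasing branch and the final decreasing branch overlap in the interval $[f(w),-f(w)]$, so collisions are geometrically possible and each one cancels a would-be new value. Ruling them out requires quantitative information about the actual values (e.g.\ that $|f(w)|$ is small compared with $f(0),f(1),\dots$), which your proposal does not supply. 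This is precisely why the paper finishes $n=16$ and $n=19$ by brute force, checking $|\{\mathbf{KU}_{(15,3)}(j)\mid j\in[0,6]\}|=7$ and $|\{\mathbf{KU}_{(18,3)}(j)\mid j\in[0,7]\}|=8$ and then citing Proposition 4.1(3). (Incidentally, for $n=16$ the doubling conclusion is true --- the sixteen values are $\pm455,\pm273,\pm143,\pm57,\pm17,\pm15,\pm7$ with only the collision $f(4)=7=f(8)$ --- but verifying this is the explicit computation you were trying to avoid.) To close your proof, either insert those two computations, or supply a genuine collision-exclusion estimate; as written, the step ``roughly doubling the count'' would fail as a proof for exactly the two values of $n$ where it is needed.
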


\begin{proof}
{\bf{(1)}}\,\,For any $n\in\mathbb{N}$ such that $n\geqslant k$, let $u_{(n)}$ denote the smallest root of $\mathbf{KU}_{(n,k)}$. By (5) of Lemma 4.1, the sequence $(\lfloor u_{(n-1)}\rfloor/n\mid n\in\mathbb{N},n\geqslant k+1)$ converges to $\frac{q-1}{q}$. Since $k,q\geqslant2$, $(k,q)\neq(2,2)$, we have $\frac{q-1}{q}>\frac{1}{k}$. Hence we can choose $m\in\mathbb{Z}^{+}$ such that for any $n\in\mathbb{Z}^{+}$ with $n\geqslant m$, $n\geqslant k+1$, it holds that $\lfloor u_{(n-1)}\rfloor>n/k$. Now for any $n\in\mathbb{Z}^{+}$ such that $n\geqslant m$, $n\geqslant k+1$, applying (4) of Proposition 4.1 to $[1,n]$ and $X^{n}$ leads to the non-reflexivity of $\mathcal{CO}(X^{n},\mathcal{P}(k,[1,n]))$, which further establishes (1).

\hspace*{2mm}\,\,{\bf{(2)}}\,\,Suppose that $q\geqslant3$. We consider $n\in\mathbb{Z}^{+}$, $n\geqslant3$. It follows from some straightforward computation that the only root of ${\mathbf{KU}_{(n-1,2)}}^{'}$ is equal to $v=\frac{q-1}{q}n-\frac{3}{2}+\frac{2}{q}$. If $2\nmid n$, then the desired result follows from (1) of Proposition 4.3; if $2\mid n$, $n\geqslant5$, then along with $q\geqslant3$, one can readily verify that $\lfloor v\rfloor\geqslant\frac{n}{2}$, and hence the desired result follows from (5) of Proposition 4.1; and if $n=4$, then by $q\geqslant3$ and (2) of Proposition 4.3, the partition $\mathcal{CO}(X^{4},\mathcal{P}(2,[1,4]))$ is non-reflexive, which concludes the proof of (2).

\hspace*{2mm}\,\,{\bf{(3)}}\,\,Apparently, we have $n\geqslant4$. Let $w$ denote the smallest root of ${\mathbf{KU}_{(n-1,3)}}^{'}$. Some straightforward computation yields that
\begin{equation}w=\frac{q-1}{q}n-2+\frac{3}{q}-\frac{\sqrt{(q-1)(n-3)+\frac{1}{3}q^{2}}}{q}.\end{equation}
If either 3.1) or 3.3) holds true, then from (4.5) and some straightforward computation, we have $\lfloor w\rfloor\geqslant\frac{n}{3}$, and hence the desired result follows from (5) of Proposition 4.1; and if 3.2) holds true, then the desired result follows from (1) of Proposition 4.3, which further establishes (3).

\hspace*{2mm}\,\,{\bf{(4)}}\,\,Suppose that $q=2$. Let $n\in\mathbb{Z}^{+}$, $n\geqslant5$, and let $w$ denote the smallest root of ${\mathbf{KU}_{(n-1,3)}}^{'}$. If $n\in\{5,6\}$, then the desired result follows from (3) of Proposition 4.3; if $n\in[7,15]$, then the desired result follows from (4) of Proposition 4.3; if $n\geqslant16$, $n\not\equiv1~(\bmod~3)$, then we note that $n$ satisfies either 3.1) or 3.3), and hence the desired result follows from (3); if $n\geqslant20$, $n\equiv1~(\bmod~3)$, then by (4.5), we have $\lfloor w\rfloor\geqslant\frac{n}{3}$, and hence the desired result follows from (5) of Proposition 4.1; and if $n\in\{16,19\}$, then some straightforward computation yields that $|\{\mathbf{KU}_{(15,3)}(j)\mid j\in[0,6]\}|=7$, $|\{\mathbf{KU}_{(18,3)}(j)\mid j\in[0,7]\}|=8$, and hence the desired result follows from (3) of Proposition 4.1, which further concludes the proof of (4).
\end{proof}

\begin{remark}
In Section 5, we will use Proposition 4.3 and Theorem 4.2 to provide counter-examples to Conjecture 2.1 (see Theorem 5.3).
\end{remark}

\section{Reflexivity, the PAMI and the MEP}

\setlength{\parindent}{2em}
Throughout this section, we let $\mathbb{F}$ be a finite field, $\Omega$ be a nonempty finite set, and $(k_{i}\mid i\in\Omega)$ be a family of positive integers. We consider the $\mathbb{F}$-vector space $\mathbf{H}\triangleq\prod_{i\in\Omega}\mathbb{F}^{k_{i}}$. Define the inner product $\langle~,~\rangle:\mathbf{H}\times\mathbf{H}\longrightarrow \mathbb{F}$ as $\langle\alpha,\beta\rangle=\sum_{i\in\Omega}\sum_{t=1}^{k_{i}}\alpha_{i,t}\cdot\beta_{i,t}$, where for $\alpha\in\mathbf{H}$ and $i\in\Omega$, $\alpha_{i,t}$ denote the $t$-th entry of $\alpha_{i}\in\mathbb{F}^{k_{i}}$. For any linear code (i.e., $\mathbb{F}$-subspace) $C\subseteq \mathbf{H}$, we let $C^{\bot}\triangleq\{\beta\in\mathbf{H}\mid \text{$\langle\alpha,\beta\rangle=0$ for all $\alpha\in C$}\}$ denote the dual code of $C$.

We also fix a non-trivial additive character $\chi$ of $\mathbb{F}$, and define the non-degenerate pairing $f:\mathbf{H}\times \mathbf{H}\longrightarrow\mathbb{C}^{*}$ as $f(\alpha,\beta)=\chi\left(\langle\alpha,\beta\rangle\right)$. It is well known that for any linear code $C\subseteq\mathbf{H}$, we have
\begin{equation}C^{\bot}=\{\beta\in\mathbf{H}\mid \text{$f(\alpha,\beta)=1$ for all $\alpha\in C$}\}.\end{equation}
For a partition $\Delta$ of $\mathbf{H}$, we let $\textbf{\textit{l}}(\Delta)$ denote the left dual partition of $\Delta$ with respect to $f$, and let $\inv(\Delta)$ denote the following subgroup of $\Aut_{\mathbb{F}}(\mathbf{H})$:
\begin{equation}\inv(\Delta)=\{\sigma\in\Aut_{\mathbb{F}}(\mathbf{H})\mid\text{$\beta\sim_{\Delta}\sigma(\beta)$ for all $\beta\in\mathbf{H}$}\}.\end{equation}
For a subgroup $K\leqslant\Aut_{\mathbb{F}}(\mathbf{H})$, we let $\orb(K)$ denote the \textit{orbit partition} of $K$ acting on $\mathbf{H}$, i.e., for any $\alpha,\beta\in\mathbf{H}$, $\alpha\sim_{\orb(K)}\beta$ if and only if there exists $\sigma\in K$ such that $\beta=\sigma(\alpha)$.

\setlength{\parindent}{0em}
\begin{definition}
{\bf{(1)}}\,\,Let $\Gamma$ and $\Lambda$ be partitions of $\mathbf{H}$. We say that $(\Lambda,\Gamma)$ admits MacWilliams identity if for any linear codes $C_1,C_2\subseteq\mathbf{H}$ such that $C_1\approx_{\Lambda}C_2$, it holds that ${C_1}^{\bot}\approx_{\Gamma}{C_2}^{\bot}$.

{\bf{(2)}}\,\,Let $\Delta$ be a partition of $\mathbf{H}$. We say that $\Delta$ satisfies the MacWilliams extension property (MEP) if for any linear code $C\subseteq\mathbf{H}$ and $g\in\Hom_{\mathbb{F}}(C,\mathbf{H})$ such that $g$ is injective and $\alpha\sim_{\Delta}g(\alpha)$ for all $\alpha\in C$, there exists $\varphi\in\inv(\Delta)$ such that $\varphi\mid_{C}=g$.

{\bf{(3)}}\,\,A partition $\Delta$ of $\mathbf{H}$ is said to be $\mathbb{F}$-invariant if for any $B\in\Delta$ and $c\in\mathbb{F}-\{0\}$, it holds that $B=\{c\cdot\beta\mid \beta\in B\}$.
\end{definition}

\setlength{\parindent}{2em}
We first examine the relations between reflexivity and the PAMI. The following lemma is an immediate consequence of Lemma 2.1, (2.6) and (5.1).

\setlength{\parindent}{0em}
\begin{lemma}
Let $\Gamma$ and $\Lambda$ be partitions of $\mathbf{H}$ such that $\Lambda$ is finer than $\textbf{\textit{l}}(\Gamma)$. Then, we have $\{0\}\in\Lambda$ and $(\Lambda,\Gamma)$ admits MacWilliams identity. Furthermore, for a reflexive partition $\Delta$ of $\mathbf{H}$, we have $\{0\}\in\Delta$, and both $(\textbf{\textit{l}}(\Delta),\Delta)$ and $(\Delta,\textbf{\textit{l}}(\Delta))$ admit MacWilliams identity.
\end{lemma}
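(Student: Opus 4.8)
The plan is to derive all four assertions directly from the tools already in hand, handling the two halves of the statement in turn. For the first half, assume $\Lambda$ is finer than $\textbf{\textit{l}}(\Gamma)$. To obtain $\{0\}\in\Lambda$, I would invoke Lemma 2.1, which in the additive notation of this section ($1_{G}=0$) gives $\{0\}\in\textbf{\textit{l}}(\Gamma)$. Since $\Lambda$ is a partition, $0$ lies in a unique block $A\in\Lambda$; because $\Lambda$ is finer than $\textbf{\textit{l}}(\Gamma)$, this $A$ is contained in the block of $\textbf{\textit{l}}(\Gamma)$ through $0$, namely $\{0\}$, forcing $A=\{0\}$. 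For the MacWilliams identity, I would note that any linear codes $C_1,C_2\subseteq\mathbf{H}$ are in particular additive subgroups, so the hypothesis that $\Lambda$ is finer than $\textbf{\textit{l}}(\Gamma)$ lets us apply (2.6): $C_1\approx_{\Lambda}C_2$ implies ${C_1}^{\ddagger}\approx_{\Gamma}{C_2}^{\ddagger}$. Finally (5.1) identifies ${C_i}^{\ddagger}$ with the dual code ${C_i}^{\bot}$, so $C_1\approx_{\Lambda}C_2$ implies ${C_1}^{\bot}\approx_{\Gamma}{C_2}^{\bot}$, which is precisely the PAMI for $(\Lambda,\Gamma)$.

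For the second half, let $\Delta$ be reflexive. The key observation is that the pairing $f(\alpha,\beta)=\chi(\langle\alpha,\beta\rangle)$ is symmetric in $\alpha$ and $\beta$, since the inner product is symmetric; as moreover $G=H=\mathbf{H}$ here, comparing (2.3) and (2.4) shows that $\textbf{\textit{l}}$ and $\textbf{\textit{r}}$ coincide as operations on partitions of $\mathbf{H}$. Reflexivity, as characterized in Lemma 2.1, then reads $\Delta=\textbf{\textit{r}}(\textbf{\textit{l}}(\Delta))=\textbf{\textit{l}}(\textbf{\textit{l}}(\Delta))$, so in particular $\Delta$ is (trivially) finer than $\textbf{\textit{l}}(\textbf{\textit{l}}(\Delta))$. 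Applying the first half with $(\Lambda,\Gamma)=(\Delta,\textbf{\textit{l}}(\Delta))$ yields at once both $\{0\}\in\Delta$ and the PAMI for $(\Delta,\textbf{\textit{l}}(\Delta))$; applying it with $(\Lambda,\Gamma)=(\textbf{\textit{l}}(\Delta),\Delta)$, where the finer-than hypothesis is the tautology that $\textbf{\textit{l}}(\Delta)$ is finer than itself, yields the PAMI for $(\textbf{\textit{l}}(\Delta),\Delta)$.

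There is no serious obstacle here, as the header of the lemma already advertises: the statement is essentially bookkeeping layered on top of Lemma 2.1, (2.6) and (5.1). The single point that deserves explicit care is the identity $\textbf{\textit{l}}=\textbf{\textit{r}}$, which rests on the symmetry of $f$ together with the equality $G=H=\mathbf{H}$ specific to this section; without it one would instead be forced to quote the right-dual analogues of Lemma 2.1, which are symmetric restatements but are not literally written there. I would therefore record this symmetry explicitly before using it, and otherwise present the argument as two short applications of the first half of the lemma.
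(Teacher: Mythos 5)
Your proposal is correct and takes the same route as the paper, which offers no detailed argument but simply declares the lemma ``an immediate consequence of Lemma 2.1, (2.6) and (5.1)'' --- exactly the three ingredients you combine. Your explicit observation that $\textbf{\textit{l}}$ and $\textbf{\textit{r}}$ coincide on partitions of $\mathbf{H}$ (by symmetry of the inner product together with $G=H=\mathbf{H}$), so that reflexivity $\Delta=\textbf{\textit{r}}(\textbf{\textit{l}}(\Delta))$ can be read as $\Delta=\textbf{\textit{l}}(\textbf{\textit{l}}(\Delta))$, is precisely the detail the paper leaves implicit, and you handle it correctly.
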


\setlength{\parindent}{2em}
Now we show that the converse of Lemma 5.1 holds true for $\mathbb{F}$-invariant partitions. We begin with some basic properties. By [21, Remark 1.4], for an $\mathbb{F}$-invariant partition $\Theta$ of $\mathbf{H}$, $\textbf{\textit{l}}(\Theta)$ is again $\mathbb{F}$-invariant and is independent of the choice of the non-trivial additive character $\chi$, and the left generalized Krawtchouk matrix of $(\textbf{\textit{l}}(\Theta),\Theta)$ is independent of the choice of $\chi$ as well.

The following is our first main result of this section.

\setlength{\parindent}{0em}
\begin{theorem}
{Let $\Gamma$ and $\Lambda$ be $\mathbb{F}$-invariant partitions of $\mathbf{H}$. Then, the following three statements are equivalent to each other:

{\bf{(1)}}\,\,$\Lambda$ is finer than $\textbf{\textit{l}}(\Gamma)$;

{\bf{(2)}}\,\,$\{0\}\in\Lambda$, and $(\Lambda,\Gamma)$ admits MacWilliams identity;

{\bf{(3)}}\,\,$\{0\}\in\Lambda$, and for any $1$-dimensional linear codes $C_1,C_2\subseteq \mathbf{H}$ such that $C_1\approx_{\Lambda}C_2$, it holds that ${C_1}^{\bot}\approx_{\Gamma}{C_2}^{\bot}$.

Further assume that $|\Lambda|\leqslant|\Gamma|$. Then, (2) holds true if and only if $\Gamma$ is reflexive and $\Lambda=\textbf{\textit{l}}(\Gamma)$.
}
\end{theorem}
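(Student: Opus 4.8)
\emph{The plan} is to establish the cycle $(1)\Rightarrow(2)\Rightarrow(3)\Rightarrow(1)$ and then read off the concluding biconditional from Lemma 2.1. The implication $(1)\Rightarrow(2)$ is immediate from Lemma 5.1, which already records that $\Lambda$ being finer than $\textbf{\textit{l}}(\Gamma)$ forces $\{0\}\in\Lambda$ and makes $(\Lambda,\Gamma)$ admit MacWilliams identity. The implication $(2)\Rightarrow(3)$ requires no work: statement $(3)$ is just statement $(2)$ restricted to the case where $C_1,C_2$ range over $1$-dimensional codes.

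The heart of the argument is $(3)\Rightarrow(1)$, which I obtain by invoking Theorem 2.1. The key construction is the \emph{line partition} $\Delta$ of $\mathbf{H}$ consisting of $\{0\}$ together with all punctured $1$-dimensional subspaces $L-\{0\}$; since every nonzero vector lies on a unique line through the origin, this is a genuine partition. I take $\textbf{S}$ to be the collection of all $1$-dimensional $\mathbb{F}$-subspaces of $\mathbf{H}$: these are non-identity subgroups, all of the common cardinality $|\mathbb{F}|$, so the uniform-cardinality hypothesis of Theorem 2.1 is met, and for each $A=L-\{0\}\in\Delta$ the line $L\in\textbf{S}$ satisfies $L-\{0\}\subseteq A$. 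To check that $\Delta$ is finer than $\textbf{\textit{l}}(\Gamma)$, I use $\mathbb{F}$-invariance: by Lemma 2.1 we have $\{0\}\in\textbf{\textit{l}}(\Gamma)$, and since $\textbf{\textit{l}}(\Gamma)$ is $\mathbb{F}$-invariant (as recalled above via [21, Remark 1.4]), each nonzero block of $\textbf{\textit{l}}(\Gamma)$ is closed under multiplication by $\mathbb{F}-\{0\}$ and hence is a union of punctured lines, placing every member of $\Delta$ inside a member of $\textbf{\textit{l}}(\Gamma)$. The same reasoning, now using $\{0\}\in\Lambda$ from $(3)$ and the $\mathbb{F}$-invariance of $\Lambda$, shows $\Delta$ is finer than $\Lambda$. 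The final hypothesis of Theorem 2.1, that $C\approx_{\Lambda}M\Rightarrow C^{\ddagger}\approx_{\Gamma}M^{\ddagger}$ for all $C,M\in\textbf{S}$, is exactly statement $(3)$ once I identify $C^{\ddagger}$ with the dual code $C^{\bot}$ through (5.1). Theorem 2.1 then delivers that $\Lambda$ is finer than $\textbf{\textit{l}}(\Gamma)$, which is $(1)$.

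For the concluding assertion I add the hypothesis $|\Lambda|\leqslant|\Gamma|$. If $(2)$ holds, the equivalence $(2)\Leftrightarrow(1)$ just proved shows $\Lambda$ is finer than $\textbf{\textit{l}}(\Gamma)$; together with $|\Lambda|\leqslant|\Gamma|$ this is precisely condition (4) of Lemma 2.1, which that lemma declares equivalent to its condition (2), namely that $\Gamma$ is reflexive and $\Lambda=\textbf{\textit{l}}(\Gamma)$. Conversely, if $\Gamma$ is reflexive and $\Lambda=\textbf{\textit{l}}(\Gamma)$, then trivially $\Lambda$ is finer than $\textbf{\textit{l}}(\Gamma)$, so $(1)$ and hence $(2)$ hold. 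I expect the $(3)\Rightarrow(1)$ step to be the main obstacle: the real content is recognizing that the $1$-dimensional codes form the right collection $\textbf{S}$, constructing the line partition $\Delta$, and verifying the two finer-than relations through $\mathbb{F}$-invariance; once this is in place Theorem 2.1 does the rest and the remaining implications are bookkeeping with the cited lemmas.
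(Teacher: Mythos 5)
Your proof is correct and follows essentially the same route as the paper: Lemma 5.1 for $(1)\Rightarrow(2)$, the trivial restriction for $(2)\Rightarrow(3)$, Theorem 2.1 applied to the collection $\textbf{S}$ of $1$-dimensional codes and the line partition $\Delta$ (using $\mathbb{F}$-invariance for both finer-than checks) for $(3)\Rightarrow(1)$, and Lemma 2.1 for the final biconditional. The only difference is that you spell out the verifications the paper leaves implicit.
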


\begin{proof}
We note that $(1)\Longrightarrow(2)$ follows from Lemma 5.1 and $(2)\Longrightarrow(3)$ is trivial. Now we prove $(3)\Longrightarrow(1)$. Let $\textbf{S}$ denote the set of all the $1$-dimensional linear codes. Then, $\textbf{S}$ is a collection of non-identity subgroups of $\mathbf{H}$ with equal cardinality. Let $\Delta=\{C-\{0\}\mid C\in \textbf{S}\}\cup\{\{0\}\}$. Then, $\Delta$ is a partition of $\mathbf{H}$, and for any $A\in\Delta$ with $A\neq\{0\}$, there exists $C\in\textbf{S}$ such that $C-\{0\}=A$. Since $\Gamma$ is $\mathbb{F}$-invariant, $\textbf{\textit{l}}(\Gamma)$ is again $\mathbb{F}$-invariant and hence $\Delta$ is finer than $\textbf{\textit{l}}(\Gamma)$. Moreover, it follows from $\Lambda$ is $\mathbb{F}$-invariant that $\Delta$ is finer than $\Lambda$. Now we apply Theorem 2.1 and (5.1) and reach the fact that $\Lambda$ is finer than $\textbf{\textit{l}}(\Gamma)$, which further establishes $(3)\Longrightarrow(1)$. The rest is a direct consequence of Lemma 2.1 and the proven part $(1)\Longleftrightarrow(2)$.
\end{proof}

\setlength{\parindent}{2em}
Now we apply Theorem 5.1 to MacWilliams-type equivalence relations proposed in \cite{8} as well as partitions induced by weighted poset metric and combinatorial metric, as detailed in the following three examples.

\begin{example}$($Characterizing MacWilliams-type equivalence relations$)$
{Fix a poset $\mathbf{P}=(\Omega,\preccurlyeq_{\mathbf{P}})$, and consider an equivalence relation $E$ on $\mathcal{I}(\mathbf{P})$. Let $\Gamma$ denote the partition of $\mathbf{H}$ such that for any $\beta,\theta\in\mathbf{H}$, $\beta\sim_{\Gamma}\theta\Longleftrightarrow(\langle\supp(\beta)\rangle_{\mathbf{P}},\langle\supp(\theta)\rangle_{\mathbf{P}})\in E$, and let $\Lambda$ denote the partition of $\mathbf{H}$ such that for any $\alpha,\gamma\in\mathbf{H}$, $\alpha\sim_{\Lambda}\gamma\Longleftrightarrow(\Omega-\langle\supp(\alpha)\rangle_{\mathbf{\overline{P}}},\Omega-\langle\supp(\gamma)\rangle_{\mathbf{\overline{P}}})\in E$. Since $\Gamma$ and $\Lambda$ are $\mathbb{F}$-invariant partitions with $|\Lambda|=|\Gamma|$, Lemma 2.1 and Theorem 5.1 imply that $(\Lambda,\Gamma)$ is mutually dual with respect to $f$ if and only if $(\Lambda,\Gamma)$ admits MacWilliams identity and $\{\Omega\}$ is an equivalence class of $E$, which recovers $(i)\Longleftrightarrow(ii)$ of [8, Theorem 3.3]. In addition, the general MacWilliams identity (2.5) along with (5.1) recovers $(ii)\Longleftrightarrow(iii)$ of [8, Theorem 3.3]. In \cite{8}, $E$ is referred to as a MacWilliams-type equivalence relation if $(\Lambda,\Gamma)$ admits MacWilliams identity.
}
\end{example}

\begin{example}
{Let $\mathbf{P}=(\Omega,\preccurlyeq_{\mathbf{P}})$ be a poset, and fix $\omega:\Omega\longrightarrow\mathbb{R}^{+}$. Since $\mathcal{Q}(\mathbf{H},\mathbf{\overline{P}},\omega)$ and $\mathcal{Q}(\mathbf{H},\mathbf{P},\omega)$ are $\mathbb{F}$-invariant partitions of the same cardinality and containing $\{0\}$, Theorem 5.1 implies that $(\mathcal{Q}(\mathbf{H},\mathbf{\overline{P}},\omega),\mathcal{Q}(\mathbf{H},\mathbf{P},\omega))$ admits MacWilliams identity if and only if $\mathcal{Q}(\mathbf{H},\mathbf{\overline{P}},\omega)=\textbf{\textit{l}}(\mathcal{Q}(\mathbf{H},\mathbf{P},\omega))$. Further assume that $\mathbf{P}$ is hierarchical and $\omega$ is integer-valued. Then, by Theorem 3.2, $(\mathcal{Q}(\mathbf{H},\mathbf{\overline{P}},\omega),\mathcal{Q}(\mathbf{H},\mathbf{P},\omega))$ admits MacWilliams identity if and only if $\mathcal{Q}(\mathbf{H},\mathbf{P},\omega)$ is reflexive, if and only if $(\mathbf{P},\omega)$ satisfies the UDP, and for any $u,v\in\Omega$ such that $\len_{\mathbf{P}}(u)=\len_{\mathbf{P}}(v)$ and $\omega(u)=\omega(v)$, it holds that $k_{u}=k_{v}$. The latter equivalence has also been established in [29, Theorem 7] for labelled-poset-block metric by using different methods.
}
\end{example}

\setlength{\parindent}{0em}
\begin{example}
{Let $T$ be a covering of $\Omega$ such that $(T,\subseteq)$ is an anti-chain. Since $\mathcal{CO}(\mathbf{H},T)$ is an $\mathbb{F}$-invariant partition containing $\{0\}$, Theorems 4.1 and 5.1 imply that the following three statements are equivalent to each other:

{\bf{(1)}}\,\,$(\mathcal{CO}(\mathbf{H},T),\mathcal{CO}(\mathbf{H},T))$ admits MacWilliams identity;

{\bf{(2)}}\,\,$\mathcal{CO}(\mathbf{H},T)=\textbf{\textit{l}}(\mathcal{CO}(\mathbf{H},T))$;

{\bf{(3)}}\,\,$T$ is a partition of $\Omega$, and $\sum_{i\in U}k_{i}=\sum_{j\in V}k_{j}$ for all $U,V\in T$.

In addition, $(1)\Longleftrightarrow(3)$ recovers [39, Theorem 1] if $k_{i}=1$ for all $i\in\Omega$.
}
\end{example}

\setlength{\parindent}{2em}
Now we summarize the relations among reflexivity, the PAMI and the MEP in the following theorem.

\setlength{\parindent}{0em}
\begin{theorem}
Let $\Gamma$ be an $\mathbb{F}$-invariant partition of $\mathbf{H}$ such that $\{0\}\in\Gamma$, and consider the following five statements:

{\bf{(1)}}\,\,$\Gamma$ satisfies the MEP;

{\bf{(2)}}\,\,$\Gamma=\orb(\inv(\Gamma))$;

{\bf{(3)}}\,\,$\Gamma$ is reflexive;

{\bf{(4)}}\,\,$(\Gamma,\textbf{\textit{l}}(\Gamma))$ admits MacWilliams identity;

{\bf{(5)}}\,\,There exists an $\mathbb{F}$-invariant partition $\Lambda$ of $\mathbf{H}$ such that $\{0\}\in\Lambda$ and both $(\Lambda,\Gamma)$ and $(\Gamma,\Lambda)$ admit MacWilliams identity.

Then, it holds true that $(1)\Longrightarrow(2)$, $(2)\Longrightarrow(3)$ and $(3)\Longleftrightarrow(4)\Longleftrightarrow(5)$.
\end{theorem}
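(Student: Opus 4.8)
The plan is to dispatch the three equivalences $(3)\Longleftrightarrow(4)\Longleftrightarrow(5)$ by a short cycle built on Lemma 5.1 and Theorem 5.1, and then to spend the real effort on the two descent implications $(1)\Longrightarrow(2)$ and $(2)\Longrightarrow(3)$.

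For the cycle I would argue $(3)\Longrightarrow(4)\Longrightarrow(5)\Longrightarrow(3)$. First, $(3)\Longrightarrow(4)$ is immediate from Lemma 5.1, since a reflexive $\Gamma$ makes both $(\textbf{\textit{l}}(\Gamma),\Gamma)$ and $(\Gamma,\textbf{\textit{l}}(\Gamma))$ admit MacWilliams identity, and the latter is exactly (4). For $(4)\Longrightarrow(5)$ I take $\Lambda=\textbf{\textit{l}}(\Gamma)$, which is $\mathbb{F}$-invariant and contains $\{0\}$ by the remark preceding Theorem 5.1 and by Lemma 2.1; then $(\textbf{\textit{l}}(\Gamma),\Gamma)$ admits MacWilliams identity for free via Lemma 5.1 (a partition is finer than itself), while $(\Gamma,\textbf{\textit{l}}(\Gamma))$ admits it by (4). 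Finally $(5)\Longrightarrow(3)$ is a counting argument: applying the implication $(2)\Longrightarrow(1)$ of Theorem 5.1 to the pair $(\Lambda,\Gamma)$ shows $\Lambda$ is finer than $\textbf{\textit{l}}(\Gamma)$, and to the pair $(\Gamma,\Lambda)$ shows $\Gamma$ is finer than $\textbf{\textit{l}}(\Lambda)$; combining the resulting cardinality bounds with Lemma 2.1 gives the chain $|\Gamma|\leqslant|\textbf{\textit{l}}(\Gamma)|\leqslant|\Lambda|\leqslant|\textbf{\textit{l}}(\Lambda)|\leqslant|\Gamma|$, forcing $|\Gamma|=|\textbf{\textit{l}}(\Gamma)|$, i.e.\ reflexivity by Lemma 2.1.

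For $(1)\Longrightarrow(2)$ I would note that $\orb(\inv(\Gamma))$ is always finer than $\Gamma$, since each $\inv(\Gamma)$-orbit lies in a single block by the very definition of $\inv(\Gamma)$; hence only the reverse refinement needs proof, namely that $\alpha\sim_{\Gamma}\beta$ forces $\varphi(\alpha)=\beta$ for some $\varphi\in\inv(\Gamma)$. If $\alpha=0$ then $\beta=0$ because $\{0\}\in\Gamma$, and $\varphi=\id$ works. If $\alpha\neq0$ I set $C=\mathbb{F}\alpha$ and define the injective $g\in\Hom_{\mathbb{F}}(C,\mathbf{H})$ by $g(c\alpha)=c\beta$; here $\mathbb{F}$-invariance of $\Gamma$ is precisely what guarantees $c\alpha\sim_{\Gamma}c\beta$ for every $c$, so $g$ satisfies the hypotheses of the MEP, and the MEP supplies $\varphi\in\inv(\Gamma)$ with $\varphi\mid_{C}=g$, whence $\varphi(\alpha)=\beta$. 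This yields $\Gamma=\orb(\inv(\Gamma))$.

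The main obstacle is $(2)\Longrightarrow(3)$: showing that an orbit partition $\Gamma=\orb(K)$, with $K=\inv(\Gamma)\leqslant\Aut_{\mathbb{F}}(\mathbf{H})$, is reflexive. My plan is to use the adjoint with respect to $\langle\,,\,\rangle$: for $\sigma\in\Aut_{\mathbb{F}}(\mathbf{H})$ let $\sigma^{\top}$ be defined by $\langle\sigma(\alpha),\beta\rangle=\langle\alpha,\sigma^{\top}(\beta)\rangle$, and set $K^{*}=\{(\sigma^{-1})^{\top}\mid\sigma\in K\}$, which is a subgroup because $\sigma\mapsto(\sigma^{-1})^{\top}$ is an involutive automorphism of $\Aut_{\mathbb{F}}(\mathbf{H})$. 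The key computation is that if $c=(\sigma^{-1})^{\top}(a)$ then, for every block $B=K\cdot b_{0}$ of $\orb(K)$, one has $\sum_{b\in B}f(c,b)=\sum_{b\in B}\chi(\langle a,\sigma^{-1}(b)\rangle)=\sum_{b\in B}f(a,b)$, since $\sigma^{-1}$ permutes the $K$-orbit $B$; thus $\orb(K^{*})$ is finer than $\textbf{\textit{l}}(\orb(K))$. Applying the same statement to $K^{*}$ (using $(K^{*})^{*}=K$) shows $\orb(K)$ is finer than $\textbf{\textit{l}}(\orb(K^{*}))$. Feeding these two refinements into the inequalities $|\orb(K)|\leqslant|\textbf{\textit{l}}(\orb(K))|$ and $|\orb(K^{*})|\leqslant|\textbf{\textit{l}}(\orb(K^{*}))|$ of Lemma 2.1 squeezes all four cardinalities together, so $|\orb(K)|=|\textbf{\textit{l}}(\orb(K))|$ and $\orb(K)$ is reflexive by Lemma 2.1. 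The delicate points are checking that $\sigma^{-1}$ genuinely permutes each orbit $B$ and that the adjoint identity $((\sigma^{-1})^{\top})^{\top}=\sigma^{-1}$ is used correctly; pleasantly, only the easy refinement direction is needed, and the cardinality squeeze supplies reflexivity without a direct proof of the harder converse.
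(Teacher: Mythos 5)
Your proof is correct, and on most legs it matches the paper: your $(1)\Longrightarrow(2)$ (extending $g(c\alpha)=c\beta$ over $C=\mathbb{F}\alpha$ via the MEP, using $\mathbb{F}$-invariance to check the hypotheses) is the paper's argument essentially verbatim, and your handling of $(3)$, $(4)$, $(5)$ uses the same two ingredients (Lemma 5.1 and Theorem 5.1), merely arranged as a cycle $(3)\Rightarrow(4)\Rightarrow(5)\Rightarrow(3)$ with a cardinality squeeze, instead of the paper's hub-and-spoke deductions $(3)\Rightarrow((4)\wedge(5))$, $(4)\Rightarrow(3)$, $(5)\Rightarrow(3)$ via mutual duality; both routes are equally valid. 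The genuine divergence is $(2)\Longrightarrow(3)$: the paper disposes of it with a one-line citation to [18, Theorem 2.6] (reflexivity of orbit partitions), whereas you reprove that fact in the present setting, and your argument is sound. Since $\langle\,,\,\rangle$ is symmetric and non-degenerate, $(\sigma^{\top})^{\top}=\sigma$ holds, so $K^{*}=\{(\sigma^{-1})^{\top}\mid\sigma\in K\}$ is a subgroup with $(K^{*})^{*}=K$; the identity $\langle(\sigma^{-1})^{\top}(a),b\rangle=\langle a,\sigma^{-1}(b)\rangle$ together with the fact that $\sigma^{-1}\in K$ fixes each $K$-orbit setwise yields that $\orb(K^{*})$ is finer than $\textbf{\textit{l}}(\orb(K))$, symmetrically $\orb(K)$ is finer than $\textbf{\textit{l}}(\orb(K^{*}))$, and the chain $|\orb(K)|\leqslant|\textbf{\textit{l}}(\orb(K))|\leqslant|\orb(K^{*})|\leqslant|\textbf{\textit{l}}(\orb(K^{*}))|\leqslant|\orb(K)|$ forces $|\orb(K)|=|\textbf{\textit{l}}(\orb(K))|$, i.e. reflexivity by Lemma 2.1. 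What your route buys is self-containedness: it is in effect a concrete proof of the cited theorem of Gluesing-Luerssen, available here because the character-theoretic dual can be realized inside $\mathbf{H}$ itself via the symmetric form. What the paper's citation buys is brevity and generality, since [18, Theorem 2.6] covers orbit partitions over arbitrary finite abelian groups, where no bilinear form, hence no adjoint, exists. One small point you should make explicit in $(1)\Rightarrow(2)$: $\beta\neq0$ (forced by $\{0\}\in\Gamma$ and $\alpha\neq0$), since otherwise $g$ would fail the injectivity required in the MEP hypothesis.
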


\begin{proof}
We begin by noting that $\{0\}\in\textbf{\textit{l}}(\Gamma)$ and $\textbf{\textit{l}}(\Gamma)$ is $\mathbb{F}$-invariant. Now $(2)\Longrightarrow(3)$ follows from [18, Theorem 2.6] and $(3)\Longrightarrow((4)\wedge(5))$ follows from Lemma 5.1. Next, we prove $(4)\Longrightarrow(3)$ and $(5)\Longrightarrow(3)$. If (4) holds true, then Theorem 5.1 implies that $\Gamma$ is finer than $\textbf{\textit{l}}(\textbf{\textit{l}}(\Gamma))$, which, along with Lemma 2.1, further implies that $\Gamma$ is reflexive, as desired. If (5) holds true, then Theorem 5.1 implies that $(\Lambda,\Gamma)$ is mutually dual with respect to $f$, which, along with Lemma 2.1, further implies that $\Gamma$ is reflexive, as desired. Hence it remains to establish $(1)\Longrightarrow(2)$. From now on, we assume that $\Gamma$ satisfies the MEP. By definition, it can be readily verified that $\orb(\inv(\Gamma))$ is finer than $\Gamma$. Next, we will show that $\Gamma$ is finer than $\orb(\inv(\Gamma))$, which immediately implies that $\Gamma=\orb(\inv(\Gamma))$. Consider $\alpha,\beta\in \mathbf{H}$ such that $\alpha\sim_{\Gamma}\beta$. If $\alpha=0$, then by $\{0\}\in\Gamma$, we have $\beta=\alpha=0$, and hence $\alpha\sim_{\orb(\inv(\Gamma))}\beta$, as desired. Therefore in what follows, we assume that $\alpha\neq0$. Then, by $\{0\}\in\Gamma$, we have $\beta\neq0$. Define $g\in\Hom_{\mathbb{F}}(\mathbb{F}\cdot \alpha,\mathbf{H})$ as $g(\alpha)=\beta$. Apparently, $g$ is injective. Moreover, it follows from $\Gamma$ is $\mathbb{F}$-invariant that $\gamma\sim_{\Gamma}g(\gamma)$ for all $\gamma\in\mathbb{F}\cdot \alpha$. Since $\Gamma$ satisfies the MEP, we can choose $\varphi\in\inv(\Gamma)$ such that $\varphi\mid_{\mathbb{F}\cdot \alpha}=g$. It follows that $\varphi(\alpha)=g(\alpha)=\beta$, which immediately implies that $\alpha\sim_{\orb(\inv(\Gamma))}\beta$, as desired. By now, we have shown that $\Gamma=\orb(\inv(\Gamma))$, which further establishes (2).
\end{proof}

\setlength{\parindent}{0em}
\begin{remark}
In Theorem 5.2, $(2)\Longrightarrow(1)$ does not hold in general. We refer the reader to [1, Example 2.9], [17, Example 1.12] and [21, Section 8] for counter-examples arising from partitions of matrix spaces and rank metric codes.
\end{remark}

\setlength{\parindent}{2em}
Finally, we disprove Conjecture 2.1. The following theorem is a direct consequence of Proposition 4.3, Theorem 4.2 and Theorem 5.2, in which we collect all the counter-examples to Conjecture 2.1 that we have obtained.

\setlength{\parindent}{0em}
\begin{theorem}
{{\bf{(1)}}\,\,Suppose that $|\mathbb{F}|=2$. Then, for any $n\in\mathbb{Z}^{+}$, $n\geqslant5$, $\mathcal{CO}(\mathbb{F}^{n},\mathcal{P}(3,[1,n]))$ does not satisfy the MEP. Moreover, fix $k\in\mathbb{Z}^{+}$, $k\geqslant4$. Then, there exists $m\in\mathbb{Z}^{+}$ such that for any $n\in\mathbb{Z}^{+}$ with $n\geqslant m$, $n\geqslant k+1$, $\mathcal{CO}(\mathbb{F}^{n},\mathcal{P}(k,[1,n]))$ does not satisfy the MEP. In addition, for any $n\in[k+2,2k]$, $\mathcal{CO}(\mathbb{F}^{n},\mathcal{P}(k,[1,n]))$ does not satisfy the MEP. Further assume that $2\nmid k$. Then, for any $n\in[\max\{k+2,7\},5k]$, $\mathcal{CO}(\mathbb{F}^{n},\mathcal{P}(k,[1,n]))$ does not satisfy the MEP.

{\bf{(2)}}\,\,Suppose that $|\mathbb{F}|\geqslant3$. Then, for any $n\in\mathbb{Z}^{+}$, $n\geqslant 3$, $\mathcal{CO}(\mathbb{F}^{n},\mathcal{P}(2,[1,n]))$ does not satisfy the MEP. Moreover, fix $k\in\mathbb{Z}^{+}$, $k\geqslant3$. Then, there exists $m\in\mathbb{Z}^{+}$ such that for any $n\in\mathbb{Z}^{+}$ with $n\geqslant m$, $n\geqslant k+1$, $\mathcal{CO}(\mathbb{F}^{n},\mathcal{P}(k,[1,n]))$ does not satisfy the MEP. In addition, $\mathcal{CO}(\mathbb{F}^{k+2},\mathcal{P}(k,[1,k+2]))$ does not satisfy the MEP, and for any $n\in\mathbb{Z}^{+}$ such that $n\geqslant k+1$, $n\equiv1~(\bmod~k)$, $\mathcal{CO}(\mathbb{F}^{n},\mathcal{P}(k,[1,n]))$ does not satisfy the MEP.
}
\end{theorem}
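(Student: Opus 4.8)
The plan is to deduce every claim from the contrapositive of the chain $(1)\Longrightarrow(2)\Longrightarrow(3)$ established in Theorem 5.2, namely that a partition satisfying the MEP is necessarily reflexive; it therefore suffices to prove that each partition listed is \emph{non-reflexive}. Before invoking Theorem 5.2, I would first check its hypotheses for $\Gamma=\mathcal{CO}(\mathbb{F}^{n},\mathcal{P}(k,[1,n]))$: since $\wt_{\mathcal{P}(k,[1,n])}(\beta)$ depends only on $\supp(\beta)$ by (2.12), and $\supp(c\cdot\beta)=\supp(\beta)$ for every $c\in\mathbb{F}-\{0\}$, the partition $\Gamma$ is $\mathbb{F}$-invariant; moreover the zero codeword is the unique element of weight $0$, so $\{0\}\in\Gamma$. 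Hence Theorem 5.2 applies and the reduction ``non-reflexive $\Longrightarrow$ not MEP'' is valid throughout.

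Next I would set up the translation between the two parametrizations. Writing $\mathbf{H}=\mathbb{F}^{n}=\prod_{i\in[1,n]}\mathbb{F}$ with $\Omega=[1,n]$ and $h_{i}=|\mathbb{F}|=q$ for all $i$, the partition $\mathcal{CO}(\mathbb{F}^{n},\mathcal{P}(k,[1,n]))$ is exactly the object studied in Section 4.2 with $|\Omega|=n$. Consequently every non-reflexivity statement proven there in terms of $|\Omega|$ and $k$ transfers verbatim once we substitute $|\Omega|=n$. The remainder of the proof is then a case-by-case application, matching each range in the present theorem to a clause of Proposition 4.3 or Theorem 4.2.

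For part (1), where $q=2$: the claim for $\mathcal{P}(3,[1,n])$ with $n\geqslant5$ is (4) of Theorem 4.2; the existence of a threshold $m$ for fixed $k\geqslant4$ follows from (1) of Theorem 4.2 (valid since $(k,q)=(k,2)\neq(2,2)$); the range $n\in[k+2,2k]$ corresponds to $k\in[\lceil n/2\rceil,n-2]$ in (3) of Proposition 4.3; and the odd-$k$ range $n\in[\max\{k+2,7\},5k]$ corresponds to $k\in[\lceil n/5\rceil,n-2]$ with $2\nmid k$ in (4) of Proposition 4.3. For part (2), where $q\geqslant3$: the claim for $\mathcal{P}(2,[1,n])$ with $n\geqslant3$ is (2) of Theorem 4.2; the threshold statement for $k\geqslant3$ is again (1) of Theorem 4.2; the single case $n=k+2$ is (2) of Proposition 4.3 applied with $|\Omega|=k+2$ (so that $k=|\Omega|-2$); and the congruence family $n\equiv1\ (\mathrm{mod}\ k)$, $n\geqslant k+1$, is (1) of Proposition 4.3 with $|\Omega|=n$.

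I expect the only delicate point to be the bookkeeping in the last paragraph: one must verify that the ceiling-based bounds $k\geqslant\lceil n/2\rceil$ and $k\geqslant\lceil n/5\rceil$ are equivalent, for integer $k$, to the cleaner forms $n\leqslant2k$ and $n\leqslant5k$, and that the side-conditions on $|\Omega|$ in the cited results (for instance $|\Omega|\geqslant5$ in (3) of Proposition 4.3 and $|\Omega|\geqslant7$ in (4) of Proposition 4.3) are automatically met on the stated ranges. These are elementary inequalities, so no genuine obstacle arises; the substance of the theorem lies entirely in the non-reflexivity results of Section 4.2 together with the MEP-reflexivity implication of Theorem 5.2.
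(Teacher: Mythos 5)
Your proposal is correct and matches the paper's own argument exactly: the paper proves Theorem 5.3 as a direct consequence of Theorem 5.2 (MEP $\Rightarrow$ reflexive for $\mathbb{F}$-invariant partitions containing $\{0\}$) combined with the non-reflexivity results of Proposition 4.3 and Theorem 4.2. Your case-by-case matching of the stated ranges to clauses of those results, including the translations $n\leqslant 2k\Leftrightarrow k\geqslant\lceil n/2\rceil$ and $n\leqslant 5k\Leftrightarrow k\geqslant\lceil n/5\rceil$, is precisely the bookkeeping the paper leaves implicit.
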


\begin{remark}
Fix $k\in\mathbb{Z}^{+}$, $k\geqslant3$, and consider the set
$$Q\triangleq\{n\in\mathbb{Z}^{+}\mid \text{$n\geqslant k+1$, $\mathcal{CO}({\mathbb{F}_2}^{n},\mathcal{P}(k,[1,n]))$ satisfies the MEP}\}.$$
By (1) of Theorem 5.3, $Q$ is a finite set. Moreover, if $k=3$, then $Q\subseteq\{4\}$; if $k\geqslant4$, then $Q\cap[k+2,2k]=\emptyset$; and if $2\nmid k$, then $Q\cap[\max\{k+2,7\},5k]=\emptyset$. In addition, (2) of Theorem 5.3 shows that similar conditions remain valid if $\mathbb{F}_2$ is replaced by a non-binary finite field.
\end{remark}

\section*{Appendix}\appendix

\section{Proof of Proposition A.1}

\setlength{\parindent}{2em}
In this appendix, we prove the following Proposition A.1, which has been used in the proof of Proposition 3.3.

\setlength{\parindent}{0em}
\begin{proposition}
{\bf{(1)}}\,\,Let $I$, $J$ be finite sets, $(n_{i}\mid i\in I)$, $(m_{j}\mid j\in J)$ be two families of positive integers, and $(a_{i}\mid i\in I)$, $(b_{j}\mid j\in J)$ be two families of positive real numbers. Assume that one of the following two equations holds:
\begin{equation}\footnotesize\prod_{i\in I}(x^{n_{i}}+a_{i})=\prod_{j\in J}(x^{m_{j}}+b_{j});\end{equation}
\begin{equation}\footnotesize\left(\prod_{i\in I}(x^{n_{i}}-a_{i})\right)\left(\prod_{j\in J}(x^{m_{j}}+b_{j})\right)=\left(\prod_{i\in I}(x^{n_{i}}+a_{i})\right)\left(\prod_{j\in J}(x^{m_{j}}-b_{j})\right).\end{equation}
Then, there exists a bijection $\sigma:I\longrightarrow J$ such that for any $i\in I$, $n_{i}=m_{\sigma(i)}$, $a_{i}=b_{\sigma(i)}$.

{\bf{(2)}}\,\,Let $Y$ be a finite set, $(n_{i}\mid i\in Y)$ be a family of positive integers, and $(a_{i}\mid i\in Y)$ be a family of positive real numbers. Fix $C,D\subseteq Y$ such that
$$\left(\prod_{i\in D}(x^{n_{i}}-1)\right)\left(\prod_{i\in Y-D}(x^{n_{i}}+a_{i})\right)=\left(\prod_{i\in C}(x^{n_{i}}-1)\right)\left(\prod_{i\in Y-C}(x^{n_{i}}+a_{i})\right).$$
Then, there exists a bijection $\sigma:C\longrightarrow D$ such that for any $i\in C$, $n_{i}=n_{\sigma(i)}$, $a_{i}=a_{\sigma(i)}$.
\end{proposition}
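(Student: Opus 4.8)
The plan is to reduce all three assertions to a single uniqueness principle for factorizations of the shape $\prod(x^{n}+a)$ and to prove that principle by working in $\mathbb{C}[x]$ and extracting one factor at a time from the multiset of complex roots. Observe first that part (2) is a formal consequence of part (1): in the displayed identity of part (2) every index $i\in C\cap D$ contributes the same factor $x^{n_i}-1$ to both sides, and every index $i\in Y-(C\cup D)$ contributes the same factor $x^{n_i}+a_i$ to both sides, so after cancelling these the identity becomes $\prod_{i\in D-C}(x^{n_i}-1)\prod_{i\in C-D}(x^{n_i}+a_i)=\prod_{i\in C-D}(x^{n_i}-1)\prod_{i\in D-C}(x^{n_i}+a_i)$, which is an instance of the equation treated in part (1); extending the resulting bijection $C-D\to D-C$ by the identity on $C\cap D$ yields the desired $\sigma$. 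Thus the whole proposition rests on the two polynomial identities (A.1) and (A.2).

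For the equation (A.1) I would argue by induction on the common degree. The key point is that every factor $x^{n}+a$ with $a>0$ has all of its roots on the circle of radius $a^{1/n}$, at the arguments $(2k+1)\pi/n$; in particular its root of least positive argument is $a^{1/n}e^{\pi\sqrt{-1}/n}$, with argument exactly $\pi/n$. Writing $P$ for the common polynomial, let $\rho$ be the largest modulus of a root of $P$ and, among the roots of modulus $\rho$, let $\zeta$ be one of least positive argument $\theta$. Reading off $\zeta$ from the left factorization, $\zeta$ is a root of some factor $x^{n_i}+a_i$ with $a_i^{1/n_i}=\rho$; maximality of $\rho$ forces $n_i$ to be maximal among such factors, and then $\theta=\pi/n_i$ pins down the exponent, so this factor is exactly $x^{N}+\rho^{N}$ with $N=\pi/\theta$. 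The same extremal reading applied to the right factorization produces a factor $x^{m_j}+b_j$ equal to $x^{N}+\rho^{N}$. Since $x^{N}+\rho^{N}$ is literally a factor on each side, I cancel it from both products and apply the induction hypothesis to the quotient, which has strictly smaller degree; the base case is the empty product. This delivers the bijection for (A.1).

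The equation (A.2) is the substantive case and the main obstacle. Here the two factorizations of the common polynomial mix factors $x^{n}-a$ (which, for $a>0$, carry a genuine positive real root $a^{1/n}$) with factors $x^{n}+a$ (which are strictly positive on the positive real axis). The first step is to separate the two types: restricting the identity to $x>0$ shows that the positive real zeros of the left side are precisely the numbers $a_i^{1/n_i}$ and those of the right side are precisely the $b_j^{1/m_j}$, each arising simply, so these multisets coincide and in particular $|I|=|J|$. This matches the minus-type factors at the level of their values $a^{1/n}$, but not yet at the level of the exponents, and it is recovering the exponents—and simultaneously matching the plus-type factors—that is delicate. I would handle this by an extremal-root extraction parallel to that of (A.1), now carried out on each circle $|z|=\rho$ separately: on such a circle the minus-type factors contribute the arguments $2k\pi/n$ (including $0$) while the plus-type factors contribute the odd multiples of $\pi/m$, and reading off the root of least nonnegative argument should identify a single outermost factor together with its sign and its exponent, after which I cancel and induct. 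The bookkeeping needed to certify that the least-argument root on a given circle comes from a uniquely determined factor—comparing $2\pi/n$ against $\pi/m$ across the two types—is where the real work lies; once it is in place, the induction closes exactly as in (A.1) and yields the bijection $\sigma$ with $n_i=m_{\sigma(i)}$ and $a_i=b_{\sigma(i)}$.
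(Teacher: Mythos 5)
Your treatment of (A.1) is essentially correct, and it is the same kind of extract-one-factor-and-induct argument as the paper's (the paper extremizes the exponent and uses the root $a_{\gamma}^{1/n_{\gamma}}\lambda$ with $\lambda$ a primitive $2n_{\gamma}$-th root of unity, where you extremize the modulus and then the argument). The rest of the proposal has two genuine gaps. First, your reduction of part (2) to part (1) fails. After the (correct) cancellation the identity reads
\begin{equation*}
\prod_{i\in D-C}(x^{n_i}-1)\prod_{i\in C-D}(x^{n_i}+a_i)=\prod_{i\in C-D}(x^{n_i}-1)\prod_{i\in D-C}(x^{n_i}+a_i),
\end{equation*}
and this is \emph{not} an instance of the equations treated in part (1): it is not of type (A.1) because minus-type factors occur, and it is not of type (A.2) because in (A.2) each index carries the \emph{same} constant $a_i$ in its minus-factor on one side and its plus-factor on the other, whereas here an index $i\in D-C$ contributes $x^{n_i}-1$ to the left but $x^{n_i}+a_i$ to the right, and $a_i\neq1$ in general. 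Consistently with this, a formal appeal to part (1) could never yield the required conclusion $a_i=a_{\sigma(i)}$, since the constants $a_i$ with $i\in(C-D)\cup(D-C)$ occur only in plus-factors on opposite sides and are never paired by (A.1) or (A.2). This missing step is precisely what the paper's proof of (2) supplies: it first splits off the factors $x^{n_i}+a_i$ with $a_i\neq1$ (their roots lie off the unit circle, so they are coprime to every $x^{n_j}\pm1$), applies part (1) to that piece via an equation of type (A.1), and only after cancelling the resulting matched factors $x^{n_i}-1$ does the remaining identity become a genuine instance of (A.2), with all constants equal to $1$, to which part (1) is applied a second time.

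Second, your argument for (A.2) is, by your own admission, a plan rather than a proof: you defer ``the bookkeeping,'' but that bookkeeping is the entire content of this case. Concretely, on the extremal circle $|z|=\rho$ one must (i) show that \emph{both} sides carry a minus-type factor on that circle (the root $\rho$ of argument $0$ lies in the common root set or in neither), and (ii) equate least positive arguments, $\min(2\pi/N_I,\pi/M_J)=\min(\pi/N_I,2\pi/M_J)$, where $N_I$ and $M_J$ are the largest exponents of the minus- resp.\ plus-type factors on that circle, and deduce $N_I=M_J$; only then can the matched pair $(x^{N}-\rho^{N})(x^{N}+\rho^{N})$ be cancelled from both sides of (A.2) and the induction proceed. (Also, your claim that the positive real zeros arise ``simply'' is false --- distinct indices can give the same $a_i^{1/n_i}$ --- though the multiset statement you need is true.) This program can be completed, but you have not done it, and it is noticeably heavier than the paper's uniform device: take $n_{\gamma}$ maximal among all exponents on both sides and consider the root $a_{\gamma}^{1/n_{\gamma}}\lambda$ as above; its argument $\pi/n_{\gamma}$ cannot be the argument of a root of any $x^{n_i}-a_i$ with $n_i\leqslant n_{\gamma}$, so this root must come from a plus-factor $x^{m_{\theta}}+b_{\theta}$, and $m_{\theta}\leqslant n_{\gamma}$ then forces $m_{\theta}=n_{\gamma}$ and $b_{\theta}=a_{\gamma}$. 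That single observation disposes of (A.1) and (A.2) simultaneously.
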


\begin{proof}
{\bf{(1)}}\,\,Without loss of generality, we assume that there exists $\gamma\in I$ such that for any $i\in I$ and $j\in J$, $n_{\gamma}\geqslant n_{i}$, $n_{\gamma}\geqslant m_{j}$. Let $\lambda\in\mathbb{C}$ be a $2n_{\gamma}$-th primitive root, and set $c=(a_{\gamma})^{1/n_{\gamma}}$. Then, $c\lambda$ is a root of $x^{n_{\gamma}}+a_{\gamma}$. For any $i\in I$, the facts $n_{\gamma}\geqslant n_{i}$, $a_{i}>0$ imply that $c\lambda$ is not a root of $x^{n_{i}}-a_{i}$. Therefore either (A.1) or (A.2) implies that there exists $\theta\in J$ such that $c\lambda$ is a root of $x^{m_{\theta}}+b_{\theta}$. By $n_{\gamma}\geqslant m_{\theta}$ and $b_{\theta}>0$, we have $m_{\theta}=n_{\gamma}$, $b_{\theta}=a_{\gamma}$. We then deduce that either (A.1) or (A.2) remains valid if we let $I-\{\gamma\}$ and $J-\{\theta\}$ take the place of $I$ and $J$, respectively. Applying an induction argument to $I-\{\gamma\}$ and $J-\{\theta\}$, and noticing that $n_{\gamma}=m_{\theta}$, $a_{\gamma}=b_{\theta}$, we conclude that there exists a bijection $\sigma:I\longrightarrow J$ such that $n_{i}=m_{\sigma(i)}$, $a_{i}=b_{\sigma(i)}$ for all $i\in I$, as desired.

{\bf{(2)}}\,\,Since for any $i,j\in Y$ with $a_{i}\neq1$, $\gcd(x^{n_{i}}+a_{i},x^{n_{j}}\pm1)=1$, we have the following two equations:
\begin{equation}
\hspace*{-1mm}\footnotesize\left(\prod_{i\in D}(x^{n_{i}}-1)\right)\left(\prod_{(i\in C,a_{i}=1)}(x^{n_{i}}+1)\right)=\left(\prod_{i\in C}(x^{n_{i}}-1)\right)\left(\prod_{(i\in D,a_{i}=1)}(x^{n_{i}}+1)\right),
\end{equation}
\begin{equation}\footnotesize\prod_{(i\in C,a_{i}\neq1)}(x^{n_{i}}+a_{i})=\prod_{(i\in D,a_{i}\neq1)}(x^{n_{i}}+a_{i}).\end{equation}
By (1) and (A.4), we can choose a bijection $\tau$ from $\{i\in C\mid a_{i}\neq1\}$ to $\{i\in D\mid a_{i}\neq1\}$ such that $n_{i}=n_{\tau(i)}$, $a_{i}=a_{\tau(i)}$ for all $i\in C$ with $a_{i}\neq1$. It follows that $\prod_{(i\in C,a_{i}\neq1)}(x^{n_{i}}-1)=\prod_{(i\in D,a_{i}\neq1)}(x^{n_{i}}-1)$, which, together with (A.3), further implies that
{\footnotesize\begin{eqnarray*}\left(\prod_{(i\in D,a_{i}=1)}(x^{n_{i}}-1)\right)\left(\prod_{(i\in C,a_{i}=1)}(x^{n_{i}}+1)\right)=\left(\prod_{(i\in C,a_{i}=1)}(x^{n_{i}}-1)\right)\left(\prod_{(i\in D,a_{i}=1)}(x^{n_{i}}+1)\right).\end{eqnarray*}}By (1), we can choose a bijection $\eta$ from $\{i\in C\mid a_{i}=1\}$ to $\{i\in D\mid a_{i}=1\}$ such that $n_{i}=n_{\eta(i)}$ for all $i\in C$ with $a_{i}=1$. Define $\varepsilon:C\longrightarrow D$ as $\varepsilon\mid_{\{i\in C\mid a_{i}\neq1\}}=\tau$ and $\varepsilon\mid_{\{i\in C\mid a_{i}=1\}}=\eta$. It follows that $\varepsilon:C\longrightarrow D$ is a bijection such that $n_{i}=n_{\varepsilon(i)}$, $a_{i}=a_{\varepsilon(i)}$ for all $i\in C$, as desired.
\end{proof}

\end{document}